\theoremstyle{plain}
\newtheorem{theorem}{Theorem}[section]
\newtheorem{lemma}[theorem]{Lemma}
\newtheorem{proposition}[theorem]{Proposition}
\newtheorem{claim}{Claim}
\newtheorem*{claimet}{Claim}
\theoremstyle{definition}
\newtheorem{question}[theorem]{Question}
\newtheorem{remark}[theorem]{Remark}
\newtheorem{conjecture}[theorem]{Conjecture}
\numberwithin{equation}{section}
\DeclareMathOperator{\conv}{conv}
\DeclareMathOperator{\St}{St}
\DeclareMathOperator{\Link}{Link}
\DeclareMathOperator{\dimension}{dim}
\DeclareMathOperator{\degree}{deg}
\newcommand{\bX}{\ensuremath{\mathbf{X}\xspace}}
\newcommand{\bG}{\ensuremath{\mathbf{G}\xspace}}
\newcommand{\bY}{\ensuremath{\mathbf{Y}\xspace}}
\newcommand{\bPi}{\ensuremath{\mathbf{\Pi}\xspace}}
\newcommand{\te}{\ensuremath{\widetilde{e}}\xspace}
\newcommand{\tm}{\ensuremath{\widetilde{m}}\xspace}
\newcommand{\tbetaX}{\ensuremath{\widetilde{\beta X}}\xspace}
\newcommand{\tbetabX}{\ensuremath{\widetilde{\beta \bX}}\xspace}
\newcommand{\tlambda}{\ensuremath{\widetilde{\lambda}}\xspace}
\newcommand{\tnu}{\ensuremath{\widetilde{\nu}}\xspace}
\newcommand{\tildo}{\ensuremath{\widetilde{o}}\xspace}
\newcommand{\tu}{\ensuremath{\widetilde{u}}\xspace}
\newcommand{\tv}{\ensuremath{\widetilde{v}}\xspace}
\newcommand{\tw}{\ensuremath{\widetilde{w}}\xspace}
\newcommand{\ty}{\ensuremath{\widetilde{y}}\xspace}
\newcommand{\tz}{\ensuremath{\widetilde{z}}\xspace}
\newcommand{\tH}{\ensuremath{\widetilde{H}}\xspace}
\newcommand{\tQ}{\ensuremath{\widetilde{Q}}\xspace}
\newcommand{\tV}{\ensuremath{\widetilde{V}}\xspace}
\newcommand{\tW}{\ensuremath{\widetilde{W}}\xspace}
\newcommand{\tX}{\ensuremath{\widetilde{X}}\xspace}
\newcommand{\tY}{\ensuremath{\widetilde{Y}}\xspace}
\newcommand{\tbX}{\ensuremath{\widetilde{\mathbf{X}}\xspace}}
\newcommand{\tbY}{\ensuremath{\widetilde{\mathbf{Y}}\xspace}}
\newcommand{\cD}{\ensuremath{\mathcal{D}\xspace}}
\newcommand{\cE}{\ensuremath{\mathcal{E}\xspace}}
\newcommand{\cF}{\ensuremath{\mathcal{F}\xspace}}
\newcommand{\cH}{\ensuremath{\mathcal{H}\xspace}}
\newcommand{\cM}{\ensuremath{\mathcal{M}\xspace}}
\newcommand{\ZZ}{\mathbb {\mathbb{Z}\xspace}}
\newcommand{\NN}{\mathbb {\mathbb{N}\xspace}}
\begin{document}


%






\title[A Counterexample to Thiagarajan's Conjecture on Regular
  Event Structures]{A Counterexample to Thiagarajan's Conjecture \\ on Regular
  Event Structures$^1$} \thanks{$^1$An extended abstract~\cite{CC-ICALP17} of
  this paper appeared in the proceedings of ICALP 2017.}

\author[J.\ Chalopin]{J\' er\'emie Chalopin}
\address{Laboratoire d'Informatique et Syst\`emes, Aix-Marseille Universit\'e and CNRS
}
\email{jeremie.chalopin@lif.univ-mrs.fr}

\author[V.\ Chepoi]{Victor Chepoi}
\address{Laboratoire d'Informatique et Syst\`emes, Aix-Marseille Universit\'e and CNRS}
\email{victor.chepoi@lif.univ-mrs.fr}

\begin{abstract}  
 We provide a counterexample to a conjecture by Thiagarajan (1996 and
 2002) that regular event structures correspond exactly to event
 structures obtained as unfoldings of finite 1-safe Petri nets.  The
 same counterexample is used to disprove a closely related conjecture
 by Badouel, Darondeau, and Raoult (1999) that domains of regular
 event structures with bounded $\natural$-cliques are recognizable by
 finite trace automata.  Event structures, trace automata, and Petri
 nets are fundamental models in concurrency theory. There exist nice
 interpretations of these structures as combinatorial and geometric
 objects and both conjectures can be reformulated in this
 framework. Namely, from a graph theoretical point of view, the
 domains of prime event structures correspond exactly to median
 graphs; from a geometric point of view, these domains are in
 bijection with CAT(0) cube complexes.

 A necessary condition for both conjectures to be true is that domains
 of regular event structures (with bounded $\natural$-cliques) admit a
 regular nice labeling (which corresponds to a special coloring of the
 hyperplanes of the associated CAT(0) cube complex).  To disprove
 these conjectures, we describe a regular event domain (with bounded
 $\natural$-cliques) that does not admit a regular nice labeling. Our
 counterexample is derived from an example by Wise (1996 and 2007) of
 a nonpositively curved square complex ${\bf X}$ with six squares,
 whose edges are colored in five colors, and whose universal cover
 $\tbX$ is a CAT(0) square complex containing a
 particular plane with an aperiodic tiling.  We prove that other
 counterexamples to Thiagarajan's conjecture arise from aperiodic
 4-way deterministic tile sets of Kari and Papasoglu (1999) and
 Lukkarila (2009).


  On the positive side, using breakthrough results by Agol (2013) and
  Haglund and Wise (2008, 2012) from geometric group theory, we prove
  that Thiagarajan's conjecture is true for regular event structures
  whose domains occur as principal filters of hyperbolic CAT(0) cube
  complexes which are universal covers of finite nonpositively curved
  cube complexes.
\end{abstract}

\keywords{Regular event structures, event domains, trace labelings,
  median graphs, CAT(0) cube complexes, universal covers, virtually
  special cube complexes, aperiodic tilings}

\maketitle

\section{Introduction}

Event structures, introduced by Nielsen, Plotkin, and Winskel
\cite{NiPlWi,Wi,WiNi}, are a widely recognized abstract model of
concurrent computation. An event structure (or more precisely, a prime
event structure or an event structure with binary conflict) is a
partially ordered set of the occurrences of actions, called events,
together with a conflict relation.  The partial order captures the
causal dependency of events. The conflict relation models
incompatibility of events so that two events that are in conflict
cannot simultaneously occur in any state of the
computation. Consequently, two events that are neither ordered nor in
conflict may occur concurrently. More formally, an event structure is
a triple ${\mathcal E}=(E,\le, \#),$ consisting of a set $E$ of
events, and two binary relations $\le$ and $\#$, the causal dependency
$\le$ and the conflict relation $\#$ with the requirement that the
conflict is inherited by the partial order $\le$. The pairs of events
not in $\le\cup \ge \cup ~\#$ define the concurrency relation $\|$.
The domain of an event structure consists of all computation states,
called configurations.  Each computation state is a subset of events
subject to the constraints that no two conflicting events can occur
together in the same computation and if an event occurred in a
computation then all events on which it causally depends have occurred
too. Therefore, the domain of an event structure ${\mathcal E}$ is the
set $\mathcal D({\mathcal E})$ of all finite configurations ordered by
inclusion. An event $e$ is said to be enabled by a configuration $c$
if $e\notin c$ and $c\cup \{ e\}$ is a configuration. The degree of an
event structure $\mathcal E$ is the maximum number of events enabled
by a configuration of $\mathcal E$. The future (or the principal
filter, or the residual) of a configuration $c$ is the set of all
finite configurations $c'$ containing $c$.

Among other things, the importance of event structures stems from the fact that several
fundamental  models of concurrent computation lead to event structures.
Nielsen, Plotkin, and Winskel \cite{NiPlWi} proved that every 1-safe Petri net $N$ unfolds
into  an event structure ${\mathcal E}_N$. Later results of \cite{NiRoThi}
and \cite{WiNi} show in fact that 1-safe Petri nets and event structures represent each
other in a strong sense. In the same vein, Stark \cite{St} established
that the domains of configurations of trace automata are exactly the conflict event domains;
a presentation of domains of  event structures as trace monoids
(Mazurkiewicz traces)  or as asynchronous transition systems was given in \cite{RoTh} and \cite{Be},
respectively.  In both cases, the events of the resulting
event structure are labeled  (in the case of trace monoids and trace automata---by the letters
of a possibly infinite trace alphabet $M=(\Sigma,I)$)
in a such a way that  any two events enabled by the same configuration are labeled differently
(such a labeling is usually called  a nice labeling).

To deal with {\it finite} 1-safe Petri nets, Thiagarajan
\cite{Thi_regular,Thi_conjecture} introduced the notions of regular
event structure and regular trace event structure.  A regular event
structure $\mathcal E$ is an event structure with a finite number of
isomorphism types of futures of configurations and finite degree.  A
regular trace event structure is an event structure $\mathcal E$ whose
events can be nicely labeled by the letters of a finite trace alphabet
$M=(\Sigma,I)$ in a such a way that the labels of any two concurrent
events define a pair of $I$ and there exists only a finite number of
isomorphism types of labeled futures of configurations.
These definitions were motivated by the fact  that the event
structures ${\mathcal E}_N$ arising from {\it finite} 1-safe Petri
nets $N$ are regular: Thiagarajan \cite{Thi_regular}  proved that
event structures of {\it finite} 1-safe Petri nets correspond to regular trace event structures.
This lead Thiagarajan to formulate the following conjecture:

\begin{conjecture}[\!\!\cite{Thi_regular,Thi_conjecture}]\label{conj-thi}
{A prime event structure $\mathcal E$ is isomorphic to the event structure ${\mathcal E}_N$
arising from a finite 1-safe Petri net $N$  if and only if $\mathcal E$ is regular.}
\end{conjecture}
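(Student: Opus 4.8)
The plan is to refute Conjecture~\ref{conj-thi} by producing a single regular prime event structure $\mathcal{E}$ whose domain $\mathcal{D}(\mathcal{E})$ is \emph{not} isomorphic to $\mathcal{D}(\mathcal{E}_N)$ for any finite 1-safe Petri net $N$. The first move is to pass to the combinatorial and geometric picture sketched in the abstract: the domain of a prime event structure is a median graph, equivalently the $1$-skeleton of a CAT(0) cube complex, with events in bijection with the parallelism classes of edges, i.e. with the hyperplanes. Under this dictionary a \emph{nice labeling} of $\mathcal{E}$ becomes a finite coloring of the hyperplanes in which any two hyperplanes crossing a common square (concurrent events enabled together) receive distinct colors, and Thiagarajan's theorem that the event structures of finite 1-safe Petri nets are exactly the regular trace event structures lets us reformulate the conjecture as: \emph{every regular event structure admits a regular nice labeling}, i.e. a nice labeling with only finitely many isomorphism types of labeled futures. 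It therefore suffices to exhibit a regular event domain, of finite degree and with bounded $\natural$-cliques, carrying no regular nice labeling.

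For the construction I would start from Wise's nonpositively curved square complex $\mathbf{X}$ with six squares whose edges are colored in five colors, pass to its universal cover $\widetilde{\mathbf{X}}$ (a CAT(0) square complex, hence a median graph), and realize the candidate domain $\mathcal{D}$ as a principal filter of $\widetilde{\mathbf{X}}$, so that $\mathcal{D} = \mathcal{D}(\mathcal{E})$ for a prime event structure $\mathcal{E}$. Three verifications are essentially formal. First, $\mathcal{E}$ has finite degree because $\widetilde{\mathbf{X}}$ is two-dimensional. Second, $\mathcal{E}$ has only finitely many isomorphism types of futures: since $\widetilde{\mathbf{X}}$ covers the \emph{finite} complex $\mathbf{X}$ and the covering is regular, a ball of any fixed radius around a vertex---and hence the local data that determines the isomorphism type of a future---takes only finitely many shapes, and one checks that this local data indeed pins down the whole future up to isomorphism. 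Third, $\mathcal{D}$ has bounded $\natural$-cliques, again by two-dimensionality.

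The heart of the argument, and the main obstacle, is to show that $\mathcal{D}$ admits no regular nice labeling. I would argue by contradiction: a nice labeling of $\mathcal{E}$ is a finite coloring of the hyperplanes of $\widetilde{\mathbf{X}}$, and restricting it to the convex plane $P \subset \widetilde{\mathbf{X}}$ carrying Wise's aperiodic tiling colors the two grid-line families of $P$; the resulting colors around each unit square of $P$, together with the constraints coming from the six squares of $\mathbf{X}$, force the labeling to encode a tiling of $P$ by Wise's tile set. The key point is that if the labeling were \emph{regular}, then---because finitely many labeled future types make the colored configuration on larger and larger blocks of $P$ recur up to finitely many patterns---a pigeonhole/compactness argument would produce a doubly periodic coloring of $P$, contradicting the aperiodicity of Wise's tiling; carefully setting up and justifying this ``regularity $\Rightarrow$ periodicity'' implication inside the ambient cube complex is exactly where Wise's construction does the work. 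Granting it, $\mathcal{D}$ is a regular event domain not arising from any finite 1-safe Petri net, so Conjecture~\ref{conj-thi} fails; since the same domain has no regular nice labeling, it simultaneously refutes the Badouel--Darondeau--Raoult conjecture, and one can remark that feeding the $4$-way deterministic aperiodic tile sets of Kari--Papasoglu or Lukkarila into the same scheme yields further counterexamples.
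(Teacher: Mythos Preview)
Your overall strategy matches the paper's, but there is a genuine gap at the central step: you take the candidate domain to be a principal filter of $\widetilde{\mathbf X}$ itself. The uncolored support of that filter is simply $T_2\times T_3$ (the product of a rooted binary and a rooted ternary tree), a highly symmetric domain with a \emph{single} isomorphism type of future; it certainly admits a regular nice labeling. The aperiodicity in Wise's example lives entirely in the \emph{colors} on the edges of $\widetilde{\mathbf X}$, and an isomorphism between futures of the uncolored domain is under no obligation to respect those colors. So the sentence ``the constraints coming from the six squares of $\mathbf X$ force the labeling to encode a tiling of $P$ by Wise's tile set'' is where the argument breaks: a nice labeling is constrained only by determinism and concurrency, and carries no a priori relation to the five colors of $\mathbf X$.

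The paper fixes exactly this by inserting a construction you omit: it passes to the first barycentric subdivision of $\mathbf X$ and attaches to each edge-midpoint a ``tip'' (a path of length $1,\ldots,5$ according to the color), obtaining a new finite directed NPC complex $W$. The domain is then a principal filter $\widetilde W_{\widetilde v}$ of the universal cover $\widetilde W$. The tips encode the colors \emph{geometrically}, so any isomorphism between principal filters of $\widetilde W_{\widetilde v}$ must preserve tip lengths and hence the original colors. Only after this encoding does the ``regular nice labeling $\Rightarrow$ two isomorphic labeled futures on $P_c^+$ $\Rightarrow$ equal horizontal words $M_n(k)=M_n(m)$'' chain go through, contradicting Wise's period-doubling Proposition. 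Your pigeonhole/periodicity sketch is morally the same endgame, but it cannot start until the colors have been built into the underlying (unlabeled) cube complex.
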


Badouel, Darondeau, and Raoult \cite{BaDaRa} formulated two similar
conjectures about conflict event domain that are recognizable by
finite trace automata. The first one is equivalent to Conjecture~\ref{conj-thi},
while the second one is formulated in a more general setting with an
extra condition. We formulate their second conjecture in the
particular case of event structures:


\begin{conjecture}[\!\!\cite{BaDaRa}]\label{conj-badara}
{A conflict event domain is recognizable if and only if the  event structure
$\mathcal E$  is regular and has bounded $\natural$-cliques.}
\end{conjecture}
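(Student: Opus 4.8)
Since this conjecture and Conjecture~\ref{conj-thi} are in fact false, the aim is to refute both simultaneously by producing a single regular event structure $\mathcal{E}$ with bounded $\natural$-cliques whose domain $\mathcal{D}(\mathcal{E})$ is neither the unfolding of a finite 1-safe Petri net nor recognizable by a finite trace automaton. The whole plan rests on the dictionary between event structures and geometry: the domain of a prime event structure is exactly a median graph, equivalently the $1$-skeleton of a CAT(0) cube complex, with finite configurations as vertices, concurrent pairs of events spanning the squares, and the degree together with boundedness of the $\natural$-cliques corresponding to bounded dimension of the complex. Regularity of $\mathcal{E}$ amounts, up to finitely many isomorphism types of principal filters, to the complex being a principal filter of the universal cover $\tbX$ of a finite nonpositively curved cube complex $\bX$; a nice labeling of $\mathcal{E}$ corresponds to a suitable proper coloring of the hyperplanes of $\tbX$ (any two hyperplanes dual to upward edges out of a common vertex receive distinct colors), and a \emph{regular} nice labeling to such a coloring that uses finitely many colors and yields only finitely many isomorphism types of colored principal filters.

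The first genuine step is to observe that both conjectures, if true, force every regular event structure with bounded $\natural$-cliques to admit a regular nice labeling. Indeed, by Thiagarajan's theorem the unfolding of a finite 1-safe Petri net is a regular trace event structure, and its labeling by the transitions of the net is a regular nice labeling; the corresponding statement of Badouel, Darondeau, and Raoult produces a regular nice labeling out of a finite trace automaton recognizing the domain. So it is enough to exhibit a regular event domain with bounded $\natural$-cliques that admits \emph{no} regular nice labeling.

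For this I would use Wise's nonpositively curved square complex $\bX$ assembled from six squares, whose edges are colored in five colors so that its universal cover $\tbX$ is a CAT(0) square complex containing a combinatorial plane on which the five colors are forced into an aperiodic tiling. Let $\mathcal{E}$ be the event structure whose domain is a principal filter of the median graph underlying $\tbX$. Then $\mathcal{E}$ is regular since $\bX$ is finite, and has bounded $\natural$-cliques since $\bX$ is two-dimensional. Suppose $\mathcal{E}$ had a regular nice labeling. Restricting it to the distinguished plane of $\tbX$ and using that only finitely many isomorphism types of labeled futures occur, one would obtain a genuinely periodic decoration of that plane refining Wise's five colors and still obeying his local matching rules---contradicting aperiodicity. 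Hence $\mathcal{D}(\mathcal{E})$ admits no regular nice labeling, so it is not recognizable and is not the unfolding of a finite 1-safe Petri net, which disproves Conjectures~\ref{conj-thi} and~\ref{conj-badara}. The same scheme should apply with $\bX$ replaced by square complexes built from the $4$-way deterministic aperiodic tile sets of Kari--Papasoglu and Lukkarila, giving further counterexamples.

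The main obstacle is this last contradiction step. A regular nice labeling is only required to have finitely many isomorphism types of \emph{labeled} principal filters, which is strictly weaker than being invariant under the deck group of $\tbX\to\bX$; one must therefore show that this weaker finiteness still forces honest periodicity along the aperiodic plane, and that an abstract nice labeling, restricted to that plane, is indeed constrained by precisely Wise's five-color matching rules, so that aperiodicity can be invoked. A secondary point is to verify that the chosen principal filter of $\tbX$ really is an event domain---that it carries the correct pointing making its median graph the domain of a prime event structure---and that ``bounded $\natural$-cliques'' in the sense of Conjecture~\ref{conj-badara} corresponds exactly to the two-dimensionality of $\bX$, so that the geometric dictionary of the first paragraph legitimately applies.
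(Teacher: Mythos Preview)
Your overall strategy---find a regular event structure with bounded $\natural$-cliques but no regular nice labeling---is exactly the paper's, but there is a genuine gap in your construction. If you take $\mathcal{E}$ to be the event structure whose domain is a principal filter of the \emph{uncolored} support $\tX$ of $\tbX$, that filter is simply $T_2\times T_3$, the product of a rooted binary and a rooted ternary tree: all $0$-vertices have isomorphic futures, and this domain \emph{does} admit a regular nice labeling (label the horizontal hyperplanes with two letters, the vertical ones with three). The aperiodicity of Wise's coloring is invisible in the bare median graph; your sentence ``refining Wise's five colors and still obeying his local matching rules'' has no mechanism behind it, because nothing forces an abstract nice labeling of $T_2\times T_3$ to respect those colors. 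You flag this yourself as ``the main obstacle'', and it is fatal as stated.

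The paper fixes exactly this by replacing $\bX$ with a new finite NPC complex $W$: take the barycentric subdivision $\beta\bX$ and attach to each edge-midpoint a ``tip'' (a pendant path) whose \emph{length} encodes the color of the original edge. Now the colors are baked into the geometry, so any isomorphism between principal filters of $\tW$ must preserve them (Proposition~\ref{wise-regular}). The contradiction with a putative regular nice labeling then runs through Wise's period-doubling (Proposition~\ref{period-doubling}), not just aperiodicity of one plane: two $0$-vertices on the vertical axis with isomorphic labeled filters would force two distinct horizontal words $M_n(k)\ne M_n(m)$ to match under a filter isomorphism, which fails at the first discrepancy because the tips have different lengths there. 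Finally, your claim that bounded $\natural$-cliques ``corresponds exactly to two-dimensionality'' is not right; the paper's argument (Proposition~\ref{cex-badara}) uses that $\bX$ is a \emph{complete} square complex, so every horizontal and vertical event are concurrent, which kills the $\natural_{(3)}$ case---see Remark~\ref{rem-csc-badara}.
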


In view of previous results, to establish Conjecture~\ref{conj-thi}, it is
necessary for a regular event structure $\mathcal E$ to have a
regular nice labeling with letters from some trace alphabet
$(\Sigma,I)$.  Nielsen and Thiagarajan \cite{NiThi} proved in a
technically involved but very nice combinatorial way that all regular
conflict-free event structures satisfy Conjecture~\ref{conj-thi}.  In a equally
difficult and technical proof, Badouel et al. \cite{BaDaRa} proved
that their conjectures hold for context-free event domains, i.e., for
domains whose underlying graph is a context-free graph sensu M\"uller
and Schupp \cite{MuSch}.  In this paper, we present a counterexample
to Thiagarajan's Conjecture based on a more geometric and
combinatorial view on event structures. We show that our example also
provides a counterexample to Conjecture~\ref{conj-badara} of Badouel et al.

We use the striking bijections between the domains of event
structures, median graphs, and CAT(0) cube complexes.  Median graphs
have many nice properties and admit numerous characterizations. They
have been investigated in several contexts for more than half a
century, and play a central role in metric graph theory; for more
detailed information, the interested reader can consult the surveys
\cite{BaCh_survey, BaHe}. On the other hand, CAT(0) cube complexes are
central objects in geometric group theory
\cite{Sa,Sa_survey,Wi_raags}. They have been characterized in a nice
combinatorial way by Gromov \cite{Gr} as simply connected cube
complexes in which the links of 0-cubes are simplicial flag complexes.
It was proven in \cite{Ch_CAT,Ro} that 1-skeleta of CAT(0) cube
complexes are exactly the median graphs. Barth\'elemy and Constantin
\cite{BaCo} proved that the Hasse diagrams of domains of event structures
are median graphs and every pointed median graph is the domain of an
event structure.  The bijection between pointed median graphs and event
domains established in \cite{BaCo} can be viewed as the classical
characterization of prime event domains as prime algebraic coherent
partial orders provided by Nielsen, Plotkin, and Winskel
\cite{NiPlWi}.
More recently, this result was rediscovered in
\cite{ArOwSu} in the language of CAT(0) cube complexes.
Via these bijections, the events of an event structure $\cE$ correspond to the
parallelism classes of edges of the domain $D(\cE)$ viewed as a median
graph. We recall these bijections in Section \ref{sec-median-CAT(0)}.

Since in our paper we deal only with regular event structures, we need to be able to
construct regular event domains from CAT(0) cube complexes. By Gromov's characterization,
CAT(0) cube complexes are exactly the universal covers of cube complexes satisfying the
link condition, i.e., of nonpositively curved cube (NPC) complexes. Of particular
importance for us are the CAT(0) cube complexes arising as universal covers of {\it finite}
NPC complexes. In Section \ref{sec-directed-npc}, we  adapt the universal cover construction
to directed NPC  complexes $(Y,o)$ and show that every principal filter of the
directed universal cover $(\tY,\tildo)$ is the domain of an event structure. Furthermore, we show that
if the NPC complex $Y$ is finite, then this event structure is regular. Motivated by this result,
we call an event structure {\it strongly regular} if its domain is the principal filter of the
directed universal cover $(\tY,\tildo)$ of a finite directed NPC complex $(Y,o)$.

Our counterexample to Conjectures \ref{conj-thi} and \ref{conj-badara}
is a strongly regular event structure derived from Wise's
\cite{Wi_thesis,Wi_csc} nonpositively curved square complex $\bX$
obtained from a tile set with six tiles. This counterexample is described in Section
\ref{sec-preuve-cex}. In Section \ref{aperiodic} we also prove that other
counterexamples to Thiagarajan's conjecture arise in a similar way
from any aperiodic 4-deterministic tile set, such as the ones
constructed by Kari and Papasoglu \cite{KaPa} and Lukkarila \cite{Lu}.

On the positive side, in Section \ref{thiagu-special} we prove that
Thiagarajan's conjecture is true for event structures whose domains
arise as principal filters of universal covers of finite special cube
complexes.  Haglund and Wise \cite{HaWi1,HaWi2} detected pathologies
which may occur in NPC complexes: self-intersecting hyperplanes,
one-sided hyperplanes, directly self-osculating hyperplanes, and pairs
of hyperplanes, which both intersect and osculate. They called the NPC
complexes without such pathologies \emph{special}.  The main
motivation for introducing and studying special cube complexes was the
profound idea of Wise that the famous virtual Haken conjecture for
hyperbolic 3-manifolds can be reduced to solving problems about
special cube complexes. In a breakthrough result, Agol
\cite{Agol,Agol_ICM} completed this program and solved the virtual
Haken conjecture using the deep theory of special cube complexes
developed by Haglund and Wise~\cite{HaWi1,HaWi2}. The main ingredient
in this proof is Agol's theorem that finite NPC complexes whose
universal covers are hyperbolic are virtually special (i.e., they
admit finite covers which are special cube complexes).  Using this
result of Agol, we can specify our previous result and show that
Thiagarajan's conjecture is true for strongly regular event structures
whose domains occur as principal filters of hyperbolic CAT(0) cube
complexes that are universal covers of finite directed NPC
complexes. Since context-free domains are hyperbolic, this result can
be viewed in some sense as a partial generalization of the result of
Badouel et al. \cite{BaDaRa}.



To conclude this introductory section, we briefly describe the
construction of our counterexample to Thiagarajan's conjecture. It is
based on Wise's \cite{Wi_thesis,Wi_csc} directed nonpositively curved
square complex $\bX$ with one vertex and six squares, whose edges are
colored in five colors, and whose colored universal cover $\tbX$
contains a particular directed plane with an aperiodic tiling.  The
edges of $\bX$ are partitioned into two classes (horizontal and
vertical edges) and opposite edges of squares are oriented in the same
way. As a result, $\tbX$ is a directed CAT(0) square complex whose
edges are colored by the colors of their images in $\bX$ and are
directed in such a way that all edges dual to the same hyperplane are
oriented in the same way. With respect to this orientation, all
vertices of $\tbX$ are equivalent up to automorphism.
We modify the complex $\bX$ by taking its first barycentric
subdivision and by adding to the middles of the edges of $\bX$
directed paths of five different lengths (tips) in order to encode the
colors of the edges of $\bX$ (and $\tbX$) and to obtain a directed
nonpositively curved square complex $W$. The universal cover $\tW$ of
$W$ is a directed (but no longer colored) CAT(0) square complex.
$\tW$ can be viewed as the first barycentric subdivision of the
support $\tX$ of $\tbX$ in which to each vertex arising from a middle
of an edge of $\tbX$ a tip encoding the color of the original edge is
added.  Since $\tW$ is the universal cover of a finite complex $W$,
$\tW$ has a finite number of equivalence classes of vertices up to
automorphism.  From $\tW$ we derive a domain of a regular event
structure $\tW_{\tv}$ by considering the future of an arbitrary vertex
$\tv$ of $\tbX$. Using the fact that $\tbX$ contains a particular
directed plane with an aperiodic tiling, we prove that $\tW_{\tv}$
does not admit a regular nice labeling, thus $\tW_{\tv}$ is the domain
of a regular event structure not having a regular trace labeling.


\section{Event structures}

\subsection{Event structures and domains}
An {\it event structure} is a triple ${\mathcal E}=(E,\le, \#)$, where

\begin{itemize}
\item $E$ is a set of {\it events},
\item $\le\subseteq E\times E$ is a partial order of {\it causal dependency},
\item $\#\subseteq E\times E$ is a binary, irreflexive, symmetric relation of {\it conflict},
\item $\downarrow \!e:=\{ e'\in E: e'\le e\}$ is finite for any $e\in E$,
\item $e\# e'$ and $e'\le e''$ imply $e\# e''$.
\end{itemize}

What we call here an event structure is usually called a {\it coherent
  event structure}, an {\it event structure with a binary conflict},
or a {\it prime event structure}.  Two events $e',e''$ are {\it
  concurrent} (notation $e'\| e''$) if they are order-incomparable and
they are not in conflict.  The conflict $e'\# e''$ between two
elements $e'$ and $e''$ is said to be {\it minimal} (notation
$e'\#_{\mu} e''$) if there is no event $e\ne e',e''$ such that either
$e\le e'$ and $e\# e''$ or $e\le e''$ and $e\# e'$. We say that $e$ is
an \emph{immediate predecessor} of $e'$ (notation $e\lessdot e'$) if
and only if $e\le e', e\ne e'$, and for every $e''$ if $e\le e''\le
e'$, then $e''=e$ or $e''=e'$.

Given two event structures ${\mathcal E}_1=(E_1,\le_1, \#_1)$ and
${\mathcal E}_2=(E_2,\le_2, \#_2)$, a map $f:E_1\rightarrow E_2$ is an
isomorphism if $f$ is a bijection such that $e\le_1 e'$ iff
$f(e)\le_2 f(e')$ and $e\#_1 e'$ iff $f(e)\#_2 f(e')$ for every
$e,e'\in E_1$. If such an isomorphism exists, then ${\mathcal E}_1$
and ${\mathcal E}_2$ are said to be isomorphic; notation ${\mathcal
  E}_1\equiv {\mathcal E}_2$.

A {\it configuration} of an event structure ${\mathcal E}=(E,\le,\#)$ is any finite
subset $c\subset E$ of events which is {\it conflict-free} ($e,e'\in c$
implies that $e,e'$ are not in conflict) and {\it downward-closed} ($e\in c$ and $e'\le e$ implies
that $e'\in c$)  \cite{WiNi}. Notice that $\varnothing$ is always a
configuration and that $\downarrow \!e$ and $\downarrow \!e\setminus \{ e\}$ are configurations for any
$e\in E$.  The {\it domain} of an event structure is the set
$\mathcal D:=\mathcal D({\mathcal E})$ of all configurations of ${\mathcal E}$ ordered by
inclusion; $(c',c)$ is a (directed) edge of the Hasse diagram of
the poset $({\mathcal D}({\mathcal E}),\subseteq)$ if and only if $c=c'\cup \{ e\}$ for
an event $e\in E\setminus c$.   An event $e$ is said to be {\it enabled}
by a configuration $c$ if $e\notin c$ and $c\cup \{ e\}$ is a configuration.
Denote by $en(c)$ the set of all events enabled at the configuration  $c$.
Two events are called {\it co-initial} if they are both enabled at some
configuration $c$. Note that if $e$ and $e'$ are co-initial, then either
$e \#_{\mu} e'$ or $e \| e'$. It is easy to see that two events $e$ and $e'$
are in minimal conflict $e\#_{\mu}e'$ if and only if $e \# e'$ and  $e$
and $e'$ are co-initial.   The {\it degree} $\degree(\mathcal E)$ of an event
structure $\mathcal E$  is the least positive integer $d$ such that $|en(c)|\le d$
for any configuration $c$ of $\mathcal E$. We  say that $\mathcal E$
has {\it finite degree} if $\degree(\mathcal E)$ is finite. The {\it future}
(or the {\it (principal) filter}) ${\mathcal F}(c)$ of a configuration $c$
is the set of all configurations $c'$ containing $c$:
${\mathcal F}(c) = ~\uparrow\! c:=\{ c'\in {\mathcal D}({\mathcal E}): c\subseteq c'\}$, i.e.,
${\mathcal F}(c)$ is the principal filter of $c$ in the
ordered set $({\mathcal D}({\mathcal E}),\subseteq)$.

For an event structure ${\mathcal E}=(E,\le,\#)$, let $\natural$ be
the least irreflexive and symmetric relation on the set of events $E$
such that $e_1\natural e_2$ if (1) $e_1\| e_2$, or (2) $e_1\#_{\mu}
e_2$, or (3) there exists an event $e_3$ that is co-initial with $e_1$
and $e_2$ at two different configurations such that $e_1\| e_3$ and
$e_2\# _{\mu} e_3$ (see Figure~\ref{fig-becarre} for examples). (If
$e_1\natural e_2$ and this comes from condition (3), then we write
$e_1\natural_{(3)} e_2$.) A $\natural$-{\it clique} is any complete
subgraph of the graph whose vertices are the events and whose edges
are the pairs of events $e_1e_2$ such that $e_1\natural e_2$.

\begin{figure}
  \includegraphics[page=6,scale=0.7]{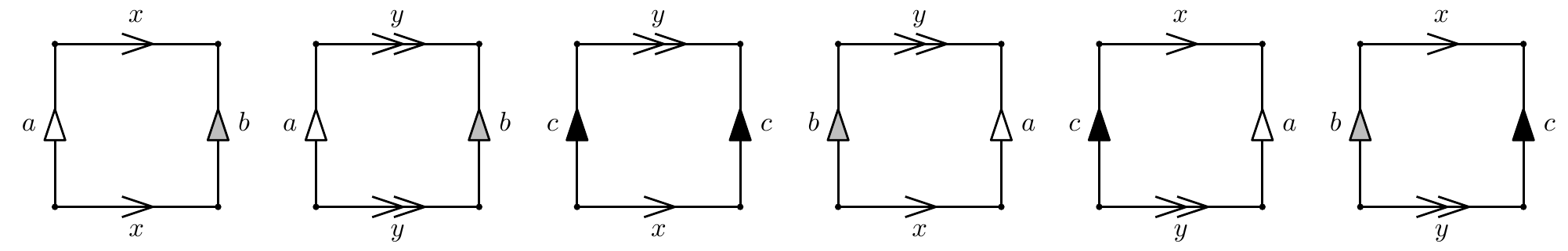}
  \caption{Two examples where $e_1 \natural_{(3)} e_2$: $e_1 \| e_3$ and $e_2
    \#_\mu e_3$}
  \label{fig-becarre}
\end{figure}

A {\it labeled event structure} ${\mathcal E}^{\lambda}=({\mathcal
  E},\lambda)$ is defined by an {\it underlying event structure}
${\mathcal E}=(E,\le, \#)$ and a \emph{labeling} $\lambda$ that is a
map  from $E$ to some alphabet $\Sigma$. Two  labeled
event structures ${\mathcal E}_1^{\lambda_1}=({\mathcal E}_1,\lambda_1)$
and ${\mathcal E}_2^{\lambda_1}=({\mathcal E}_2,\lambda_2)$ are isomorphic
(notation ${\mathcal E}^{\lambda_1}_1\equiv {\mathcal E}^{\lambda_2}_2$) if
there exists an isomorphism $f$ between the underlying event
structures ${\mathcal E}_1$ and ${\mathcal E}_2$ such that
$\lambda_2(f(e_1))=\lambda_1(e_1)$ for every $e_1\in E_1$.

A labeling $\lambda: E\rightarrow \Sigma$ of an event structure
$\mathcal E$ defines naturally a labeling of the directed edges of the
Hasse diagram of its domain $\cD(\cE)$ that we also denote by
$\lambda$. A labeling $\lambda: E\rightarrow \Sigma$ of an event
structure $\mathcal E$ is called a {\it nice labeling} if any two
events that are co-initial have different labels \cite{RoTh}. A nice
labeling of ${\mathcal E}$ can be reformulated as a labeling of the
directed edges of the Hasse diagram of its domain $\mathcal D(\mathcal
E)$) subject to the following local conditions:

\smallskip
\noindent
{\textbf{{Determinism:}}} the  edges outgoing from the same
vertex of ${\mathcal D}({\mathcal E})$ have different labels;

\noindent
{\textbf{{Concurrency:}}} the opposite edges of each square of
${\mathcal D}({\mathcal E})$ are labeled with the same labels.

\smallskip

In the following, we use interchangeably the labeling of an event
structure and the labeling of the edges of its domain.

\subsection{Regular event structures} \label{regular}
In this subsection, we recall the definitions of regular event
structures, regular trace event structures, and regular
nice labelings of event structures. We closely follow the definitions
and notations of \cite{Thi_regular,Thi_conjecture,NiThi}.
Let ${\mathcal E}=(E,\le, \#)$ be an event structure. Let $c$
be a configuration of $\mathcal E$.
Set $\#(c)=\{ e': \exists e\in c, e\# e'\}$. The {\it event structure
rooted at} $c$ is defined to be the triple
$\mathcal E\backslash c=(E',\le',\#')$, where $E'=E\setminus (c\cup \# (c))$, $\le'$
is $\le$ restricted to $E'\times E'$, and $\#'$ is $\#$
restricted to $E'\times E'$. It can be easily seen that the domain
${\mathcal D}({\mathcal E}\backslash c)$ of the event structure
$\mathcal E\backslash c$ is isomorphic to the principal filter
${\mathcal F}(c)$ of $c$ in
${\mathcal D}({\mathcal E})$ such that any configuration $c'$ of
${\mathcal D}({\mathcal E})$ corresponds to the configuration
$c'\setminus c$ of ${\mathcal D}({\mathcal E}\backslash c)$.

For an event structure ${\mathcal E}=(E,\le, \#)$, define
the equivalence relation $R_{\mathcal E}$ on its configurations in the
following way: for two configurations $c$ and $c'$ set $cR_{\mathcal E} c'$
if and only if ${\mathcal E}\backslash c\equiv {\mathcal E}\backslash c'$.
The {\it index} of an event structure $\mathcal E$ is the
number of equivalence classes of $R_{\mathcal E}$, i.e., the
number of isomorphism types of futures of configurations of $\mathcal E$. The event
structure ${\mathcal E}$ is {\it regular} \cite{Thi_regular,Thi_conjecture,NiThi}
if  ${\mathcal E}$ has finite index and finite degree.

Now, let ${\mathcal E}^{\lambda}=(\mathcal E,\lambda)$ be a labeled
event structure. For any configuration $c$ of $\mathcal E$, if we
restrict $\lambda$ to ${\mathcal E}\backslash c$, then we obtain a
labeled event structure $(\mathcal{E}\backslash c,\lambda)$ denoted by
${\mathcal E}^{\lambda}\backslash c$. Analogously, define the
equivalence relation $R_{{\mathcal E}^{\lambda}}$ on its
configurations by setting $c R_{{\mathcal E}^{\lambda}} c'$ if and
only if ${\mathcal E}^{\lambda}\backslash c\equiv {\mathcal
  E}^{\lambda}\backslash c'$. The index of ${\mathcal E}^{\lambda}$ is
the number of equivalence classes of $R_{{\mathcal E}^{\lambda}}$.
We say that an event structure $\mathcal
E$ admits a {\it regular nice labeling} if there exists a nice
labeling $\lambda$ of $\mathcal E$ with a finite alphabet $\Sigma$
such that ${\mathcal E}^{\lambda}$ has finite index.

We continue by recalling the definition of regular trace event
structures from \cite{Thi_regular,Thi_conjecture}. A {\it
  (Mazurkiewicz) trace alphabet} is a pair $M=(\Sigma,I)$, where
$\Sigma$ is a finite non-empty alphabet set and $I\subset \Sigma\times
\Sigma$ is an irreflexive and symmetric relation called the {\it
  independence relation}.  As usual, $\Sigma^*$ is the set of finite
words with letters in $\Sigma$. The independence relation $I$ induces
the equivalence relation $\sim_I$, which is the reflexive and
transitive closure of the binary relation $\leftrightarrow_I$: {\it if
  $\sigma,\sigma'\in \Sigma^*$ and $(a,b)\in I$, then $\sigma
  ab\sigma'\leftrightarrow_I\sigma ba\sigma'$.} The relation $D :=
(\Sigma\times \Sigma)\setminus I$ is called the {\it dependence
  relation}.

An $M$-{\it labeled event structure} is a labeled event structure
${\mathcal E}^{\phi}=(\mathcal E, \lambda)$, where ${\mathcal E}=(E,\le, \#)$
is an event structure and $\lambda: E\rightarrow
\Sigma$ is a labeling function which satisfies the following
conditions:
\begin{enumerate}[{(LES}1)]
\item $e\#_{\mu} e'$ implies $\lambda(e)\ne \lambda (e')$,
\item if $e\lessdot e'$ or $e\#_{\mu} e'$, then $(\lambda(e),\lambda(e'))\in D$,
\item if  $(\lambda(e),\lambda(e'))\in D$, then $e\le e'$ or $e'\le e$ or $e\# e'$.
\end{enumerate}
We  call $\lambda$ a {\it trace labeling} of ${\mathcal E}$ with the
trace alphabet $(\Sigma,I)$. The conditions (LES2) and (LES3)
on the labeling function ensures that the concurrency relation $\|$ of $\mathcal E$ respects
the independence relation $I$ of $M$.  In particular, since $I$ is irreflexive, from (LES3)
it follows that any two concurrent events are labeled differently. Since by (LES1) two events
in minimal conflict are also labeled differently, this implies that $\lambda$ is a finite nice
labeling of $\mathcal E$.

An $M$-labeled event structure ${\mathcal E}^{\lambda}=(\mathcal E, \lambda)$ is {\it regular} if
${{\mathcal E}^{\lambda}}$ has finite index. Finally, an event structure $\mathcal E$ is called a
{\it regular trace event structure} \cite{Thi_regular,Thi_conjecture} iff there exists a trace
alphabet $M=(\Sigma,I)$ and a regular $M$-labeled event structure ${\mathcal E}^{\lambda}$ such
that  $\mathcal E$ is isomorphic to the underlying event structure of ${\mathcal E}^{\lambda}$.
From the definition immediately follows that every regular trace event structure is also a regular
event structure. It turns out that the converse is equivalent to Conjecture~\ref{conj-thi}.
Namely, \cite{Thi_conjecture} establishes the following equivalence (this result dispenses us
from giving a formal definition of 1-safe Petri nets; the interested readers can find it  in
the papers  \cite{Thi_conjecture,NiThi}):

\begin{theorem}[\!\!{\cite[Theorem 1]{Thi_conjecture}}] $\mathcal E$ is a regular trace event structure
if and only if there exists a finite 1-safe Petri net $N$ such that $\mathcal E$ and ${\mathcal E}_N$ are isomorphic.
\end{theorem}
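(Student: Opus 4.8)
Following \cite{Thi_regular}, the two implications are handled separately, the substantive one being that every regular trace event structure is $\mathcal{E}_N$ for some finite $1$-safe net $N$. For the direction from nets, given a finite $1$-safe net $N$ with place set $P$ and transition set $T$, I would put $\Sigma := T$, declare $(t,t') \in I$ precisely when $t$ and $t'$ share no place in their pre- or post-sets, and label each event of the unfolding $\mathcal{E}_N$ of \cite{NiPlWi} by the transition of which it is an occurrence. Conditions (LES1)--(LES3) then follow from the unfolding construction: two co-initial events read from the same cut, so the input conditions of an occurrence of a prescribed transition are forced and co-initial events --- in particular minimally conflicting ones --- are occurrences of distinct transitions, giving (LES1); an immediate causal step $e \lessdot e'$ and a minimal conflict $e \#_\mu e'$ each force a place shared by $\lambda(e)$ and $\lambda(e')$, giving (LES2); and $1$-safeness forbids two tokens on a place, so occurrences of transitions sharing a place are never concurrent, giving (LES3). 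Finally, $\mathcal{E}_N^{\lambda} \backslash c$ is, as a labeled event structure, the unfolding of $N$ with initial marking the marking reached by firing the events of $c$; since $N$ is finite and $1$-safe there are at most $2^{|P|}$ reachable markings, so $\mathcal{E}_N^{\lambda}$ has finite index and $\mathcal{E}_N$ is a regular trace event structure.

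For the converse, let $\mathcal{E}^{\lambda}$ be a regular $M$-labeled event structure with $M = (\Sigma, I)$. I would first form the $\Sigma$-labeled transition system $\mathcal{T}$ whose states are the configurations of $\mathcal{E}$, with a transition $c \xrightarrow{a} c \cup \{e\}$ for each $e \in en(c)$ with $\lambda(e) = a$. By the Determinism and Concurrency conditions together with (LES1)--(LES3), $\mathcal{T}$ is deterministic, its commuting squares are exactly those prescribed by $I$, and the set of words leading to any fixed state is closed under $\sim_I$; thus $\mathcal{T}$ is the ``case graph'' of a Mazurkiewicz trace language. Next I would use the finite index of $\mathcal{E}^{\lambda}$: the map sending a configuration $c$ to the isomorphism type of the residual $\mathcal{E}^{\lambda} \backslash c$ has finite range, and identifying two states exactly when their residuals are isomorphic folds $\mathcal{T}$ onto a \emph{finite} deterministic $\Sigma$-labeled transition system $\mathcal{A}$ that still carries the $I$-squares. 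Finally I would invoke the region-based synthesis of elementary ($1$-safe) net systems from transition systems \cite{NiRoThi}: the $I$-squares of $\mathcal{A}$ supply enough regions to meet the state-separation and event/state-separation axioms, producing a finite $1$-safe net $N$ with transition set $\Sigma$ whose reachable marking graph is $\mathcal{A}$. Unfolding $N$ and using that this unfolding is the ``simply connected'' covering of $\mathcal{A}$ compatible with the concurrency encoded by $I$, one recovers $\mathcal{D}(\mathcal{E})$ together with its edge-labeling; since a labeled event structure is recovered from its domain with the induced labeling (as recalled above), this yields $\mathcal{E}_N \equiv \mathcal{E}$.

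The technical heart, and the step I expect to be the main obstacle, is the end of the converse: one must choose the folding of $\mathcal{T}$ and the regions so that the synthesized net is simultaneously finite, $1$-safe, \emph{and} has $\mathcal{E}$ --- not merely $\mathcal{A}$ --- as its behaviour after unfolding. This is precisely where the trace-labeling axioms (LES1)--(LES3), i.e.\ the compatibility of the concurrency relation $\|$ with the independence relation $I$, must be used in full strength: they are what prevent the synthesis from introducing spurious conflicts between co-initial events, or spurious concurrency between causally or conflict-related ones, when passing from the finite marking graph $\mathcal{A}$ to the infinite unfolding. Making this precise is the content of \cite{Thi_regular}.
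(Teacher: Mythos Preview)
The paper does not prove this theorem at all: it is quoted as \cite[Theorem~1]{Thi_conjecture} and used as a black box. Indeed, the sentence immediately preceding the statement says explicitly that ``this result dispenses us from giving a formal definition of 1-safe Petri nets; the interested readers can find it in the papers \cite{Thi_conjecture,NiThi}.'' So there is no ``paper's own proof'' to compare against; the authors deliberately import the result in order to \emph{avoid} the Petri-net machinery and work instead with the combinatorially cleaner notion of regular trace event structure.

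Your proof plan is a reasonable high-level sketch of Thiagarajan's original argument in \cite{Thi_regular,Thi_conjecture}, and you correctly identify the hard direction and its delicate point (ensuring that the synthesized finite net unfolds back to $\mathcal{E}$ rather than to something that merely has the right quotient). But within the present paper this theorem is purely a citation, so no proof is expected here; if you want to include one, it should be clearly flagged as a summary of Thiagarajan's proof rather than as new content.
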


In view of this theorem, Conjecture~\ref{conj-thi} is equivalent to the following conjecture:

\begin{conjecture}\label{conj-thi-2}
{$\mathcal E$ is a regular event structure if and only if $\mathcal E$ is a regular trace event structure.}
\end{conjecture}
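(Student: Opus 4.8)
The plan is to \emph{disprove} Conjecture~\ref{conj-thi-2}: rather than proving it, I will construct a regular event structure that is not a regular trace event structure. The first step is a reduction. Every regular trace event structure admits a regular nice labeling --- conditions (LES1)--(LES3) already force a trace labeling to be a finite nice labeling, and regularity of the $M$-labeled structure is precisely finiteness of its index. Hence it suffices to produce a single regular event structure $\mathcal E$ whose domain carries \emph{no} regular nice labeling. For this I would work entirely on the side of domains, using the dictionary recalled in the introduction: domains of event structures correspond to pointed median graphs, equivalently to pointed CAT(0) cube complexes, with events corresponding to parallelism classes of edges, equivalently to hyperplanes. Under this dictionary the Determinism and Concurrency conditions say that a nice labeling is a coloring of the hyperplanes, and a \emph{regular} nice labeling is one for which only finitely many isomorphism types of colored principal filters occur.

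Next I would build the candidate domain geometrically. Starting from a finite directed NPC square complex $(Y,o)$, its directed universal cover $(\tY,\tildo)$ is a CAT(0) square complex, and (this is the content of Section~\ref{sec-directed-npc}) every principal filter of $(\tY,\tildo)$ is the domain of a regular event structure --- a \emph{strongly regular} one. The concrete choice begins with Wise's nonpositively curved square complex $\bX$ having one vertex and six squares, whose edges are colored with five colors, chosen precisely so that the CAT(0) universal cover $\tbX$ contains a directed plane $P$ carrying an \emph{aperiodic} tiling. To make the colors intrinsic order-theoretic data rather than extra decoration, I would pass to the first barycentric subdivision of $\bX$ and glue to each edge-midpoint a directed path (a ``tip'') whose length is one of five distinct values encoding that edge's color; this produces a finite directed NPC complex $W$, and its universal cover $\tW$ is a directed (uncolored) CAT(0) square complex from which the original color of any edge of $\tbX$ can be recovered by inspecting the tip at its midpoint. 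The counterexample domain is then $\tW_{\tv} = \mathcal F(\tv)$, the principal filter of a suitably chosen vertex $\tv$; by the above it is the domain of a regular event structure, and since the tips only lengthen the geometry by a bounded amount the associated event structure still has bounded $\natural$-cliques, so the same example will also refute Conjecture~\ref{conj-badara}.

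The heart of the argument is to show that $\tW_{\tv}$ admits no regular nice labeling, and this is where I expect the real difficulty to lie. The strategy is a periodicity (compactness) argument. Suppose $\lambda$ were a nice labeling over a finite alphabet $\Sigma$ with only finitely many isomorphism types of $\lambda$-colored principal filters. The Concurrency condition forces $\lambda$ to be constant on each hyperplane, hence $\lambda$ descends to a coloring of the edges of $\tW$, and in particular of the edges of the aperiodic plane $P\subseteq\tbX$. Since $\tbX$ has only finitely many vertex types up to automorphism, and the colored structure has finite index, there are only finitely many ``colored local pictures''; using the Determinism condition to rule out collisions, one propagates this bounded local data along two independent directions of $P$ and extracts a genuine translation period of the tiling, contradicting aperiodicity. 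The two places where care is required are: checking that the tip gadget faithfully transports the tiling data into the purely order-theoretic future $\tW_{\tv}$ (so that a regular nice labeling of $\tW_{\tv}$ really does induce a well-behaved coloring back on $P$), and verifying that finite index forces invariance of the induced coloring in \emph{two} independent directions rather than one --- the point that makes aperiodicity bite. The same scheme, applied to the squares coming from an aperiodic $4$-way deterministic tile set in place of Wise's six tiles, yields the further counterexamples of Section~\ref{aperiodic}.
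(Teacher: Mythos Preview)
Your plan is correct and matches the paper's overall architecture: reduce to ``no regular nice labeling'', build the domain as a principal filter of the directed universal cover of a finite NPC complex obtained from Wise's $\bX$ by barycentric subdivision plus tips, and derive a contradiction from the aperiodic plane. The one substantive difference is in the endgame for Wise's example. You propose a \emph{two-direction} pigeonhole: find repeats of labeled filters along both a horizontal and a vertical axis, propagate via parallelism and Determinism, and assemble a rectangle with matching opposite sides, contradicting double periodicity of $\Pi_{yc}$. That is exactly how the paper handles the general 4-way deterministic aperiodic tile sets in Section~\ref{aperiodic} (Theorem~\ref{th-cex-pavages}). For Wise's complex itself, however, the paper's Theorem~\ref{no-regular-labeling} uses a sharper \emph{one-direction} argument: it finds just two $0$-vertices $\tz',\tz''$ on the vertical path $P_c^+$ with isomorphic labeled filters, invokes Wise's period-doubling Proposition~\ref{period-doubling} to guarantee that the horizontal words $M_n(k)$ and $M_n(m)$ differ, and then observes that at the first place where the isomorphism $f$ fails to track the horizontal path, parallelism gives $\lambda(\tz_{m,i}\tu_{m,i})=\lambda(\tz_{k,i}\tu_{k,i})=\lambda(\tz_{m,i}\tu)$ for two distinct outgoing edges at $\tz_{m,i}$, violating Determinism directly. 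Your route also works (it only needs that $\Pi_{yc}$ is not doubly periodic, Theorem~\ref{anti-torus}); the paper's route buys a shorter argument exploiting the explicit period-doubling of Wise's tiling, while yours has the virtue of being the same argument that generalizes verbatim to arbitrary 4-way deterministic aperiodic tile sets. One small imprecision to watch: $\lambda$ lives on $\tW_{\tv}$, not on all of $\tW$, so you only get the induced coloring on the quarter-plane $\Pi_{yc}^{++}$ --- but that is all either argument needs.
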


Badouel et al.~\cite{BaDaRa} considered recognizable conflict event
domains that are more general than the domains of event structures
we consider in this paper. Since the domain of an event structure
$\mathcal E$ is recognizable if and only if $\mathcal E$ is a regular
trace event structure (see~\cite[Section 5]{Mo}),
Conjecture~\ref{conj-badara} can be reformulated as follows:

\begin{conjecture}\label{conj-badara-2}
  {$\mathcal E$ is a regular event structure iff $\mathcal
    E$ is a regular trace event structure and $\mathcal E$ has bounded
    $\natural$-cliques.}
\end{conjecture}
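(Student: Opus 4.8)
The plan is to first translate the statement and see what a proof would require. Via \cite[Theorem~1]{Thi_conjecture} and the recognizability equivalence of \cite[Section~5]{Mo}, Conjecture~\ref{conj-badara-2} says precisely that every regular event structure with bounded $\natural$-cliques is a regular trace event structure, i.e.\ admits a nice labeling by a finite trace alphabet $(\Sigma,I)$ with finitely many isomorphism types of labeled filters. The backward implication is immediate: a finite trace labeling gives distinct labels to any two $\natural$-related events (for concurrency and for minimal conflict this is built into (LES1)--(LES3); for $\natural_{(3)}$ it holds because the witness $e_3$ is $I$-independent from one of the two events and $D$-dependent with the other), so $\natural$-cliques have size at most $|\Sigma|$, while a regular trace event structure is regular by definition. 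So the content is the forward implication, which splits as ``regular $\Rightarrow$ bounded $\natural$-cliques'' together with ``regular $\Rightarrow$ regular trace'', the latter being exactly Thiagarajan's Conjecture~\ref{conj-thi-2} (equivalently Conjecture~\ref{conj-thi}).

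To attack ``regular $\Rightarrow$ regular trace'' I would pass to the geometry of Section~\ref{sec-median-CAT(0)}: $\cD(\cE)$ is a median graph, i.e.\ the $1$-skeleton of a CAT(0) cube complex with events corresponding to hyperplanes, and a regular nice labeling is a coloring of the hyperplanes satisfying Determinism and Concurrency with only finitely many local types. By Section~\ref{sec-directed-npc} it would essentially suffice to handle the strongly regular case, where $\cD(\cE)$ is a principal filter of the directed universal cover $(\tY,\tildo)$ of a \emph{finite} directed NPC complex $(Y,o)$; a natural first attempt is to look for a nice labeling of $\tY$ that factors through the finite complex $Y$, so that it automatically has finitely many types, and then to verify Determinism and Concurrency.

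I expect this step to be the real obstacle and, in general, to be impossible. When the squares of the finite NPC complex encode a tile set --- as in Wise's six-square complex $\bX$ with five edge colors --- the two transverse families of hyperplanes of the CAT(0) square complex $\tbX$ cut out, along a particular embedded plane, a tiling of that plane by the tile set. If the tile set is aperiodic, then no coloring of the hyperplanes by finitely many colors that is locally determined (Determinism and Concurrency) can be periodic along such a plane, so there must be infinitely many isomorphism types of colored filters. Since Wise's tile set is aperiodic, I would abandon the proof and instead \emph{refute} the conjecture.

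Concretely: starting from Wise's directed NPC square complex $\bX$, record its five edge colors geometrically by barycentrically subdividing $\bX$ and attaching to each edge midpoint a directed path (``tip'') of one of five distinct lengths, obtaining a finite directed NPC complex $W$; its directed universal cover $\tW$ is a CAT(0) square complex (with tips) having finitely many vertex types up to automorphism, and the principal filter $\tW_{\tv}$ at a vertex $\tv$ of $\tbX$ is, by Section~\ref{sec-directed-npc}, the domain of a regular event structure, for which one checks that the $\natural$-cliques are bounded. One then shows that any nice labeling of $\tW_{\tv}$ with a finite alphabet would, upon reading the colors recovered from the tips along a plane carrying the aperiodic tiling, be forced by a finite-state argument to be periodic there, hence to force a periodic tiling --- a contradiction. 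Thus $\tW_{\tv}$ admits no regular nice labeling, is not a regular trace event structure, and refutes Conjecture~\ref{conj-badara-2}, and (having bounded $\natural$-cliques) Conjectures~\ref{conj-badara} and~\ref{conj-thi} as well. The two delicate points are (a) engineering $W$ so that the colors are faithfully and \emph{regularly} encoded while the link condition and finiteness are preserved, and (b) turning aperiodicity of the tiling into an obstruction to \emph{every} finite nice labeling rather than to one ad hoc scheme; I expect (b) to be the crux. (In the hyperbolic special case one should instead be able to prove ``regular $\Rightarrow$ regular trace'', using Agol's theorem that such finite NPC complexes are virtually special together with Haglund--Wise theory --- but that does not rescue the general statement.)
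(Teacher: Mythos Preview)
Your proposal is correct and follows essentially the same route as the paper: you correctly recognize that the statement is to be \emph{refuted}, you reproduce the exact construction of the counterexample (Wise's complex $\bX$, barycentric subdivision, tips of five lengths to encode colors, the resulting finite directed NPC complex $W$, and the principal filter $\tW_{\tv}$), and you identify both verifications needed --- bounded $\natural$-cliques and the absence of a regular nice labeling --- together with the hyperbolic positive result via Agol and Haglund--Wise. The only minor difference is that your sketch of step~(b) (``a finite-state argument forces periodicity of the tiling'') matches more closely the paper's general argument for $4$-way deterministic aperiodic tile sets (Theorem~\ref{th-cex-pavages}) than the specific argument actually used for Wise's complex (Theorem~\ref{no-regular-labeling}), which instead invokes Wise's period-doubling Proposition~\ref{period-doubling} directly to produce a determinism violation at the first place where the isomorphism of labeled filters fails to be a vertical translation; but both arguments appear in the paper and yours is a fair high-level summary.
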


Since any regular trace labeling is a regular nice labeling, any
regular event structure $\mathcal E$ not admitting a regular nice
labeling is a counterexample to Conjecture~\ref{conj-thi-2} (and thus
to Conjecture~\ref{conj-thi}).  If, additionally, $\mathcal E$ has
bounded $\natural$-cliques, $\mathcal E$ is also a counterexample to
Conjecture~\ref{conj-badara-2} (and thus to
Conjecture~\ref{conj-badara}).

\section{Domains, median graphs, and CAT(0) cube complexes} \label{sec-median-CAT(0)}

In this section, we recall the bijections between domains of event
structures and median graphs/CAT(0) cube complexes established in
\cite{ArOwSu} and \cite{BaCo}, and between median graphs and 1-skeleta
of CAT(0) cube complexes established in \cite{Ch_CAT} and \cite{Ro}.

\subsection{Median graphs}
Let $G=(V,E)$ be a simple, connected, not necessarily
finite graph. The {\it distance}
$d_G(u,v)$ between two vertices $u$ and $v$ is the length of
a shortest $(u,v)$-path, and the {\it interval} $I(u,v)$ between $u$
and $v$ consists of all vertices on shortest $(u,v)$--paths, that
is, of all vertices (metrically) {\it between} $u$ and $v$:
$$I(u,v):=\{ x\in V: d_G(u,x)+d_G(x,v)=d_G(u,v)\}.$$
An induced subgraph of $G$ (or the corresponding vertex set)
is called {\it convex} if it includes the interval of $G$ between
any of its vertices.   A graph
$G=(V,E)$ is {\it isometrically embeddable} into a graph $H=(W,F)$
if there exists a mapping $\varphi : V\rightarrow W$ such that
$d_H(\varphi (u),\varphi (v))=d_G(u,v)$ for all vertices $u,v\in V$.

A graph $G$ is called {\it median} if the interval intersection
$I(x,y)\cap I(y,z)\cap I(z,x)$ is a singleton for each triplet $x,y,z$ of vertices. Median graphs are bipartite.
Basic examples of median graphs are trees, hypercubes, rectangular grids, and
Hasse diagrams of distributive lattices and  of median semilattices~\cite{BaCh_survey}.  With any vertex $v$ of a median
graph $G=(V,E)$ is associated a {\it canonical partial order} $\le_v$ defined by setting $x\le_v y$
if and only if $x\in I(v,y);$ $v$ is called the {\it basepoint} of $\le_v$. Since $G$ is bipartite,
the Hasse diagram $G_v$ of the partial order $(V,\le_v)$ is the graph $G$ in which any edge
$xy$ is directed from $x$ to $y$ if and only if the inequality $d_G(x,v)<d_G(y,v)$ holds. We call
$G_v$ a {\it pointed median graph}.  There is a close relationship between pointed median graphs and
median semilattices. A {\it median semilattice} is a meet semilattice $(P,\le)$ such that (i) for every $x$,
the {\it principal ideal} $\downarrow \!x=\{ p\in P: p\le x\}$ is a distributive lattice, and (ii) any
three elements have a least upper bound in $P$ whenever
each pair of them does.

\begin{theorem}[\!\!\cite{Av}] \label{avann} The Hasse diagram of any median semilattice is a median graph. Conversely,
every median graph defines a median semilattice with respect to any canonical order $\le_v$.
\end{theorem}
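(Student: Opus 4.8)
The plan is to prove the two implications separately, each time reducing to the classical fact noted above as a basic example: the covering graph of a distributive lattice is a median graph. For the direction from median graphs to median semilattices, fix a basepoint $v$ of the median graph $G$ and work with the order $\le_v$. Take $x\wedge y:=m(v,x,y)$, the unique median of $v,x,y$; since $m(v,x,y)\in I(v,x)\cap I(v,y)$ it is a lower bound of $x$ and $y$, and the greatest-lower-bound property follows from the defining identities of the median operation of $G$. Each principal ideal is the interval $\downarrow x=I(v,x)$, which is convex in $G$, hence itself a median graph with least element $v$ and greatest element $x$; ordered by $\le_v$ this is the standard description of an interval of a median graph as a distributive lattice (with meet $m(v,\cdot,\cdot)$ and join $m(x,\cdot,\cdot)$), which is axiom (i). For axiom (ii), if $a,b,c$ have pairwise upper bounds then each pair lies in a common principal ideal, so the joins $a\vee b$, $b\vee c$, $c\vee a$ exist; one checks that $m(a\vee b,\,b\vee c,\,c\vee a)$ is a common upper bound of $a,b,c$ --- for instance $a\le(a\vee b)\wedge(c\vee a)\le m(a\vee b,b\vee c,c\vee a)$, and symmetrically for $b$ and $c$ --- so $a,b,c$ sit in the distributive lattice below that upper bound, hence have a least upper bound there, which is readily seen to be their least upper bound in all of $G$.

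For the direction from median semilattices to median graphs, let $P$ be a median semilattice and $G$ its Hasse diagram. The crux is the following claim: \emph{every principal ideal $\downarrow a$ is a convex, isometrically embedded subgraph of $G$.} Granting it, I argue as follows. Given $x,y,z\in P$, the meets $x\wedge y$, $y\wedge z$, $z\wedge x$ lie below $y$, $z$, $x$ respectively, so each pair of them has an upper bound, and axiom (ii) yields the element $m:=(x\wedge y)\vee(y\wedge z)\vee(z\wedge x)$. I claim that $m$ is the unique element belonging to all three intervals $I_G(x,y)$, $I_G(y,z)$, $I_G(z,x)$. Consider the pair $\{x,y\}$. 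If $x\vee y$ exists, then $x\wedge y\le m\le x\vee y$, and inside the distributive lattice $\downarrow(x\vee y)$ --- which by the claim is convex and isometric in $G$, so interval computations there coincide with those in $G$ --- the element $m$ is exactly the lattice median of $x$, $y$, and $z':=z\wedge(x\vee y)$ (using $x\wedge z'=x\wedge z$ and $y\wedge z'=y\wedge z$), hence $m\in I_G(x,y)$. If $x\vee y$ does not exist, then $I_G(x,y)=[x\wedge y,x]\cup[x\wedge y,y]$ and a direct check places $m$ on this path. Running this over the three pairs puts $m$ in all three intervals; and any $w$ common to the three intervals lies in the relevant principal ideals and is forced to equal $m$ by uniqueness of the median in a distributive lattice. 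Hence $I_G(x,y)\cap I_G(y,z)\cap I_G(z,x)=\{m\}$, so $G$ is median.

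It remains to prove the claim, which I expect to be the main obstacle. The idea is to show that $\pi_a\colon V\to\downarrow a$, $\pi_a(x)=x\wedge a$, is a nonexpansive retraction: it fixes $\downarrow a$ pointwise, and for an edge $x\lessdot y$ of $G$ we have $\pi_a(x)=x\wedge a\le y\wedge a=\pi_a(y)$, with either $x\wedge a=y\wedge a$ or $x\wedge a\lessdot y\wedge a$ --- because $x\lessdot y$ inside the distributive, hence modular, lattice $\downarrow y$, so meeting the covering pair with $a\wedge y\in\downarrow y$ either fixes it or collapses it. Thus $\pi_a$ sends every edge to an edge or to a single vertex, so it does not increase distances; together with the trivial reverse inequality this shows $\downarrow a$ is isometric in $G$. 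Convexity then follows by a short further argument --- using, say, that an isometric ideal is gated with gate map $\pi_a$, or a local move pushing a geodesic into $\downarrow a$. The real work, which I have only gestured at, is twofold: checking the well-definedness of the joins invoked above and the coherence of the distributive-lattice structures on overlapping principal ideals, which is exactly the role of axioms (i) and (ii); and carrying out, with care, the case analysis according to which of the pairwise joins $x\vee y$, $y\vee z$, $z\vee x$ exist.
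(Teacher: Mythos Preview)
The paper does not prove this theorem: it is quoted as a classical result of Avann~\cite{Av} and used as a black box (see the sentence immediately following Theorem~\ref{avann} and the first paragraph of the proof of Theorem~\ref{median_domain}). So there is no ``paper's own proof'' to compare against.

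That said, your sketch is on the right track but has genuine gaps in the median-semilattice $\Rightarrow$ median-graph direction, and you correctly flag them yourself. Two places deserve more honesty. First, the claim ``if $x\vee y$ does not exist then $I_G(x,y)=[x\wedge y,x]\cup[x\wedge y,y]$'' is exactly the hard part: it asserts that every geodesic between $x$ and $y$ in the Hasse diagram descends through $x\wedge y$, and this does not follow from your retraction argument, which only gives isometry of $\downarrow a$, not convexity. In a general graph an isometric subgraph need not be convex, and you cannot invoke ``isometric $+$ median-closed $\Rightarrow$ convex'' because you have not yet shown $G$ is median. Second, your uniqueness argument (``any $w$ common to the three intervals lies in the relevant principal ideals'') is circular for the same reason: you need convexity of the ideals to force $w$ into them, not just isometry. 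The standard route (Avann, and later Sholander/Bandelt--Hedl\'ikov\'a) is to first prove directly that the ternary operation $m(x,y,z)=(x\wedge y)\vee(y\wedge z)\vee(z\wedge x)$ satisfies the median-algebra axioms, and only then identify $I_G$ with the algebraic interval $\{w:m(x,w,y)=w\}$; this avoids the graph-metric bootstrapping you are attempting. Your retraction $\pi_a$ and the modular-lattice observation about coverings are good ingredients, but they are not enough by themselves to close the loop.
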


Median graphs can be obtained from hypercubes by amalgams and median
graphs are themselves isometric subgraphs of hypercubes \cite{BaVdV,Mu}. The canonical
isometric embedding of a median graph $G$ into a (smallest) hypercube
can be determined by the so called {\it Djokovi\'c-Winkler (``parallelism'')}
relation $\Theta$ on the edges of
$G$~\cite{Dj,Wink}. For median graphs, the
equivalence relation $\Theta$ can be defined as follows. First say
that two edges $uv$ and $xy$ are in relation $\Theta'$ if they are
 opposite edges of a $4$-cycle $uvxy$ in $G$. Then let
$\Theta$ be the reflexive and transitive closure of $\Theta'$. Any equivalence class
of $\Theta$ constitutes a cutset of the median graph $G$, which determines one factor of the
canonical hypercube \cite{Mu}. The cutset (equivalence class) $\Theta(xy)$ containing an edge $xy$
defines a convex split $\{ W(x,y),W(y,x)\}$ of $G$ \cite{Mu}, where $W(x,y)=\{ z\in V:
d_G(z,x)<d_G(z,y)\}$ and $W(y,x)=V\setminus W(x,y)$ (we  call the complementary convex sets $W(x,y)$
and $W(y,x)$ {\it halfspaces}). Conversely, for every convex
split of a median graph $G$ there exists at least one edge $xy$ such
that $\{ W(x,y),W(y,x)\}$ is the given split. We  denote by
$\{ \Theta_i: i\in I\}$ the equivalence classes of the relation
$\Theta$ (in \cite{BaCo}, they were called parallelism classes). For an equivalence
class $\Theta_i, i\in I$, we  denote by
$\{ A_i,B_i\}$ the associated convex split. We  say that  $\Theta_i$ {\it separates} the vertices $x$ and $y$ if $x\in A_i,y\in B_i$ or
$x\in B_i,y\in A_i$. The isometric embedding $\varphi$ of $G$ into a hypercube
is obtained by taking a basepoint $v$, setting $\varphi(v)=\varnothing$ and for any other vertex $u$, letting
$\varphi(u)$ be all parallelism classes of $\Theta$ which separate $u$ from $v$.


We conclude this subsection with the following simple but useful local characterization of convex sets of median graphs
(which holds for much more general classes of graphs):

\begin{lemma} \label{convex} Let $S$ be a connected subgraph of a median graph $G$. Then $S$ is a convex subgraph if and
only if $S$ is locally-convex, i.e., $I(x,y)\subseteq S$ for any two vertices $x,y$ of $S$ having a common neighbor in $S$.
\end{lemma}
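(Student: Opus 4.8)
\emph{Proof plan.}
The forward implication is immediate from the definitions: if $S$ is convex and $x,y\in S$, then $I(x,y)\subseteq S$, and in particular this holds whenever $x$ and $y$ have a common neighbour in $S$.

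For the converse I would argue by contradiction. Assume $S$ is connected and locally convex but not convex, and call a pair $x,y\in S$ \emph{bad} if $I(x,y)\not\subseteq S$. Since $S$ is not convex, a bad pair exists; choose one, $x,y$, with $k:=d_G(x,y)$ minimum, then choose $z\in I(x,y)\setminus S$ minimizing $\min(d_G(x,z),d_G(z,y))$, and relabel $x,y$ so that $j:=d_G(x,z)\le d_G(z,y)$. As $z\in I(x,y)\setminus\{x,y\}$ we have $k\ge 2$ and $1\le j<k$. If $j\ge 2$, pick a neighbour $a$ of $x$ on a geodesic from $x$ to $z$, so $a\in I(x,z)\subseteq I(x,y)$ and $d_G(x,a)=1$. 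If $a\notin S$, then $a$ is a vertex of $I(x,y)\setminus S$ with $\min(d_G(x,a),d_G(a,y))=1<j$, contradicting the choice of $z$; hence $a\in S$. Now $d_G(a,y)=k-1<k$, so $(a,y)$ is not bad, i.e.\ $I(a,y)\subseteq S$; but $a\in I(x,z)$ and $z\in I(x,y)$ force $z\in I(a,y)$, so $z\in S$, a contradiction.

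It remains to treat $j=1$, i.e.\ $z$ is a neighbour of $x$ lying in $I(x,y)\setminus S$ with $d_G(z,y)=k-1$; we may assume $|S|\ge 2$, so by connectivity $x$ has a neighbour in $S$. Here I would first produce a neighbour $b$ of $x$ with $b\in S\cap I(x,y)$, i.e.\ $d_G(b,y)=k-1$; note $b\ne z$ and $d_G(b,z)=2$, with $x$ a common neighbour of $b$ and $z$. Since $G$ is a median graph it is bipartite and contains no induced $K_{2,3}$, so $b$ and $z$ have exactly one or two common neighbours; a short distance computation (using $d_G(z,y)=k-1$, $d_G(b,y)=k-1$) shows that the median of $b,z,y$ is none of $x$, $b$, $z$, hence $b$ and $z$ have a second common neighbour $x'$ with $d_G(x',y)=k-2$. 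Then $(b,y)$ is not bad, so $I(b,y)\subseteq S$ and in particular $x'\in S$. Now $x,b,x',z$ form a $4$-cycle ($x$ and $x'$ are both adjacent to $b$ and to $z$), and $b\in S$ is a common neighbour of $x,x'\in S$, so local convexity gives $I(x,x')\subseteq S$; since $d_G(x,x')=2$ and $z$ is a common neighbour of $x$ and $x'$, we get $z\in I(x,x')\subseteq S$, a contradiction, finishing the proof.

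The main obstacle is exactly the highlighted step in the case $j=1$: producing the neighbour $b\in S\cap I(x,y)$, i.e.\ ruling out the possibility that every neighbour of $x$ in $S$ (and, symmetrically, every neighbour of $y$ in $S$) is at distance $k+1$ from the other endpoint. I expect this to require combining the connectivity of $S$ with the minimality of $k$: for instance, following a shortest path from $x$ to $y$ inside $S$ and examining its first edge that crosses the convex split $\{W(x,z),W(z,x)\}$ associated with the $\Theta$-class of $xz$ (every neighbour of $x$ in $S$ lies in $W(x,z)$, while $y\in W(z,x)$), and using that edges of a common $\Theta$-class are ``parallel''. An essentially equivalent packaging of the whole argument is to observe that, $S$ being non-convex, $\bigcap\{H: H \text{ a halfspace},\ S\subseteq H\}$ is a convex set strictly containing $S$; picking $w$ in this intersection outside $S$ with $d_G(w,S)$ minimum, one checks that if $d_G(w,S)\ge 2$ then the neighbour towards $w$ of a closest point of $S$ yields a halfspace containing $S$ but not $w$, a contradiction, and the residual case $d_G(w,S)=1$ reduces to the configuration treated above.
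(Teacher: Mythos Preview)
The paper states this lemma without proof, as a folklore local-to-global fact that ``holds for much more general classes of graphs'', so there is no paper argument to compare against. Your forward implication and the case $j\ge 2$ are fine. The gap you flag in the case $j=1$ is genuine, and neither of your sketched patches closes it. For the $\Theta$-class idea: since every neighbour of $x$ in $S$ lies in $W(x,z)$ (as you note), the first crossing edge $p_ip_{i+1}$ of a shortest $S$-path has $i\ge 1$; one then gets $z\in I(x,p_{i+1})$ (because $z$ is the gate of $x$ in $W(z,x)\ni p_{i+1}$), so $(x,p_{i+1})$ is again bad, but $d_G(x,p_{i+1})=d_G(x,p_i)+1$ need not be $<k$, and nothing forces termination. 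The halfspace-hull variant lands you back in exactly the same residual configuration once $d_G(w,S)=1$. The underlying issue is that minimizing only $d_G(x,y)$ gives no handle on how $S$ sits near $x$.

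A clean repair is to change the induction parameter: prove directly, by induction on the $S$-distance $d_S(x,y)$, that $I(x,y)\subseteq S$ for all $x,y\in S$. Take $p\in S$ adjacent to $x$ with $d_S(p,y)=d_S(x,y)-1$; by induction $I(p,y)\subseteq S$. If $d_G(p,y)=k+1$ then $I(x,y)\subseteq I(p,y)\subseteq S$ and you are done. If $d_G(p,y)=k-1$ then $p$ plays the role of your missing $b$: for any $u\sim x$ with $u\in I(x,y)$, $u\ne p$, the median $x'$ of $u,p,y$ is a common neighbour of $u$ and $p$ lying in $I(p,y)\subseteq S$; local convexity applied to $x,x'$ with common neighbour $p\in S$ gives $u\in I(x,x')\subseteq S$, and since any $G$-geodesic from $x'$ to $y$ lies in $I(p,y)\subseteq S$ one has $d_S(u,y)\le 1+(k-2)<d_S(x,y)$, so induction yields $I(u,y)\subseteq S$. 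As every $w\in I(x,y)\setminus\{x\}$ lies in some such $I(u,y)$, the step is complete.
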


\subsection{Nonpositively curved cube complexes}
A 0-cube is a single point. A 1-{\it cube} is an isometric copy of the
segment $[-1,1]$ and has a cell structure consisting of 0-cells
$\{\pm 1\}$ and a single 1-cell. An $n$-{\it cube} is an isometric
copy of $[-1,1]^n$, and has the product structure, so that each
closed cell of $[-1,1]^n$ is obtained by restricting some of the
coordinates to $+1$ and some to $-1$.  A \emph{cube complex} is
obtained from a collection of cubes of various dimensions by
isometrically identifying certain subcubes.  The \emph{dimension} of a
cube complex $X$ is the largest value of $d$ for which $X$ contains a
$d$-cube. A {\it square complex} is a cube complex of dimension 2.
The 0-cubes and the 1-cubes of a cube complex $X$ are called {\it
  vertices} and {\it edges} of $X$ and define the graph $X^{(1)}$, the
\emph{$1$-skeleton} of $X$. We denote the vertices of $X^{(1)}$ by
$V(X)$ and the edges of $X^{(1)}$ by $E(X)$.  For
$i\in\NN$, we denote by $X^{(i)}$ the $i$-{\it
  skeleton} of $X$, i.e., the cube complex consisting of all
$j$-dimensional cubes of $X$, where $j\le i$. A {\it square complex}
$X$ is a combinatorial 2-complex whose 2-cells are attached by closed
combinatorial paths of length 4. Thus, one can consider each 2-cell as
a square attached to the 1-skeleton $X^{(1)}$ of $X$.
The \emph{star} $\St(v,X)$ of a vertex $v$ of $X$ is the
subcomplex spanned by all cubes containing $v$.  The \emph{link} of a vertex $x\in X$
is the simplicial complex $\Link(x)$ with a $(d-1)$-simplex for each $d$-cube containing $x$,
with simplices attached according to the attachments of the corresponding cubes. The link
$\Link(x)$ is said to be a \emph{flag (simplicial) complex}  if each $(d+1)$-clique in $\Link(x)$
spans an $d$-simplex. This flagness  condition of $\Link(x)$ can be restated as follows: whenever
three $(k + 2)$-cubes of ${X}$ share a common $k$-cube containing $x$ and pairwise share common
$(k+1)$-cubes, then they are contained in a $(k+3)$--cube of $X$. A cube complex $X$ is called
{\it simply connected} if it is connected and if every continuous
mapping of the 1-dimensional sphere $S^1$ into $X$ can
be extended to a continuous mapping of the disk $D^2$ with boundary
$S^1$ into $X$.
Note that $X$ is connected iff $G(X)=X^{(1)}$ is connected, and $X$ is
simply connected 
iff $X^{(2)}$ is simply connected. Equivalently, a cube complex $X$ is {simply
connected} if $X$ is connected and every cycle $C$ of
its $1$-skeleton is null-homotopic, i.e., it can be contracted to a
single point by elementary homotopies.

Given two cube complexes $X$ and $Y$, a \emph{covering (map)} is a
surjection $\varphi\colon Y \to X$ mapping cubes to cubes and such that
$\varphi_{|\St(v,Y)}\colon \St(v,Y)\to \St(\varphi(v),X)$ is an
isomorphism for every vertex $v$ in $Y$.  The space $Y$ is then called
a \emph{covering space} of $X$.  A \emph{universal cover} of $X$ is a
simply connected covering space; it always exists and it is unique up
to isomorphism~\cite[Sections 1.3 and 4.1]{Hat}. The universal cover of
a complex $X$ will be denoted by $\tX$. In particular, if $X$ is
simply connected, then its universal cover $\tX$ is $X$ itself.

An important class of cube complexes studied in geometric group theory
and combinatorics is the class of nonpositively curved and CAT(0) cube
complexes. We continue by recalling the definition of CAT(0) spaces.
A {\it geodesic triangle} $\Delta=\Delta (x_1,x_2,x_3)$ in a geodesic
metric space $(X,d)$ consists of three points in $X$ (the vertices
of $\Delta$) and a geodesic  between each pair of vertices (the
sides of $\Delta$). A {\it comparison triangle} for $\Delta
(x_1,x_2,x_3)$ is a triangle $\Delta (x'_1,x'_2,x'_3)$ in the
Euclidean plane  ${\mathbb E}^2$ such that $d_{{\mathbb
E}^2}(x'_i,x'_j)=d(x_i,x_j)$ for $i,j\in \{ 1,2,3\}.$ A geodesic
metric space $(X,d)$ is defined to be a {\it CAT(0) space}
\cite{Gr} if all geodesic triangles $\Delta (x_1,x_2,x_3)$ of $X$
satisfy the comparison axiom of Cartan--Alexandrov--Toponogov:
{\it If $y$ is a point on the side of $\Delta(x_1,x_2,x_3)$ with
vertices $x_1$ and $x_2$ and $y'$ is the unique point on the line
segment $[x'_1,x'_2]$ of the comparison triangle
$\Delta(x'_1,x'_2,x'_3)$ such that $d_{{\mathbb E}^2}(x'_i,y')=
d(x_i,y)$ for $i=1,2,$ then $d(x_3,y)\le d_{{\mathbb
E}^2}(x'_3,y').$}  A geodesic
metric space $(X,d)$ is {\it nonpositively curved} if it is locally CAT(0), i.e.,
any point has a neighborhood  inside which the CAT(0) inequality holds.
CAT(0) spaces can be characterized  in several
different  natural ways and have many strong properties, see for example \cite{BrHa}.
In particular, a geodesic metric space $(X,d)$ is CAT(0) if and only if $(X,d)$ is
simply connected and is nonpositively curved. Gromov \cite{Gr} gave a beautiful
combinatorial characterization of CAT(0) cube complexes, which can be also taken
as their definition:
%

\begin{theorem}[\!\!\cite{Gr}] \label{Gromov} A cube complex $X$ endowed  with the $\ell_2$-metric is CAT(0) if and
only if $X$ is simply connected and the links of all vertices of $X$ are flag complexes.
If $Y$ is a cube complex in which the links of all vertices  are flag complexes, then the universal cover
$\tY$ of $Y$ is a CAT(0) cube complex.
\end{theorem}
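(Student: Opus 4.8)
The plan is to deduce this from the characterization of CAT(0) spaces recalled above: a geodesic metric space is CAT(0) if and only if it is simply connected and nonpositively curved (locally CAT(0)). Thus the first assertion reduces to two facts: \textbf{(A)} the cube complex $X$ with the $\ell_2$-metric is a complete geodesic space; and \textbf{(B)} $X$ is nonpositively curved if and only if $\Link(v)$ is a flag simplicial complex for every vertex $v$ of $X$. Granting (A) and (B), the backward direction of the first assertion is immediate (simply connected together with flag links gives nonpositively curved by (B), hence CAT(0) by the recalled characterization), and the forward direction follows because a CAT(0) space has unique geodesics and a convex metric, hence admits a continuous geodesic contraction onto a basepoint and is in particular simply connected, while being locally CAT(0) forces every link to be flag by (B). For the second assertion, a covering map $\varphi\colon\tY\to Y$ restricts to isomorphisms $\St(v,\tY)\to\St(\varphi(v),Y)$ and hence induces isomorphisms of links, so every link of $\tY$ is flag; since $\tY$ is simply connected by definition of the universal cover, the first assertion applies and $\tY$ is CAT(0).

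Fact (A) I would obtain from Bridson's theorem on $M_{\kappa}$-polyhedral complexes \cite{BrHa}: every cube of $X$ is an isometric copy of the unit Euclidean cube, so when $\dim X$ is finite there are only finitely many isometry types of cells, and such a complex equipped with its intrinsic length metric is complete and geodesic. (In the infinite-dimensional case one invokes the corresponding statement for cube complexes.)

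Fact (B) is the crux, and I would split it into two ingredients. First, the link condition for $M_{\kappa}$-complexes \cite{BrHa}: $X$ is nonpositively curved if and only if, for every vertex $v$, the link $\Link(v)$ with its natural piecewise-spherical metric is CAT(1); in this metric the link of a $d$-cube through $v$ is a spherical $(d-1)$-simplex all of whose edges have length $\pi/2$, an \emph{all-right} spherical simplex. It then remains to prove Gromov's combinatorial lemma \cite{Gr}: \emph{an all-right piecewise-spherical simplicial complex $L$ is CAT(1) if and only if $L$ is flag}. This I would prove by induction on $\dim L$. When $\dim L\le 1$, $L$ is a metric graph with all edges of length $\pi/2$, which is CAT(1) precisely when it contains no closed geodesic loop shorter than $2\pi$, i.e., no cycle with fewer than four edges, i.e., no triangle; for a one-dimensional simplicial complex this is exactly the flag condition. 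For the inductive step I would use the refined criterion that $L$ is CAT(1) if and only if the link in $L$ of every vertex (again an all-right spherical complex of strictly smaller dimension) is CAT(1) and $L$ contains no locally geodesic loop of length $<2\pi$: the link of a vertex of a flag complex is flag, so the induction hypothesis settles the first clause, and the remaining content is the combinatorial equivalence, in an all-right spherical complex, between flagness and the absence of short locally geodesic loops --- a ``diagonal'' across a clique that fails to span a simplex produces such a short loop, and conversely a shortest offending loop can be analysed to cross a single closed simplex and thereby exhibit a missing face.

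Finally I would reassemble. If $X$ is CAT(0), then it is contractible, hence simply connected, and trivially locally CAT(0), so all its links are flag by (B). Conversely, if $X$ is simply connected with all links flag, then $X$ is nonpositively curved by (B) and a complete geodesic space by (A), so the Cartan--Hadamard-type characterization recalled above gives that $X$ is CAT(0); and the second assertion is the covering argument already described. The hard part is the lemma of the previous paragraph --- that an all-right spherical complex is CAT(1) if and only if it is flag --- which carries essentially all of the geometric content; the rest is bookkeeping with standard properties of $M_{\kappa}$-complexes.
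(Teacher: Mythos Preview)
The paper does not prove this theorem; it is quoted as a classical result of Gromov \cite{Gr} and used as a black box throughout. There is therefore no ``paper's own proof'' to compare against.

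Your outline is the standard route to Gromov's theorem as presented, for instance, in Bridson--Haefliger \cite{BrHa}: reduce to the Cartan--Hadamard characterization (CAT(0) $\Leftrightarrow$ simply connected and locally CAT(0)), then use the link condition for $M_\kappa$-complexes to reduce local nonpositive curvature to the links being CAT(1), and finally invoke Gromov's lemma that an all-right spherical simplicial complex is CAT(1) if and only if it is flag. The deduction of the second assertion from the first via the local isomorphism property of covering maps is also the standard argument. So your proposal is correct and follows exactly the expected approach; since the paper offers no proof of its own, there is nothing further to compare.
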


In view of the second assertion of Theorem \ref{Gromov}, the cube complexes in which
the links of vertices are flag complexes are called
{\it nonpositively curved cube complexes} or shortly {\it NPC  complexes}.  As a corollary of Gromov's
result, {\it for any NPC complex $X$, its universal cover $\tX$ is CAT(0).}

A square complex $X$ is a $VH${\it -complex}
({\it vertical-horizontal complex}) if the 1-cells (edges) of $X$ are partitioned into
two sets $V$ and $H$ called {\it vertical} and
{\it horizontal} edges respectively, and the edges in each square
alternate between edges in $V$ and $H$. Notice that if $X$ is a $VH$-complex, then $X$ satisfies
the Gromov's nonpositive curvature condition since no three squares
may pairwise intersect on three edges with a common vertex, thus $VH$-complexes are particular NPC square complexes.
A $VH$-complex $X$ is a {\it complete square complex}
(CSC)~\cite{Wi_csc} if any vertical edge and any horizontal edge
incident to a common vertex belong to a common square of $X$. By
\cite[Theorem 3.8]{Wi_csc}, if $X$ is a complete square complex, then
the universal cover $\tX$ of $X$ is isomorphic to the
Cartesian product of two trees. By a {\it plane} $\Pi$ in
$\tX$ we will mean a convex subcomplex of $\tX$
isometric to ${\mathbb R}^2$ tiled by the grid ${\mathbb Z}^2$ into
unit squares.

We continue with the bijection between CAT(0) cube complexes and median graphs:

\begin{theorem}[\!\!\cite{Ch_CAT,Ro}] \label{CAT(0)=median} Median graphs are exactly the 1-skeleta  of CAT(0) cube complexes.
\end{theorem}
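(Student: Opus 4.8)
The proof splits into two inclusions. For the easier direction, I would start with a CAT(0) cube complex $X$ and show that $G(X) = X^{(1)}$ is a median graph. By Theorem \ref{Gromov}, $X$ is simply connected and all vertex links are flag. First I would verify that $G(X)$ is bipartite and that the natural graph distance agrees with the combinatorial distance along edges (using simple connectivity to rule out odd cycles and to control geodesics). The key structural fact is that each hyperplane of $X$ — the equivalence class of edges under the ``parallel across a square'' relation, i.e. the relation $\Theta$ from Section \ref{sec-median-CAT(0)} — separates $X$ into exactly two halfspaces, each of which is convex. Convexity of halfspaces is where the flag-link (local CAT(0)) hypothesis does the work: it prevents a hyperplane from self-intersecting or from having a ``bent'' carrier, so that crossing a hyperplane strictly increases or strictly decreases distance to a fixed basepoint. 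Given convexity of halfspaces, the median of $x,y,z$ is constructed as the unique vertex lying, for every hyperplane $H$, on the ``majority side'' of $H$ among $x,y,z$; one checks this vertex exists (nonempty intersection of the relevant halfspaces, via a Helly-type argument for the pairwise-crossing halfspaces) and is unique, and that it lies in $I(x,y)\cap I(y,z)\cap I(z,x)$. That gives the median property.

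For the reverse inclusion, I would take a median graph $G$ and build a cube complex $\bX(G)$ by filling in every isometrically embedded cube: whenever the $1$-skeleton of a $d$-dimensional hypercube sits isometrically in $G$, glue in a $d$-cube along it. I then need two things: that $\bX(G)^{(1)} = G$ (no new edges were created — immediate) and that $\bX(G)$ is a CAT(0) cube complex, for which by Theorem \ref{Gromov} it suffices to show (a) all vertex links are flag and (b) $\bX(G)$ is simply connected. Flagness of links is exactly the statement that in a median graph, whenever three edges at a vertex $v$ pairwise span squares, those squares bound a $3$-cube — this is the combinatorial ``cube condition,'' a standard and characteristic property of median graphs provable from the uniqueness of medians (the three ``opposite corners'' have a common median, which is the eighth vertex of the cube), and inductively the analogous statement for higher cubes. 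Simple connectedness I would get by showing every cycle in $G$ is null-homotopic: take a shortest cycle in a given homotopy class, and use the median/convexity structure (e.g. that $G$ is obtained from hypercubes by successive convex amalgams, cf. \cite{BaVdV,Mu}) to contract it through squares — equivalently, argue that any cycle not already bounding a disk of squares can be shortened, using Lemma \ref{convex} and the fact that intervals in $G$ are convex.

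The main obstacle, and the place I would spend the most care, is the separation-and-convexity package for hyperplanes in the CAT(0)-to-median direction: showing that a hyperplane is embedded, two-sided, separates into exactly two pieces, and that each halfspace is convex. Everything downstream (existence and uniqueness of medians via majority halfspaces, and the Helly property needed for existence) hinges on this, and it is exactly where one must genuinely use Gromov's flag-link condition rather than just getting it from formal cube-complex combinatorics. A clean way to organize it is: (i) a local lemma that at each vertex, distinct hyperplanes through that vertex are ``transverse'' in the link (flagness); (ii) a ``no self-intersection / no bigon'' argument along geodesics using simple connectivity, contracting any offending loop; (iii) deduce convexity of halfspaces from Lemma \ref{convex} by checking the local-convexity condition square by square. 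Once that is in place, both inclusions close, and combined with the already-cited facts that pointed median graphs are event domains (\cite{BaCo}) and that CAT(0) cube complexes carry this combinatorial structure (\cite{ArOwSu}), the theorem follows.
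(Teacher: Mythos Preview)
The paper does not prove this theorem; it is cited from \cite{Ch_CAT,Ro}, and the paper only records that the proof in \cite{Ch_CAT} goes through the local-to-global characterization Theorem~\ref{CAT(0)=median1}: a graph is median if and only if its cube complex is simply connected and satisfies the 3-cube condition. Combined with Gromov's Theorem~\ref{Gromov}, this reduces both directions to showing that median graphs are exactly the graphs whose cube complexes are simply connected with flag links, and the substantive work is the combinatorial ``if'' direction of Theorem~\ref{CAT(0)=median1}.

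Your median $\Rightarrow$ CAT(0) direction is essentially this same plan: fill in cubes, check the 3-cube (flag) condition from the median property, and prove simple connectedness by contracting cycles through squares. Your CAT(0) $\Rightarrow$ median direction, however, is genuinely different from the route the paper attributes to \cite{Ch_CAT}. Rather than proving the local-to-global statement directly, you go through the hyperplane/halfspace machinery (embedded two-sided hyperplanes, convex halfspaces, majority-vote median), which is closer in spirit to Sageev's Theorem~\ref{Sageev} and to Roller's approach in \cite{Ro}. This is a legitimate and well-known alternative: it trades the combinatorial disk-diagram or induction argument behind Theorem~\ref{CAT(0)=median1} for the geometric separation package, and once convexity of halfspaces is established the Helly/majority argument is clean. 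The cost is that the separation package itself (no self-intersection, convex halfspaces) still requires a nontrivial local-to-global argument using simple connectivity and the flag condition, so you are not avoiding that work but relocating it---which you correctly flag as the main obstacle.
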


The proof of Theorem \ref{CAT(0)=median} presented in \cite{Ch_CAT} is based on the following local-to-global characterization of median graphs:

\begin{theorem}[\!\!\cite{Ch_CAT}] \label{CAT(0)=median1} A graph $G$ is a median graph if and only if its cube complex is simply connected
and $G$ satisfies the 3-cube condition: if three squares of $G$ pairwise intersect in an edge and all three intersect in a vertex,
then they belong to a 3-cube.
\end{theorem}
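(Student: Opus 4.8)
The plan is to prove the two directions combinatorially: the $3$-cube condition supplies the essential local input, and simple connectivity is used to fill cycles of $G$ by squares. For \emph{necessity}, let $G$ be a median graph with cube complex $X(G)$. Simple connectedness of $X(G)$ is standard: $G$ is obtained from a single vertex by a sequence of convex amalgamations of median graphs, each gluing two simply connected cube complexes along a connected simply connected subcomplex, and hypercube complexes are contractible; alternatively it follows from Theorems~\ref{CAT(0)=median} and~\ref{Gromov}, CAT(0) cube complexes being contractible. For the $3$-cube condition, let three squares share a vertex $v$ and pairwise meet in an edge; as one checks, the shared edges are incident to $v$, so writing $va,vb,vc$ for the edges at $v$ and $x,y,z$ for the opposite corners (so $x\in I(a,b)$, $y\in I(b,c)$, $z\in I(c,a)$, each at distance $2$ from $v$), a short computation with the median operation --- using that no two vertices of a median graph have three common neighbours, so $x,y,z$ are pairwise distinct --- shows that $w:=m(x,y,z)$ is adjacent to each of $x,y,z$ and that $\{v,a,b,c,x,y,z,w\}$ induces a $3$-cube.

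For \emph{sufficiency}, assume $X(G)$ is simply connected and $G$ satisfies the $3$-cube condition. First, $G$ is bipartite: an elementary homotopy across a square swaps two consecutive edges of a closed walk for the other two edges of the square and so preserves the parity of its length; since every cycle is null-homotopic through squares and the constant loop has even length, $G$ has no odd cycle. Next, $G$ satisfies the quadrangle condition: given $u$ with $d(u,w)=k$ and distinct neighbours $v_1,v_2$ of $w$ at distance $k-1$ from $u$, the even closed walk formed by the edges $wv_1,wv_2$ and geodesics $v_i\to u$ bounds, by simple connectivity, a singular disk tiled by squares, and a standard innermost-square analysis of this disk --- a ``staircase-straightening'' that uses the $3$-cube condition to collapse any fan of several squares incident to one vertex of the path being straightened into the two far faces of a $3$-cube --- pushes the two geodesics together one level at a time and yields a common neighbour $x$ of $v_1,v_2$ with $d(u,x)=k-2$. (The triangle condition is vacuous by bipartiteness.) Thus $G$ is bipartite and weakly modular.

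To conclude, the $3$-cube condition also excludes an induced $K_{2,3}$: were there one, on parts $\{a,b\}$ and $\{x,y,z\}$, the induced $4$-cycles $axby$, $axbz$, $aybz$ would be three squares, all containing $a$ and pairwise sharing an edge at $a$, so the $3$-cube condition would place them in a common $3$-cube --- impossible, since a $3$-cube has eight vertices while these squares span only the five vertices $a,b,x,y,z$. Finally, a connected bipartite graph satisfying the quadrangle condition and having no induced $K_{2,3}$ is a median graph: these hypotheses (via Lemma~\ref{convex} for the convexity of halfspaces) force every Djokovi\'c--Winkler class to be a convex cutset, so the halfspace map embeds $G$ isometrically into a hypercube (see Section~\ref{sec-median-CAT(0)}), and $K_{2,3}$-freeness makes the isometric image closed under coordinatewise medians; hence every triple of vertices of $G$ has a unique median, i.e., $G$ is median.

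The crux is the quadrangle-condition step: one must carry out the disk-filling and staircase-straightening carefully, the delicate case being when the tiling of the bounding disk presents a fan of several squares incident to a single vertex of the path being straightened --- exactly the configuration the $3$-cube condition is designed for, since it lets one replace the fan by the two opposite faces of a $3$-cube and thereby continue the induction on $k$. The remaining ingredients --- bipartiteness, the exclusion of induced $K_{2,3}$, and the passage from weak modularity to the median property --- are comparatively routine.
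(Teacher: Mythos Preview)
The paper does not prove this theorem: it is quoted from \cite{Ch_CAT} and used as a black box (in particular, inside the proof of Theorem~\ref{median_domain}). So there is no ``paper's own proof'' to compare against; your sketch is essentially the strategy of \cite{Ch_CAT} --- bipartiteness from simple connectivity, the quadrangle condition via disk diagrams with the $3$-cube condition handling fans, exclusion of $K_{2,3}$, and then the characterisation of median graphs as $K_{2,3}$-free modular graphs.

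Two points nevertheless deserve attention. First, your final step invokes Lemma~\ref{convex} to get convexity of halfspaces, but that lemma is stated in the paper only for median graphs; using it to \emph{prove} that $G$ is median is circular as written. The fix is standard --- the local-to-global convexity argument goes through for bipartite graphs satisfying the quadrangle condition (equivalently, modular graphs), which is exactly what you have established at that stage --- but you should say so rather than cite Lemma~\ref{convex}. Second, the quadrangle-condition step is the substantive part of the argument and your description (``staircase-straightening'', ``innermost-square analysis'') is a sketch of the right idea but not a proof: one has to set up minimal-area disk diagrams, identify a corner square at $w$, and show (using the $3$-cube condition when more than one square is incident there) that it can be pushed across, with an induction on area or on $k$. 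The necessity direction and the $K_{2,3}$ exclusion are fine as you wrote them.
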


A \emph{midcube} of the $d$-cube $c$, with $d\geq 1$, is the isometric subspace
obtained by restricting exactly one of the coordinates of $d$ to 0.
Note that a midcube is a $(d-1)$-cube.  The midcubes $a$ and $b$ of $X$ are \emph{adjacent}
if they have a common face, and a \emph{hyperplane} $H$
of $X$ is a subspace that is a maximal connected union of midcubes such that, if
$a,b\subset H$ are midcubes, either $a$ and $b$ are disjoint or
they are adjacent.  Equivalently, a hyperplane $H$ is a maximal connected union of
midcubes such that, for each cube $c$, either $H\cap c=\emptyset$
or $H\cap c$ is a single midcube of $c$.

\begin{theorem}[\!\!\cite{Sa}] \label{Sageev} Each hyperplane $H$ of a CAT(0) cube complex $X$
is a CAT(0) cube complex of dimension at most $\dimension{X}-1$
and  ${X}\setminus H$ consists of exactly two components, called \emph{halfspaces}.
\end{theorem}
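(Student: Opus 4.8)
The plan is to translate the statement, via Theorem~\ref{CAT(0)=median}, into the median graph $G:=X^{(1)}$ and to exploit the calculus of $\Theta$-classes and convex splits recalled above. First I would pin down the bijection between hyperplanes of $X$ and $\Theta$-classes of $G$. Since opposite edges of a square are $\Theta'$-related, the edges of any $d$-cube of $X$ fall into exactly $d$ ``directions'', each lying in a single $\Theta$-class; hence, fixing a class $\Theta_i$, a cube of $X$ either contains no edge of $\Theta_i$, or exactly one of its directions lies in $\Theta_i$, in which case the cube splits as $c^{-}\times[-1,1]$ along that direction and carries a single midcube $m_c=c^{-}\times\{0\}$ transverse to $\Theta_i$. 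The union $H_i$ of all these midcubes $m_c$ is a maximal family of pairwise adjacent-or-disjoint midcubes, that is, a hyperplane, and conversely every hyperplane of $X$ is some $H_i$. I would also record the elementary fact --- proved via uniqueness of the gate (unique closest point) projection onto the convex halfspaces of $\Theta_i$ --- that no vertex of $G$ is incident to two edges of $\Theta_i$, since two such neighbours would be distinct gates of that vertex into one halfspace.

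For the separation statement, let $\{A_i,B_i\}$ be the convex split of $G$ associated with $\Theta_i$, so that the edges of $G$ joining $A_i$ to $B_i$ are precisely those of $\Theta_i$. A cube of $X$ with no $\Theta_i$-edge lies entirely on one side, with $c\cap H_i=\varnothing$; a cube $c=c^{-}\times[-1,1]$ transverse to $\Theta_i$ has its two opposite faces $c^{-}\times\{-1\}$ and $c^{-}\times\{+1\}$ in the two different halfspaces. I would then set $C_A$ (resp.\ $C_B$) to be the union, over all cubes $c$ of $X$, of the part of $c\setminus H_i$ lying on the $A_i$-side (resp.\ the $B_i$-side). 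The assignment of a face shared by two cubes agrees whether computed in one cube or the other, so $C_A$ and $C_B$ are well defined, and a routine cube-by-cube check shows they are open, disjoint, and cover $X\setminus H_i$. Finally $C_A$ contains the geometric realization of the induced subgraph $G[A_i]$, which is connected because $A_i$ is convex, and every point of $C_A$ is joined within its own cube to a vertex of $A_i$; hence $C_A$ is path-connected, and likewise $C_B$. Thus $X\setminus H_i$ has exactly the two components $C_A$ and $C_B$, the halfspaces.

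For the structure of $H:=H_i$: the midcubes glue into a cube complex (a face of $m_c$ is the midcube of a face of $c$) of dimension $\le\dim X-1$, since a midcube of a $d$-cube is a $(d-1)$-cube. Sending the midpoint of a $\Theta_i$-edge to its endpoint in $A_i$ identifies the vertex set of $H$ with $\partial A_i:=\{x\in A_i:\ x\text{ lies on a }\Theta_i\text{-edge}\}$; using the no-double-incidence fact and the fact that induced $4$-cycles of a median graph bound squares, one checks that this extends to a graph isomorphism $H^{(1)}\cong G[\partial A_i]$, two $\Theta_i$-edges being opposite sides of a common square exactly when their $A_i$-endpoints are adjacent. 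The crux is to prove that $\partial A_i$ is convex in $G$. For this I would identify $\partial A_i$ with the image $p_{A_i}(B_i)$ of $B_i$ under the gate projection onto $A_i$ (for $y\in B_i$, the last edge of any geodesic from $y$ to its gate in $A_i$ must be a $\Theta_i$-edge, for otherwise the gate would not be a closest point), and then invoke the standard fact that in a median graph the gate projection carries convex sets to convex sets (equivalently, the boundary of a halfspace is convex). Granting this, $G[\partial A_i]$ is a convex, hence median, subgraph of $G$, so by Theorem~\ref{CAT(0)=median} it is the $1$-skeleton of a CAT(0) cube complex; and since --- again by the no-double-incidence fact, the parallelism bijection $\partial A_i\cong\partial B_i$, and cube-completion in median graphs --- the $k$-cubes of $H$ correspond exactly to the $k$-dimensional cube subgraphs of $H^{(1)}$, that CAT(0) cube complex is $H$ itself. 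Hence $H$ is a CAT(0) cube complex of dimension $\le\dim X-1$. (Conceptually, the same input shows the carrier $N(H)=\bigcup\{c:\ c\cap H\ne\varnothing\}$ is a convex subcomplex isomorphic to $H\times[-1,1]$ whose two ``ends'' are $C_A$ and $C_B$; one may finish this way as well, with $H=H\times\{0\}$ a convex, hence CAT(0), subspace of $N(H)$.)

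The step I expect to be the main obstacle is exactly this last one --- proving $H$ simply connected (equivalently, $G[\partial A_i]$ median) without a circular appeal to properties of hyperplanes. Everything above is arranged so that the genuine work is concentrated in the convexity of $\partial A_i=p_{A_i}(B_i)$, which depends only on the gate/projection calculus for a pair of complementary halfspaces: uniqueness of gates, and stability of convexity under gate projection. The two-component statement, by contrast, is a nearly formal consequence of the cited description of $\Theta$-classes as convex cutsets together with the cube-by-cube bookkeeping, and the dimension bound is immediate.
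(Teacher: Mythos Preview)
The paper does not prove this theorem. Theorem~\ref{Sageev} is stated with a citation to Sageev~\cite{Sa} and no proof is given; it is quoted as a known structural result about CAT(0) cube complexes, used later only as background. So there is no ``paper's own proof'' to compare your proposal against.

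That said, your sketch is a reasonable median-graph route to the result, and it is compatible with the toolkit the paper sets up (the $\Theta$-relation, convex splits, Theorem~\ref{CAT(0)=median}). The identification of hyperplanes with $\Theta$-classes, the cube-by-cube decomposition of $X\setminus H$ into the two halfspace-sides, and the dimension bound are all straightforward as you say. You have correctly located the only substantive step: showing $\partial A_i$ is convex in $G$ (equivalently, that the carrier $N(H)$ is convex), which is what forces $H^{(1)}$ to be median and hence $H$ to be CAT(0). Your plan to get this from the gate map $p_{A_i}$ carrying convex sets to convex sets is the standard one and works; just be aware that ``gate projections preserve convexity'' is itself a nontrivial median-graph fact that you would need to supply or cite, so make sure you are not hiding the work there.
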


A 1-cube $e$ (an edge) is \emph{dual} to the hyperplane $H$ if the
0-cubes of $e$ lie in distinct halfspaces of $X \setminus H$, i.e., if
the midpoint of $e$ is in a midcube contained in $H$.  The relation
``dual to the same hyperplane'' is an equivalence relation on the set
of edges of $X$; denote this relation by $\Theta$ and denote by
$\Theta(H)$ the equivalence class consisting of 1-cubes dual to the
hyperplane $H$ ($\Theta$ is precisely the parallelism relation on the
edges of the median graph $X^{(1)}$).

\subsection{Domains versus median graphs/CAT(0) cube complexes}\label{sec-dom-med}

Theorems 2.2 and 2.3 of Barth\'elemy and Constantin \cite{BaCo} establish
the following bijection between event structures and pointed median graphs
(in \cite{BaCo},  event structures are called sites):

\begin{theorem}[\!\!\cite{BaCo}] \label{median_domain} The (undirected) Hasse diagram
of the domain $({\mathcal D}({\mathcal E}),\subseteq)$ of  any event structure ${\mathcal E}=(E,\le, \#)$ is a
median graph. Conversely, for any median graph $G$ and
any basepoint $v$ of $G$,  the pointed median graph $G_v$ is isomorphic to the Hasse diagram of
the domain of an event structure.
\end{theorem}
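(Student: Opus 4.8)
The plan is to prove both directions by exhibiting explicit correspondences and checking the relevant axioms, using the canonical partial order $\le_v$ and the Djokovi\'c--Winkler relation $\Theta$ recalled above. For the first direction, suppose ${\mathcal E}=(E,\le,\#)$ is an event structure and $G$ is the Hasse diagram of $({\mathcal D}({\mathcal E}),\subseteq)$. I would first observe that two configurations $c,c'$ are at distance $k$ in $G$ exactly when $|c\,\triangle\, c'|=k$; this follows because any maximal chain between $c\wedge c'$ (their intersection, which is again a configuration) and $c$ has length $|c\setminus c'|$ — one adds the events of $c\setminus c'$ one at a time in any order compatible with $\le$, and downward-closedness plus conflict-freeness guarantee each intermediate set is a configuration — and symmetrically for $c'$, and no shorter path can exist because each edge changes the symmetric difference with a fixed vertex by exactly one. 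Consequently $G$ is connected, bipartite, and isometrically embeds in the hypercube on $E$ via $c\mapsto c$. The median of three configurations $x,y,z$ is then the natural candidate $m=(x\cap y)\cup(y\cap z)\cup(z\cap x)$: I would check that $m$ is conflict-free (any two events of $m$ already co-occur in one of $x,y,z$) and downward-closed (if $e\in m$, say $e\in x\cap y$, and $e'\le e$, then $e'\in x$ and $e'\in y$ so $e'\in m$), hence a configuration, and that it lies on shortest paths between each pair, so it realizes the interval intersection; uniqueness is automatic in a bipartite graph isometrically embedded in a hypercube. This shows $G$ is median.

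For the converse, let $G$ be a median graph with basepoint $v$, and form the pointed median graph $G_v$ with its canonical order $\le_v$. I would build an event structure whose events are the parallelism classes $\{\Theta_i : i\in I\}$ of $\Theta$, with associated convex splits $\{A_i,B_i\}$ labelled so that $v\in A_i$. The embedding $\varphi$ sends each vertex $u$ to the set $\varphi(u)=\{i\in I : \Theta_i \text{ separates } u \text{ from } v\}$ of classes crossed on any shortest $(v,u)$-path, and by Theorem~\ref{avann} (or directly) the image $\{\varphi(u):u\in V\}$ is exactly the set of configurations of the event structure we are about to define. Following Barth\'elemy--Constantin I would define the causal order by $\Theta_i \le \Theta_j$ iff $B_j\subseteq B_i$ (equivalently, every shortest path from $v$ that crosses $\Theta_j$ must first cross $\Theta_i$), and the conflict $\Theta_i\#\Theta_j$ iff $B_i\cap B_j=\varnothing$ while neither contains the other. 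The work is to verify the event-structure axioms: $\downarrow\!\Theta_i$ is finite because $\varphi(u)$ is finite for the vertex $u$ on the far side of $\Theta_i$ nearest $v$; conflict inheritance ($\Theta_i\#\Theta_j$ and $\Theta_j\le\Theta_k$ imply $\Theta_i\#\Theta_k$) follows from $B_k\subseteq B_j$ and the halfspace relations; and irreflexivity/symmetry are clear. Finally one checks that a subset $S\subseteq I$ is a configuration of this event structure precisely when $S=\varphi(u)$ for some $u$, and that inclusion of configurations matches adjacency/ordering in $G_v$ — here I would use convexity of halfspaces (from the $\Theta$-theory quoted above) to see that a downward-closed conflict-free set of classes determines a unique vertex, namely $\bigcap_{i\in S}B_i\cap\bigcap_{i\notin S}A_i$, which is nonempty and a single vertex by the Helly property of halfspaces in median graphs.

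The main obstacle is the converse direction, specifically showing that the candidate relations $\le$ and $\#$ on parallelism classes actually reconstruct $G_v$ as the full domain — i.e., that \emph{every} configuration arises as some $\varphi(u)$ and conversely. This is where one genuinely needs the structure theory of median graphs: the fact that $\Theta$-classes are convex cutsets, that halfspaces are convex and satisfy the Helly property, and that a vertex is determined by the set of halfspaces containing it. One must also rule out "spurious" configurations — downward-closed conflict-free sets of classes with no corresponding vertex — and the cleanest route is to show such a set would violate convexity of some halfspace or the 3-cube condition of Theorem~\ref{CAT(0)=median1}. I would handle this by induction on $|S|$: given a configuration $S$ and an event $\Theta_i$ maximal in $S$, the set $S\setminus\{\Theta_i\}$ is a configuration corresponding by induction to a vertex $u$, and I would produce the edge of $G$ in class $\Theta_i$ outgoing from $u$ using local convexity (Lemma~\ref{convex}) to confirm that crossing it lands in all of $B_i$ while leaving the other halfspace memberships of $u$ unchanged. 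The remaining verifications — determinism and concurrency of the induced edge-labeling, and that the map is a poset isomorphism — are then routine.
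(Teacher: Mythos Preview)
Your proposal is correct, but for the forward direction you take a genuinely different route from the paper. The paper does \emph{not} verify the median property directly; instead it invokes the local-to-global characterization of Theorem~\ref{CAT(0)=median1}: it shows (i) that the square complex of $\mathcal D(\mathcal E)$ is simply connected, via a lexicographic induction on cycles that pushes down maximal configurations through squares, and (ii) that the 3-cube condition holds, by a case analysis of the four possible orientations of three pairwise-adjacent squares. Your argument --- isometric embedding into the hypercube via $c\mapsto c$, then checking that the majority configuration $m=(x\cap y)\cup(y\cap z)\cup(z\cap x)$ is conflict-free and downward-closed, with uniqueness inherited from the hypercube --- is more elementary and entirely self-contained; it avoids Theorem~\ref{CAT(0)=median1} altogether. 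The paper's approach, by contrast, is chosen deliberately to illustrate the CAT(0)/cube-complex viewpoint that drives the rest of the article, and the authors say explicitly that they give this alternative proof because it ``can be of independent interest.''

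For the converse direction you follow essentially the same construction as the paper (events are $\Theta$-classes, causality is separation from $v$, conflict is non-crossing with no separation). You are in fact more careful than the paper on the one delicate point: showing that every abstract configuration $S$ of the constructed event structure is realized by a vertex. The paper's proof simply asserts this (``any configuration $c$ of $\mathcal E_v$ consists of exactly those $\Theta_i$ that separate $v$ from the vertex representing $c$''), relying on the cited source~\cite{BaCo}; your inductive plan via maximal events and local convexity (Lemma~\ref{convex}) is a reasonable way to fill that in.
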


The first part of this theorem first establishes that each event domain is a median semilattice (in fact,
the conditions (i) and (ii) of a median semilattice are often taken as the definition of a domain,
see for example,  \cite{BaDaRa,WiIntro})  and follows from Avann's Theorem \ref{avann}.
The bijection between domains of event structures
and median semilattices is equivalent to the bijection between domains of event structures and
prime algebraic coherent partial orders
established in \cite{NiPlWi}. With the help of Theorem \ref{CAT(0)=median1}, we can  provide
an alternative proof of the first part of
Theorem \ref{median_domain}, which we hope can be of independent interest. Since we will use
it further, we also recall the construction
of an event structure from a pointed median graph presented in \cite{BaCo}.

\begin{proof}[Proof of Theorem \ref{median_domain}]
  To prove that the square complex of an event domain ${\mathcal
    D}:={\mathcal D}({\mathcal E})$ is simply connected one has to
  show that any cycle $\sigma$ of the Hasse diagram of $\mathcal D$ is
  0-homotopic.  We proceed by lexicographic induction on the pair
  $(n_1(\sigma),n_2(\sigma))$, where $n_1(\sigma)$ is the maximum
  cardinality of a configuration of $\sigma$ and $n_2(\sigma)$ is the
  number of configurations (vertices) of $\sigma$ of size
  $n_1(\sigma)$.  Let $c$ be a configuration of $\sigma$ of maximum
  size $n_1(\sigma)$. Then the neighbors $c',c''$ of $c$ in $\sigma$
  have cardinality $n_1(\sigma)-1$, say $c'=c\setminus \{ e'\}$ and
  $c''=c\setminus \{ e''\}$. If $e' = e''$, then let $\sigma'$ be the
  obtained from $\sigma$ by removing $c$. If $e' \neq e''$, then the
  set $c_0:=c\setminus \{ e',e''\}$ is conflict-free and downward
  closed, thus $c_0$ is a configuration. As a result, the
  configurations $c,c',c_0,c''$ define a square. In this case, let
  $\sigma'$ be the cycle obtained obtained from $\sigma$ by replacing
  $c$ by $c_0$. Note that there is an elementary homotopy from
  $\sigma$ to $\sigma'$ via the square $cc'c_0c''$. In both cases, if
  $n_2(\sigma)>1$, then $n_1(\sigma')=n_1(\sigma)$ and
  $n_2(\sigma')=n_2(\sigma)-1$. If $n_2(\sigma)=1$, then
  $n_1(\sigma')=n_1(\sigma)-1$. In both cases, by induction hypothesis
  we may assume that $\sigma'$ is 0-homotopic. Since there exists an
  elementary homotopy from $\sigma$ to $\sigma'$ in both cases, the
  cycle $\sigma$ is also 0-homotopic. To show that the graph of
  ${\mathcal D}$ satisfies the 3-cube condition, one can see that
  there exist four possible embeddings of the three squares in
  $\mathcal D$. In each of these cases one can directly conclude that
  the vertex $v$ completing them to a 3-cube must be a configuration
  (see Figure~\ref{fig-3-cube}). Indeed, in the first three cases, the
  set $c(v)$ of events corresponding to this vertex is included in a
  configuration, thus it is conflict--free. It can be also easily seen
  that in all three cases $c(v)$ is downward-closed, i.e., $c(v)$ is a
  configuration. In the last case, $c(v)=\sigma\cup \{ e_1,e_2,e_3\}$.
  Each pair of events of $c(v)$ is contained in one of the
  configurations $\sigma\cup\{e_i,e_j\}, i,j\in \{ 1,2,3\}, i\ne j$,
  whence $c(v)$ is conflict-free.  Pick any $e\in c(v)$.  If $e\in
  \sigma$, then $\downarrow \!e\subset \sigma$. If $e\in \{
  e_1,e_2,e_3\},$ say $e=e_1$, then $\downarrow \!e\subset \sigma\cup
  \{ e_1\}$. In both cases we conclude that $\downarrow \!e\subset
  c(v)$, i.e., $c(v)$ is downward-closed, whence $c(v)$ is a
  configuration.

\begin{figure}
  \includegraphics[page=5,scale=0.75]{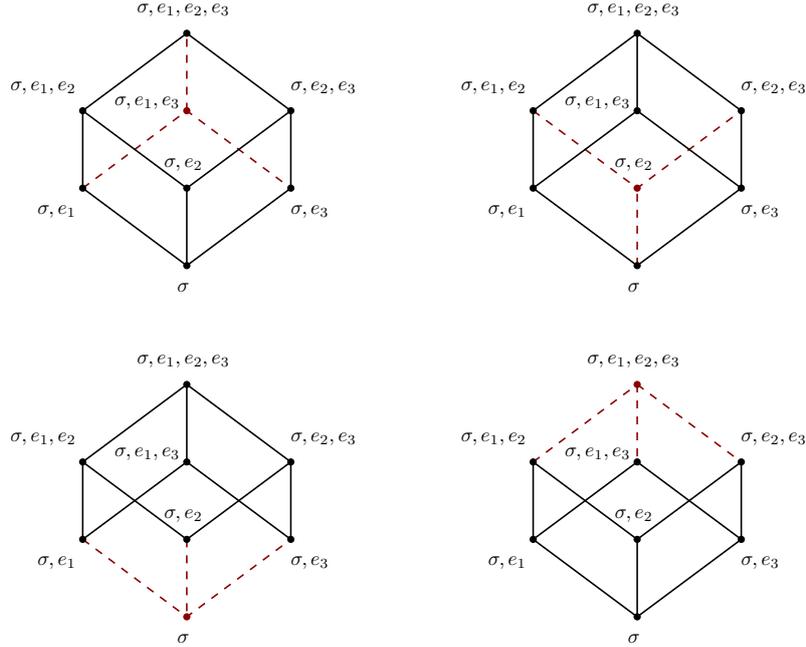}
  \caption{The four possible embeddings of the three squares in  the
    3-cube condition in $\mathcal D$.}
  \label{fig-3-cube}
\end{figure}

 Now, we recall how to define the event structure occurring in the
 second part of the theorem.  Suppose that $v$ is an arbitrary but
 fixed basepoint of a median graph $G.$ For an equivalence class
 $\Theta_i, i\in I,$ we denote by $\{ A_i,B_i\}$ the associated convex
 split, and suppose without loss of generality that $v\in A_i.$ Two
 equivalence classes $\Theta_i$ and $\Theta_j$ are said to be {\it
   crossing} if there exists a 4-cycle $C$ of $G$ with two opposite
 edges in $\Theta_i$ and two other opposite edges in $\Theta_j$
 ($\Theta_i$ and $\Theta_j$ are called \emph{non-crossing} otherwise).
 An equivalence class $\Theta_i$ {\it separates} the basepoint $v$
 from the equivalence class $\Theta_j$ if $\Theta_i$ and $\Theta_j$
 are non-crossing and all edges of $\Theta_j$ belong to $B_i.$ The
 event structure ${\mathcal E}_v=(E,\le,\#)$ associated with a pointed
 median graph $G_v$ is defined in the following way. The set $E$ of
 events is the set $\{ \Theta_i: i\in I\}$ of the equivalence classes
 of $\Theta$. The causal dependency is defined by setting $\Theta_i\le
 \Theta_j$ if and only if $\Theta_i=\Theta_j$ or $\Theta_i$ separates
 $v$ from $\Theta_j$. The conflict relation is defined by setting
 $\Theta_i\#\Theta_j$ if and only if $\Theta_i$ and $\Theta_j$ are
 non-crossing, $\Theta_i$ does not separate $v$ from $\Theta_j$ and
 $\Theta_j$ does not separates $v$ from $\Theta_i.$ Finally, the
 concurrency relation is defined by setting $\Theta_i\| \Theta_j$ if
 and only if $\Theta_i$ and $\Theta_j$ are crossing.  Since each
 parallelism class $\Theta_i$ partitions $G$ into two parts $A_i$ and
 $B_i$, it easily follows that ${\mathcal E}_v=(E,\le,\#)$ satisfies
 the axiom $\Theta_i\# \Theta_j$ and $\Theta_j\le \Theta_k$ imply
 $\Theta_i\# \Theta_k$; consequently ${\mathcal E}_v$ is an event
 structure.  To prove that $G_v$ is the Hasse diagram of the domain
 ${\mathcal D}({\mathcal E}_v)$ of the event structure ${\mathcal
   E}_v$, consider an isometric embedding of $G$ into a hypercube such
 that $v$ corresponds to $\varnothing$. Then any other vertex $u$ of
 $G$ is encoded by a finite set $U$ consisting of all $\Theta_i$ such
 that $\Theta_i$ separates the vertices $v$ and $u$. Since
 $\Theta_i\in U$ and $\Theta_j\le \Theta_i$ implies that $\Theta_j$
 also separates $v$ from $u$, and thus $\Theta_j$ belongs to $U$, we
 conclude that $U$ is downward-closed. If $\Theta_i\# \Theta_j$ and
 $\Theta_i\in U$, then necessarily $\Theta_i$ and $v$ belong to a
 common halfspace defined by $\Theta_j$. Therefore $\Theta_j$ does not
 separate $u$ from $v$. This shows that $U$ is conflict-free, i.e.,
 $U$ is a configuration of ${\mathcal E}_v$. Conversely, any
 configuration $c$ of ${\mathcal E}_v$ consists of exactly those
 $\Theta_i$ that separate $v$ from the vertex representing $c$.  This
 concludes the proof of Theorem \ref{median_domain}.
\end{proof}

Rephrasing the construction of an event structure from a pointed median graph presented
in the proof of Theorem \ref{median_domain}, to each CAT(0) cube complex $X$ and each vertex $v$ of
$X$ one can associate an event structure ${\mathcal E}_v$ such that the domain of ${\mathcal E}_v$
is the 1-skeleton of $X$ pointed at $v$. The events of ${\mathcal E}_v$ are the
hyperplanes of $X$.  Hyperplanes $H$ and $H'$ define concurrent events if and only if they cross,
and $H\leq H'$ if and only if $H=H'$ or $H$ separates $H'$ from $v$.  The events
defined by $H$ and $H'$ are in conflict if and only if $H$ and $H'$ do not cross and neither
separates the other from $v$.  


\subsection{Related work}\label{sec-cex-RT}
The link between event domains, median graphs, and CAT(0) cube
complexes allows a more geometric and combinatorial approach to
several questions on event structures (and to work only with CAT(0)
cube complexes viewed as event domains). For example, this allowed
\cite{Ch_nice} to disprove the so-called {\it nice labeling
  conjecture} of Rozoy and Thiagarajan \cite{RoTh} asserting that any
event structure of finite degree admit a finite nice labeling.  The
topological dimension $\dimension{X}$ of a CAT(0) cube complex $X$
corresponds to the maximum number of pairwise concurrent events of
${\mathcal E}_v$ and to the clique number of the intersection graph of
hyperplanes of $X$. The degree deg$({\mathcal E}_v)$ of the event
structure ${\mathcal E}_v$ is equal to the maximum out-degree of a
vertex in the canonical order $\le_v$ of the 1-skeleton of $X$ (and is
equal to the clique number of a so-called pointed contact graph of
hyperplanes of $X$~\cite{Ch_nice,ChHa}). In particular,
$\dimension{X}\le \degree({\mathcal E}_v)$.  Notice also that the
maximum degree of a vertex of $X$ is upper bounded by
$\degree({\mathcal E}_v)+\dimension(X)\le 2\degree({\mathcal E}_v)$
and is equal to the clique number of the contact graph of hyperplanes
of $X$ (the intersection graph of the carriers of $X$)
\cite{ChHa,Ha}. Using this terminology, a nice labeling of the event
structure ${\mathcal E}_v$ is equivalent to a coloring of the pointed
contact graph of $X$. Using this combinatorial reformulation and the
example of Burling \cite{Bu} of families of axis-parallel boxes of
${\mathbb R}^3$ with no three pairwise intersecting boxes and
arbitrarily high chromatic number of the intersection graph,
\cite{Ch_nice} describes an example of a CAT(0) 4-dimensional cube
complex with maximum degree 12 and infinite chromatic number of the
pointed contact graph, thus providing a counterexample to the nice labeling
conjecture of Rozoy and Thiagarajan \cite{RoTh}. On the other hand, it
is shown in \cite{ChHa} that the nice labeling conjecture is true for
event structures whose domains are 2-dimensional (i.e., event
structures not containing three pairwise concurrent events).

\section{Directed NPC complexes}\label{sec-directed-npc}

Since we can define event structures from their domains, universal
covers of NPC complexes represent a rich source of event structures.
To obtain regular event structures, it is natural to consider
universal covers of finite NPC complexes. Moreover, since domains of
event structures are directed, it is natural to consider universal
covers of NPC  complexes whose edges are directed. However, the
resulting directed universal covers are not in general domains of
event structures. In particular, the domains corresponding to pointed
median graphs given by Theorem~\ref{median_domain} cannot be obtained
in this way. In order to overcome this difficulty, we introduce
directed median graphs and directed NPC complexes. Using these
notions, one can naturally define regular event structures starting
from finite directed NPC complexes.

\subsection{Directed median graphs}

A {\it directed median graph} is a pair $(G,o)$, where $G$ is a median graph and $o$ is
an orientation of the edges of $G$  in a such a way that opposite edges of squares of
$G$ have the same direction. By transitivity of $\Theta$, all edges from the same parallelism
class $\Theta_i$ of $G$ have the same direction. Since each $\Theta_i$ partitions
$G$ into two parts, $o$ defines a partial order $\prec_o$ on the vertex-set of $G$.
For a vertex $v$ of $G$, let ${\mathcal F}_{o}(v,G)=\{ x\in V: v \prec_o x\}$ be the principal
filter of $v$ in the partial order $(V(G),\prec_o)$.  For any canonical basepoint
order $\le_v$ of $G$, $(G,\le_v)$ is  a directed median graph. The converse is obviously
not true: the 4-regular tree $F_4$ directed so that each vertex has two incoming and two
outgoing arcs is a directed median graph which is not induced by a basepoint order.

\begin{lemma} \label{directed-median} For any vertex $v$ of a directed median graph $(G,o)$, the following holds:
\begin{enumerate}[(i)]
\item ${\mathcal F}_{o}(v,G)$ induces a convex subgraph  of $G$;
\item the restriction of the partial order $\prec_o$ on  ${\mathcal F}_{o}(v,G)$ coincides with the restriction of the
canonical basepoint order $\le_v$ on ${\mathcal F}_{o}(v,G)$;
\item ${\mathcal F}_{o}(v,G)$  together with $\prec_o$ is the domain of an event structure;
\item for any vertex $u\in {\mathcal F}_{o}(v,G)$, the principal filter ${\mathcal F}_{o}(u,G)$ is included in ${\mathcal F}_{o}(v,G)$
and ${\mathcal F}_{o}(u,G)$ coincides with the principal filter of $u$ with respect to the canonical basepoint order $\le_v$ on ${\mathcal F}_{o}(v,G)$.
\end{enumerate}
\end{lemma}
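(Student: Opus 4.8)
The plan is to reason throughout in terms of the parallelism classes $\{\Theta_i:i\in I\}$ of $G$ and their convex splits. Since all edges of a fixed class $\Theta_i$ are oriented in the same way, we may name the split $\{A_i,B_i\}$ so that every edge of $\Theta_i$ points from its endpoint in $A_i$ to its endpoint in $B_i$; then, unwinding the definition, $v\prec_o x$ holds iff no $\Theta_i$ separates $v$ from $x$ with $v\in B_i$ and $x\in A_i$, which is exactly the requirement that $x\in B_i$ for every $i$ with $v\in B_i$. This yields the clean description
\[
\mathcal{F}_o(v,G)=\bigcap\{\,B_i:i\in I,\ v\in B_i\,\}.
\]
As each halfspace $B_i$ is convex and an intersection of convex subgraphs of a median graph is again convex (and the intersection contains $v$), (i) follows at once. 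One could instead verify local convexity via Lemma~\ref{convex} by analysing the orientation of a square $xzyw$, but the halfspace description is shorter.

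For (ii) I would combine this with the standard description of intervals in a median graph by separating classes. Write $\Delta(a,b)$ for the set of classes that separate $a$ from $b$; from the isometric embedding $u\mapsto\Delta(v,u)$ recalled in Section~\ref{sec-median-CAT(0)} one gets that, for $u,w$ in $G$, $u\le_v w$ (i.e.\ $u\in I(v,w)$) iff $\Delta(v,u)\subseteq\Delta(v,w)$. Now fix $u,w\in\mathcal{F}_o(v,G)$. If $\Delta(v,u)\subseteq\Delta(v,w)$ and $\Theta_i$ separates $u$ from $w$, then $\Theta_i\in\Delta(v,w)\setminus\Delta(v,u)$, so $u$ lies on $v$'s side of $\Theta_i$; since $v\prec_o w$ forces $v\in A_i$ and $w\in B_i$, we get $u\in A_i$ and $w\in B_i$, i.e.\ $u\prec_o w$. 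Conversely, if $u\prec_o w$ and $\Theta_i\in\Delta(v,u)$, then $v\prec_o u$ gives $v\in A_i$, $u\in B_i$; if $w$ were on $v$'s side ($w\in A_i$) then $\Theta_i$ would separate $u$ from $w$, and $u\prec_o w$ would force $u\in A_i$, contradicting $u\in B_i$; hence $w\in B_i$, so $\Theta_i\in\Delta(v,w)$, giving $\Delta(v,u)\subseteq\Delta(v,w)$ and $u\le_v w$. I expect this orientation-versus-separating-classes bookkeeping to be the crux of the lemma.

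Parts (iii) and (iv) are then formal. By (i), $G':=\mathcal{F}_o(v,G)$ is a convex subgraph of $G$, hence isometric and closed under the median operation, hence a median graph; being isometric, $I_{G'}(v,x)=I_G(v,x)$ for every $x\in G'$, so the canonical basepoint order of the median graph $G'$ at $v$ is the restriction of $\le_v$ to $G'$, which by (ii) equals $\prec_o$ on $\mathcal{F}_o(v,G)$. Applying the second part of Theorem~\ref{median_domain} to $G'$ pointed at $v$ then shows that $(\mathcal{F}_o(v,G),\prec_o)$ is the domain of an event structure, proving (iii). For (iv), if $u\in\mathcal{F}_o(v,G)$ then transitivity of $\prec_o$ gives $\mathcal{F}_o(u,G)\subseteq\mathcal{F}_o(v,G)$, and the principal filter of $u$ in $(\mathcal{F}_o(v,G),\le_v)$ is $\{x\in\mathcal{F}_o(v,G):u\le_v x\}$, which by (ii) equals $\{x\in\mathcal{F}_o(v,G):u\prec_o x\}=\mathcal{F}_o(u,G)$ by the inclusion just shown.
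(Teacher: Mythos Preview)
Your proof is correct and largely parallels the paper's. For (i) you arrive at exactly the same description $\mathcal{F}_o(v,G)=\bigcap\{B_i:v\in B_i\}$ that the paper proves; the paper spends a bit more effort showing that ``reachable by a directed path from $v$'' coincides with this intersection (by checking that every geodesic from $v$ to a point of the intersection is directed), whereas you take the halfspace description as the definition of $\prec_o$ and get (i) for free. Parts (iii) and (iv) are handled identically in both proofs.

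The genuine difference is in (ii). The paper argues with geodesics: for $u'\le_v u$ it routes a shortest $(v,u)$-path through $u'$ and uses the observation from (i) that such paths are fully directed; for the converse it runs a minimal-counterexample argument on the length of a shortest directed $(u',u)$-path, reducing to a neighbour $w$ of $u$ and using bipartiteness to derive a contradiction. Your argument instead stays entirely at the level of separating classes, using the hypercube-embedding characterisation $u\le_v w\iff\Delta(v,u)\subseteq\Delta(v,w)$ and comparing it directly with the halfspace description of $\prec_o$. Both are standard median-graph techniques; your version is a bit more uniform and avoids the induction, while the paper's version makes the connection between $\prec_o$ and directed geodesics explicit, which is what one actually uses downstream when talking about Hasse diagrams of the domain.
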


\begin{proof}
 To $(i)$: For each parallelism class $\Theta_i$, let $A_i,B_i$ be the
 two convex subgraphs separated by $\Theta_i$ and suppose without loss
 of generality that all edges of $\Theta_i$ are directed from $A_i$ to
 $B_i$.  Let $B$ be the intersection of all  of the
 $B_i$s containing the vertex $v$. We assert that ${\mathcal
   F}_{o}(v,G)$ coincides with $B$.  ${\mathcal F}_{o}(v,G)$ consists
 of all vertices $u$ of $G$ such that there is a path from $v$ to $u$
 in the Hasse diagram of $\prec_o$. Since each $\Theta_i$ is a cutset
 of $G$ and all edges of $\Theta_i$ are directed from $A_i$ to $B_i$,
 we conclude that ${\mathcal F}_{o}(v,G)\subseteq B$. Conversely, let
 $u\in B$ and pick any shortest path
 $P(v,u)=(v_0=v,v_1,\ldots,v_{k-1},v_k=u)$ in $G$ between $v$ and
 $u$. We claim that {\it all edges of $P(v,u)$ are directed from $v$
   to $u$}, yielding $u\in {\mathcal F}_{o}(v,G)$. Pick any edge
 $v_jv_{j+1}$ of $P(v,u)$; suppose that $v_jv_{j+1}$ belongs to the
 parallelism class $\Theta_i$. By convexity of $A_i$ and $B_i$,
 necessarily $\Theta_i$ separates the vertices $v$ and $u$. Since
 $v,u\in B$, this implies that $v\in A_i$ and $u\in B_i$, i.e., the
 edge $v_jv_{j+1}$ is directed from $v_j$ to $v_{j+1}$.

To $(ii)$: First suppose that $u',u\in {\mathcal F}_{o}(v,G)$ and $u'\le_v u$.
This implies that $u'\in I(v,u)$.  Let  $P(v,u)=(v_0=v,v_1,\ldots,v_{k-1},v_k=u)$ be a shortest
path of $G$ between $v$ and $u$ passing via $u'$. By what we have shown in $(i)$,
in the Hasse diagram of $\prec_o$ all the edges of $P(v,u)$ are directed from $v$ to $u$. This
implies that all the edges of the subpath of  $P(v,u)$ comprised between $u'$ and
$u$ are directed from $u'$ to $u$, yielding $u' \prec_o u$. To prove the converse assertion,
suppose by way of contradiction that ${\mathcal F}_{o}(v,G)$ contains two vertices
$u',u$ such that $u' \prec_o u$ however  $u'\le_v u$  is not true, i.e., $u'\notin I(v,u)$.
Among all vertices $u$ for which this holds, suppose that $u$ is chosen so that to
minimize the length of a shortest directed path from $u'$ to $u$.  Let $w$ be a
neighbor of $u$ on a shortest directed path from $u'$ to $u$.  Since $u' \preceq_o w$,
$w\in {\mathcal F}_{o}(v,G)$ and from the choice of $u$ it follows that $u'\le_v w$, i.e., $u'\in I(v,w)$.
Since $G$ is  bipartite, either $w\in I(v,u)$ or $u\in I(v,w)$ holds. If
$w\in I(v,u)$, since $u'\in I(v,w)$, we conclude that $u'\in I(v,u)$,
a contradiction. Therefore
$u\in I(v,w)$. Since the edge $wu$ is oriented from $w$ to $u$ and $u$
lies on a shortest path from $v$ to $w$,  we obtain a contradiction
with the fact that all the edges
of such a shortest path must be directed from $v$ to $w$. This
contradiction establishes that the partial orders $\prec_o$ and $\le_v$ coincide on ${\mathcal F}_{o}(v,G)$.

To $(iii)\&(iv)$: By $(i)$, the subgraph $G'$ of $G$ induced by ${\mathcal F}_{o}(v,G)$
is a median graph. By $(ii)$, the partial order $\prec_o$ coincides on $G'$  with the
canonical basepoint order $\le_v$.
By  Theorem \ref{median_domain}, $(V(G'),\prec_o)$ is the domain of an event structure,
establishing $(iii)$. Finally, $(iv)$ is an immediate consequence of $(ii)$.
\end{proof}

\subsection{Directed NPC cube complexes}
A {\it directed NPC complex} is a pair $(Y,o)$, where $Y$ is a NPC
complex and $o$ is an orientation of the edges of $Y$ in a such a way
that the opposite edges of the same square of $Y$ have the same
direction. Such an orientation $o$ of the edges of a NPC complex $Y$
is called an \emph{admissible} orientation of $Y$. Note that there
exist NPC complexes that do not admit any admissible orientation:
consider a M{\"o}bius band of squares, for example.  An admissible
orientation $o$ of $Y$ induces in a natural way an orientation
$\tildo$ of the edges of its universal cover $\tY$, so that
$(\tY,\tildo)$ is a directed CAT(0) cube complex and
$(\tY^{(1)},\tildo)$ is a directed median graph.  A \emph{directed
  plane} in a directed CAT(0) cube complex $Y$ is a plane $\Pi$ in $Y$
such that for any vertex $(i,j)$ of the grid $\ZZ^2$ tiling $\Pi$,
$(i,j)$ is the source of the edges $(i,j)(i,j+1)$ and $(i,j)(i+1,j)$.

In the following, we need to consider directed colored NPC complexes
and directed colored median graphs. A coloring $\nu$ of a directed NPC
complex $(Y,o)$ is an arbitrary map $\nu: E(Y) \to \Upsilon$ where
$\Upsilon$ is a set of colors. Note that a labeling is a coloring, but
not the converse: labelings are precisely the colorings in which
opposite edges of any square have the same color.  In the following,
we will denote a directed colored NPC complexes by bold letters
like $\bY = (Y,o,\nu)$.  Sometimes, we need to forget the colors and
the orientations of the edges of these complexes. For a complex $\bY$,
we denote by $Y$ the complex obtained by forgetting the colors and the
orientations of the edges of $\bY$ ($Y$ is called the \emph{support}
of $\bY$), and we denote by $(Y,o)$ the directed complex obtained by
forgetting the colors of $\bY$. We also consider directed colored
median graphs that will be the $1$-skeletons of directed colored
CAT(0) cube complexes. Again we will denote such directed colored
median graphs by bold letters like $\bG = (G,o,\nu)$. Note that
(uncolored) directed NPC complexes can be viewed as directed colored
NPC complexes where all edges have the same color.

When dealing with directed colored NPC complexes, we consider only
homomorphisms that preserve the colors and the directions of the
edges. More precisely, $\bY' = (Y',o',\nu')$ is a covering of $\bY =
(Y,o,\nu)$ via a covering map $\varphi$ if $Y'$ is a covering of $Y$
via $\varphi$ and for any edge $e \in E(Y')$ directed from $s$ to $t$,
$\nu(\varphi(e)) = \nu'(e)$ and $\varphi(e)$ is directed from
$\varphi(s)$ to $\varphi(t)$. Since any coloring $\nu$ of a directed
colored NPC complex $Y$ leads to a coloring of its universal cover
$\tY$, one can consider the colored universal cover $\tbY =
(\tY,\tildo,\tnu)$ of $\bY$.

When we consider principal filters in directed colored median graphs
$\bG = (G,o,\nu)$ (in particular, when $G$ is the $1$-skeleton of the
universal cover $\tbY$ of a directed colored NPC complex $\bY$), we
say that two filters are isomorphic if there is an isomorphism between
them that preserves the directions and the colors of the edges.

We now formulate the crucial regularity property of directed colored
median graphs $(\tY^{(1)},\tildo,\tnu)$ when $(Y,o,\nu)$ is
finite.

\begin{lemma} \label{regular-cube}
If $\bY = (Y,o,\nu)$ is a finite directed colored NPC complex, then
$\tbY^{(1)} = (\tY^{(1)},\tildo,\tnu)$ is a directed median
graph with at most $|V(Y)|$ isomorphism types of colored principal
filters.
In particular, if $(Y,o)$ is a finite directed NPC complex, then
$(\tY^{(1)},\tildo)$ is a directed median graph with
at most $|V(Y)|$ isomorphism types of principal filters.
\end{lemma}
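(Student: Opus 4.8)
The plan is to read off the ``directed median graph'' part from results already established and to obtain the bound on isomorphism types from the transitivity of the deck transformation group of the universal cover. Since $\bY=(Y,o,\nu)$ is an NPC complex, Theorem~\ref{Gromov} gives that its universal cover $\tY$ is a CAT(0) cube complex, and Theorem~\ref{CAT(0)=median} then gives that $\tY^{(1)}$ is a median graph. The admissible orientation $o$ lifts, along the covering map $\varphi\colon\tbY\to\bY$, to an orientation $\tildo$ of the edges of $\tY$; since $\varphi$ carries each square of $\tY$ onto a square of $Y$ and opposite edges of squares of $Y$ agree in direction, the same holds in $\tY$, so $(\tY^{(1)},\tildo)$ is a directed median graph (this is precisely the discussion preceding the statement), with lifted coloring $\tnu=\nu\circ\varphi$. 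Thus only the count of isomorphism types of colored principal filters needs an argument.

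The key step is the claim that \emph{the isomorphism type of the colored directed principal filter of a vertex $\tv$ of $\tY^{(1)}$ depends only on $\varphi(\tv)\in V(Y)$.} I would prove it using that, since $\varphi\colon\tbY\to\bY$ is a universal covering, its group $\Gamma$ of deck transformations (the automorphisms $g$ with $\varphi\circ g=\varphi$) acts transitively on every fiber $\varphi^{-1}(v)$, $v\in V(Y)$ (a standard fact of covering-space theory, cf.\ \cite[Section~1.3]{Hat}). Each such $g$ is an automorphism of the cube complex $\tY$ — locally it coincides with $(\varphi_{|\St(g(\tv),\tY)})^{-1}\circ\varphi_{|\St(\tv,\tY)}$, a composition of isomorphisms of stars — and in particular an automorphism of the graph $\tY^{(1)}$. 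Moreover $g$ preserves colors, since $\tnu(g(e))=\nu(\varphi(g(e)))=\nu(\varphi(e))=\tnu(e)$ for every edge $e$; and $g$ preserves the orientation $\tildo$: if $e$ is $\tildo$-directed from $s$ to $t$, then $g(e)$ and $e$ both project onto the $o$-directed edge $\varphi(e)=\varphi(g(e))$ from $\varphi(s)=\varphi(g(s))$ to $\varphi(t)=\varphi(g(t))$, hence $g(e)$ is $\tildo$-directed from $g(s)$ to $g(t)$. As the principal filter ${\mathcal F}_{\tildo}(\tv,\tY^{(1)})=\{x:\tv\prec_{\tildo}x\}$ is defined purely from the directed-graph structure, any color- and direction-preserving automorphism $g$ satisfies $g({\mathcal F}_{\tildo}(\tv,\tY^{(1)}))={\mathcal F}_{\tildo}(g(\tv),\tY^{(1)})$ and restricts to an isomorphism of the corresponding colored directed filters. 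Hence, given $\tv,\tv'$ with $\varphi(\tv)=\varphi(\tv')$, choosing $g\in\Gamma$ with $g(\tv)=\tv'$ produces a color- and direction-preserving isomorphism ${\mathcal F}_{\tildo}(\tv,\tY^{(1)})\to{\mathcal F}_{\tildo}(\tv',\tY^{(1)})$, proving the claim.

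Since $\tY^{(1)}$ has exactly $|V(Y)|$ fibers, the claim at once bounds the number of isomorphism types of colored principal filters by $|V(Y)|$, which is the first assertion; the ``in particular'' statement follows by regarding an uncolored directed NPC complex $(Y,o)$ as the colored complex in which every edge receives a single color. I expect the only delicate points to be (i) invoking transitivity of the deck group on the fibers — which is exactly why one must pass to the \emph{universal} cover rather than an arbitrary cover of $\bY$ — and (ii) checking that deck transformations respect $\tildo$ and $\tnu$; but (ii) is automatic because $\tildo$ and $\tnu$ are pullbacks along $\varphi$, so both follow from $\varphi\circ g=\varphi$. No facts about median graphs beyond those recalled in this section are needed.
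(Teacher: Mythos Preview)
Your proof is correct, and the overall architecture matches the paper's: both reduce the count of filter types to showing that any two vertices in the same fiber of $\varphi$ have isomorphic colored directed filters, and both obtain this from an automorphism of $\tbY$ carrying one vertex to the other. The difference lies in how that automorphism is produced. You invoke the standard covering-space fact that the deck transformation group of a universal cover acts transitively on each fiber (Hatcher, Proposition~1.39), and then observe that deck transformations automatically preserve $\tildo$ and $\tnu$ because these are pullbacks along $\varphi$. The paper instead builds the automorphism $f_{\tu,\tu'}$ by hand: Claim~\ref{claim-rev-chemin} lifts paths uniquely, and Claim~\ref{claim-rev-carre} shows---via an inductive argument that exploits the median structure of $\tY^{(1)}$ to shorten cycles---that the lifted endpoint is independent of the chosen path, so the map is well-defined. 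In effect the paper re-proves, in a purely combinatorial way, exactly the piece of covering-space theory you quote. Your route is shorter and more conceptual; the paper's route is self-contained and avoids appealing to the topological theory of covering spaces, working entirely within the combinatorics of median graphs.
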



\begin{proof}
  Consider a covering map $\varphi : \tbY =
  (\tY,\tildo,\tnu) \to \bY = (Y,o,\nu)$. We first show
  that $(\tY^{(1)},\tildo)$ is a directed median
  graph. By Theorem~\ref{median_domain}, $\tY^{(1)}$ is a
  median graph. Since the image of a square in $\tY^{(1)}$
  is a square in $Y$, since $\varphi$ preserves the direction of the
  edges, and since two opposite edges of a square of $Y$ have the same
  direction, any two opposite edges of a square of
  $\tY^{(1)}$ have the same direction. Consequently,
  $\tY^{(1)}$ is a directed median graph.

  Consider now two vertices $\tu, \tu' \in V(\tY)$ such that
  $\varphi(\tu) = \varphi(\tu')$. In the following, we show that
  $\cF_o(\tu,\tbY^{(1)})$ and $\cF_o(\tu',\tbY^{(1)})$ are isomorphic,
  which implies that there are at most $|V(Y)|$ different isomorphism
  types of colored principal filters by Lemma~\ref{directed-median}.  The
  proof is based on the two following claims. The first claim can be
  easily proved by induction on the length of $P$.

  \begin{claim}\label{claim-rev-chemin}
    For any path $P = (\tu = \tu_0, \tu_1, \ldots, \tu_k)$ in $\tY$,
    there exists a unique path $P' = (\tu' = \tu_0', \tu_1', \ldots,
    \tu_k')$ such that $\varphi(\tu'_i) = \varphi(\tu_i)$ for all $0
    \leq i \leq k$.
  \end{claim}

  \begin{claim}\label{claim-rev-carre}
    For any four paths $P_1 = (\tu = \tu_0, \tu_1, \ldots, \tu_k)$,
    $P_1' = (\tu' = \tu_0', \tu_1', \ldots, \tu_k')$, $P_2 = (\tu =
    \tv_0, \tv_1, \ldots, \tv_\ell)$, and $P_2' = (\tu' = \tv_0',
    \tv_1', \ldots, \tv_\ell')$ in $\tY$ such that $\varphi(\tu'_i) =
    \varphi(\tu_i)$ for all $0 \leq i \leq k$ and $\varphi(\tv'_j) =
    \varphi(\tv_j)$ for all $0 \leq j \leq \ell$, we have $\tu_k =
    \tv_\ell$ if and only if $\tu'_k = \tv'_\ell$.
  \end{claim}

  \begin{proof}
    For each $0 \leq i \leq k$, let $u_i = \varphi(\tu_i) =
    \varphi(\tu_i')$ and for each $0 \leq j \leq \ell$, let $v_j =
    \varphi(\tv_j) = \varphi(\tv_j')$.  Suppose that $\tu_k =
    \tv_\ell$ and consider the cycle $C = P_1 \cdot \overline{P_2} =
    (\tu = \tu_0, \tu_1, \ldots, \tu_k = \tv_\ell, \tv_{\ell-1},
    \ldots, \tv_1, \tv_0 = \tu)$.  Let $n_1(C) = \max\{d(\tu,\tw): \tw
    \in C\}$ and $n_2(C) = |\{\tw: d(\tu,\tw) = n_1(C)\}|$. We prove
    the claim by lexicographic induction on $(n_1(C),n_2(C))$.  If
    $n_1(C) = 0$, then $k = \ell = 0$ and we are done. Suppose now
    that $n_1(C) \geq 1$.

    Suppose that there exists $0 < i < k$ such that $d(\tu, \tu_i) =
    n_1(C)$.  Suppose first that $\tu_{i-1} = \tu_{i+1}$. Then, since
    $\varphi$ is an isomorphism between St$(\tu_i,\tY)$ and
    St$(u_i,Y)$ and between St$(\tu_i',\tY)$ and St$(u_i,Y)$,
    necessarily $\tu_{i-1}' = \tu_{i+1}'$.  By induction hypothesis
    applied to the paths $P_3 = (\tu = \tu_0, \tu_1, \ldots, \tu_{i-1}
    = \tu_{i+1}, \ldots , \tu_k)$, $P'_3 = (\tu' = \tu_0', \tu_1',
    \ldots, \tu_{i-1}' = \tu_{i+1}', \ldots , \tu_k')$, $P_2$, and
    $P_2'$, we have $\tu_k' = \tv_\ell'$ and we are done.  Assume now
    that $\tu_{i-1} \neq \tu_{i+1}$.  Since the graph $\tY^{(1)}$ is
    bipartite, we have $d(\tu, \tu_{i-1}) = d(\tu,\tu_{i+1}) = n_1(C)
    - 1$ and $d(\tu_{i-1}, \tu_{i+1}) = 2$.  Since $\tY^{(1)}$ is
    median, there exists $\tw_i$ such that $d(\tw_i,\tu_{i-1}) =
    d(\tw_i,\tu_{i+1}) = 1$ and $d(\tw_i,\tu)=n_1(C)-2$. Note that
    $\tu_i\tu_{i-1}\tw_i\tu_{i+1}$ is a square in
    $\St(\tu_i,\tY)$. Consequently, since $\varphi$ is an
    isomorphism between St$(\tu_i,\tY)$ and St$(u_i,Y)$ and between
    St$(\tu_i',\tY)$ and St$(u_i,Y)$, there exists $\tw_i'$ such that
    $\tu_i'\tu_{i-1}'\tw_i'\tu_{i+1}'$ is a square in
    $\St(\tu_i',\tY)$ and $\varphi(\tw_i') = \varphi(\tw_i)$.
    By induction hypothesis applied to the paths $P_4 = (\tu = \tu_0,
    \tu_1, \ldots, \tu_{i-1}, \tw_i, \tu_{i+1}, \ldots , \tu_k)$,
    $P'_4 = (\tu' = \tu_0', \tu_1', \ldots, \tu_{i-1}', \tw_i',
    \tu_{i+1}', \ldots , \tu_k')$, $P_2$, and $P_2'$, we have $\tu_k'
    = \tv_\ell'$ and we are done. Analogously, if there exists $0 < j
    < \ell$ such that $d(\tu, \tv_j) = n_1(C)$, we can show that
    $\tu_k' = \tv_\ell'$.

    Suppose now that $n_1(C) = d(\tu,\tu_k = \tv_\ell)$ and $n_2(C) =
    1$. Since $\tY^{(1)}$ is bipartite, $d(\tu, \tu_{k-1}) = d(\tu,
    \tv_{\ell-1}) = n_1(C) - 1$ and $d(\tu_{k-1}, \tv_{\ell-1}) =
    2$. Since $\tY^{(1)}$ is median, there exists $\tw$ such that
    $d(\tw,\tu_{k-1}) = d(\tw,\tv_{\ell-1}) = 1$ and
    $d(\tw,\tu)=n_1(C)-2$. Since $\varphi(\tu_{k-1}) =
    \varphi(\tu'_{k-1})$, there exists a unique neighbor $\tw'$ of
    $\tu'_{k-1}$ such that $\varphi(\tw') = \varphi(\tw)$. Similarly,
    there exists a unique neighbor $\tw''$ of $\tv'_{\ell -1}$ such
    that $\varphi(\tw'') = \varphi(\tw)$. By induction hypothesis
    applied to the paths $P_5 = (\tu = \tu_0, \tu_1, \ldots,
    \tu_{k-1},\tw)$, $P_5' = (\tu' = \tu_0', \tu_1', \ldots,
    \tu_{k-1}',\tw')$, $P_6 = (\tu = \tv_0, \tv_1, \ldots,
    \tv_{\ell-1}, \tw)$, and $P_6' = (\tu' = \tv_0', \tv_1', \ldots,
    \tv_{\ell-1}', \tw'')$, we have that $\tw' = \tw''$. Consequently,
    since $\varphi$ induces a bijection between $\St(\tw,\tY)$
    and $\St(\varphi(\tw),Y)$ and between $\St(\tw',\tY)$
    and $\St(\varphi(\tw'),Y) = \St(\varphi(\tw),Y)$,
    necessarily $\tu'_k = \tv'_\ell$ and we are done.
  \end{proof}

  Define now a map $f_{\tu,\tu'}$ from $V(\tY)$ to $V(\tY)$ such that
  $f_{\tu,\tu'}(\tu) = \tu'$. For any vertex $\tv \in \tY$, consider a
  path $P = (\tu = \tu_0, \tu_1, \ldots, \tu_k=\tv)$ from $\tu$ to
  $\tv$ in $\tY$. By Claim~\ref{claim-rev-chemin}, there exists a
  unique path $P' = (\tu' = \tu_0', \tu_1', \ldots, \tu_k'=\tv')$ such
  that $\varphi(\tu'_i) = \varphi(\tu_i)$ for all $0 \leq i \leq k$
  and we let $f_{\tu,\tu'}(\tv) = \tv'$. Note that $\varphi(f(\tv)) =
  \varphi(\tv)$ and by Claim~\ref{claim-rev-carre}, $f(\tv)$ is
  independent of the choice of the path $P$. Similarly, we can define
  a map $f_{\tu',\tu}$ from $\tY$ to $\tY$ such that
  $f_{\tu',\tu}(\tu')=\tu$ and one can easily see that $f_{\tu,\tu'}
  \circ f_{\tu',\tu} = f_{\tu',\tu} \circ f_{\tu,\tu'} =
  \mathrm{id}$. Consequently, $f_{\tu,\tu'}$ is a bijection from
  $V(\tY)$ to $V(\tY)$.  Moreover, from the definition of
  $f_{\tu,\tu'}$ and from Claim~\ref{claim-rev-chemin}, if
  $\tv_1\tv_2$ is an edge of $\tY$, then
  $f_{\tu,\tu'}(\tv_1)f_{\tu,\tu'}(\tv_2)$ is also an edge of
  $\tY$. Since the orientation of $\tv_1\tv_2$ in $(\tY,\tildo)$ is
  the same as the orientation of the edge
  $v_1v_2=\varphi(\tv_1)\varphi(\tv_2)$ in $(Y,o)$, it is also the
  same as the orientation of $f_{\tu,\tu'}(\tv_1)f_{\tu,\tu'}(\tv_2)$
  in $(\tY,\tildo)$. Furthermore since $\tnu(\tv_1\tv_2) =
  \nu(v_1v_2) = \tnu(f_{\tu,\tu'}(\tv_1)f_{\tu,\tu'}(\tv_2))$,
  $f_{\tu,\tu'}$ preserves the colors of the edges of $\tY$.
  Consequently, $f_{\tu,\tu'}$ is an automorphism of $\tbY =
  (\tY,\tildo,\tnu)$ such that $f_{\tu,\tu'}(\tu) = \tu'$, and
  thus $\cF_o(\tu,\tbY^{(1)})$ and $\cF_o(\tu',\tbY^{(1)})$
  are isomorphic.
\end{proof}

\begin{proposition} \label{regular-finite-npc}
  Consider a finite (uncolored) directed NPC complex $(Y,o)$. Then for
  any vertex $\tv$ of the universal cover $\tY$ of $Y$, the
  principal filter ${\mathcal F}_{\tildo}(\tv,\tY^{(1)})$ with the
  partial order $\prec_{\tildo}$ is the domain of a regular event
  structure with at most $|V(Y)|$ different isomorphism types of
  principal filters.

\end{proposition}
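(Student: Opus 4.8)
The plan is to combine Lemma~\ref{regular-cube} (applied to $(Y,o)$ viewed as a directed colored NPC complex in which all edges receive the same color) with Lemma~\ref{directed-median} and the bijection of Theorem~\ref{median_domain}. The three ingredients to assemble are: (1) that ${\mathcal F}_{\tildo}(\tv,\tY^{(1)})$ with $\prec_{\tildo}$ really is the domain of an event structure $\mathcal E_{\tv}$; (2) that $\mathcal E_{\tv}$ has finite index, i.e. only finitely many isomorphism types of futures of configurations; and (3) that $\mathcal E_{\tv}$ has finite degree.

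For (1): by Theorem~\ref{Gromov} the universal cover $\tY$ of the finite NPC complex $Y$ is a CAT(0) cube complex, so by Theorem~\ref{CAT(0)=median} its $1$-skeleton $\tY^{(1)}$ is a median graph; the admissible orientation $o$ lifts to an admissible orientation $\tildo$ of $\tY$, hence $(\tY^{(1)},\tildo)$ is a directed median graph. Then Lemma~\ref{directed-median}(iii) gives exactly that ${\mathcal F}_{\tildo}(\tv,\tY^{(1)})$ together with $\prec_{\tildo}$ is the domain of an event structure, which we call $\mathcal E_{\tv}$.

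For (2), the key point is that isomorphism types of futures of configurations of $\mathcal E_{\tv}$ correspond, via the construction recalled in the proof of Theorem~\ref{median_domain} and via Lemma~\ref{directed-median}(iv), to isomorphism types of principal filters ${\mathcal F}_{\tildo}(\tu,\tY^{(1)})$ for vertices $\tu\in {\mathcal F}_{\tildo}(\tv,\tY^{(1)})$. Indeed, the future of a configuration $c$ of $\mathcal E_{\tv}$ is the principal filter of the corresponding vertex $\tu$ in $({\mathcal F}_{\tildo}(\tv,\tY^{(1)}),\prec_{\tildo})$, and by Lemma~\ref{directed-median}(iv) this coincides with ${\mathcal F}_{\tildo}(\tu,\tY^{(1)})$ (with its canonical basepoint order at $\tu$, which is again $\prec_{\tildo}$), so its associated event structure has domain ${\mathcal F}_{\tildo}(\tu,\tY^{(1)})$. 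By Lemma~\ref{regular-cube} applied to the uncolored $(Y,o)$, there are at most $|V(Y)|$ isomorphism types of such principal filters, hence at most $|V(Y)|$ isomorphism types of futures of configurations. This gives finite index, bounded by $|V(Y)|$.

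For (3), finite degree: the degree of $\mathcal E_{\tv}$ equals the maximum out-degree of a vertex of $({\mathcal F}_{\tildo}(\tv,\tY^{(1)}),\prec_{\tildo})$, which is at most the maximum out-degree of a vertex of $(\tY^{(1)},\tildo)$, which in turn is bounded by the maximum degree of a vertex of $\tY^{(1)}$; since $\varphi_{|\St(\tv,\tY)}$ is an isomorphism onto $\St(\varphi(\tv),Y)$ and $Y$ is finite, the latter is bounded by the maximum vertex degree in the finite complex $Y$, hence finite. Combining (1)--(3), $\mathcal E_{\tv}$ is a regular event structure whose domain is ${\mathcal F}_{\tildo}(\tv,\tY^{(1)})$ with at most $|V(Y)|$ isomorphism types of principal filters. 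The only slightly delicate step is (2): one must carefully check that "future of a configuration inside the filter" matches "principal filter of the target vertex in the ambient median graph," which is precisely the content of Lemma~\ref{directed-median}(iv), so the main obstacle is really bookkeeping rather than a new idea.
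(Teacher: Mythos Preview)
Your proof is correct and follows essentially the same approach as the paper: the paper's proof simply invokes Theorem~\ref{Gromov}, Lemma~\ref{directed-median}(iii)--(iv), and Lemma~\ref{regular-cube}, exactly as you do. You have additionally spelled out the finite-degree argument (via the local isomorphism of stars under the covering map), which the paper leaves implicit but which is indeed needed for regularity.
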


\begin{proof} By Theorem \ref{Gromov}, $\tY$ is a CAT(0) cube complex.
Combining Lemma \ref{directed-median} $(iii)$-$(iv)$ and Lemma  \ref{regular-cube},
we deduce that $({\mathcal F}_{\tildo}(\tv,\tY^{(1)}),\prec_{\tildo})$
is the domain of a regular event structure with at most $|V(Y)|$ different isomorphism types of principal filters.
\end{proof}

We will call an event structure ${\mathcal E}=(E,\le, \#)$ and its domain ${\mathcal D}({\mathcal E})$
{\it strongly regular} if ${\mathcal D}({\mathcal E})$ is isomorphic
to a principal filter of the universal cover of some finite directed NPC complex. In view of
Proposition \ref{regular-finite-npc},
any strongly regular event structure is regular. 

\section{Thiagarajan's conjecture and special NPC complexes}\label{thiagu-special}

\subsection{Special NPC complexes} Consider an NPC complex $Y$, let $\tY$ be its universal
cover and let $\varphi: \tY\rightarrow Y$ be a covering map.
Analogously to CAT(0) cube complexes, one can define the parallelism
relation $\Theta'$ on the set of edges $E(Y)$ of $Y$ by setting that
two edges of $Y$ are in relation $\Theta'$ iff they opposite edges of
a common 2-cube of $Y$. Let $\Theta$ be the reflexive and transitive
closure of $\Theta'$ and let $\{ \Theta_i: i\in I\}$ denote the
equivalence classes of $\Theta$.  For an equivalence class $\Theta_i$,
the hyperplane $H_i$ associated to $\Theta_i$ is the NPC complex
consisting of the midcubes of all cubes of $Y$ containing one edge of
$\Theta_i$. The edges of $\Theta_i$ are \emph{dual} to the hyperplane
$H_i$. Let $\cH(Y)$ be the set of hyperplanes of $Y$.


The hyperplanes of an NPC complex $Y$ do not longer satisfy the nice properties of
the hyperplanes of CAT(0) cube complexes: they do not longer partition the complex in
exactly two parts, they may self-intersect, self-osculate, two hyperplanes may at
the same time cross and osculate, etc. Haglund and Wise \cite{HaWi1} detected five
types of pathologies which may occur in an NPC  complex (see Figure~\ref{fig-special}):
\begin{enumerate}[(a)]
\item self-intersecting hyperplane;
\item one-sided hyperplane;
\item directly self-osculating hyperplane;
\item indirectly self-osculating hyperplane;
\item a pair of hyperplanes, which both intersect and osculate.
\end{enumerate}

\begin{figure}
  \begin{center}
  \includegraphics[page=12,scale=0.75]{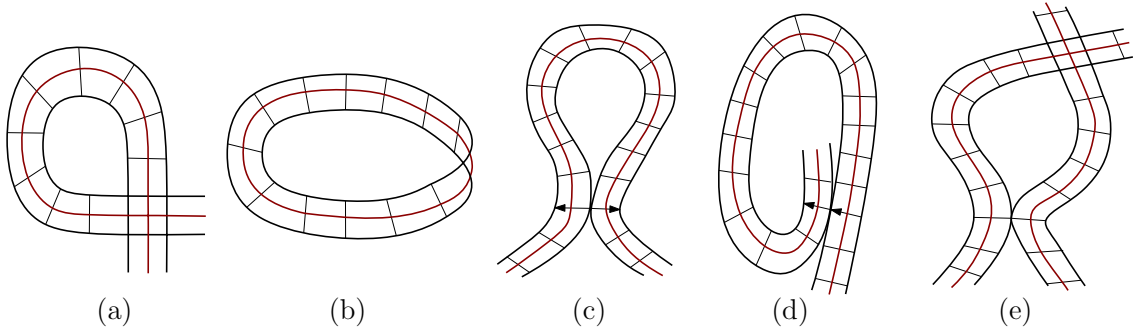}
  \end{center}
  \caption{A self-intersecting hyperplane (a), a one-sided hyperplane (b), a
    directly self-intersecting hyperplane (c), an indirectly
    self-intersecting hyperplane (d), and a pair of hyperplanes that
    inter-osculate (e).}
  \label{fig-special}
\end{figure}

We continue with the definition of each of these pathologies (in which
we closely follow \cite[Section 3]{HaWi1}). Two hyperplanes $H_1$ and
$H_2$ \emph{intersect} if there exists a cube $Q$ and two distinct
midcubes $Q_1$ and $Q_2$ of $Q$ such that $Q_1 \subseteq H_1$ and $Q_2
\subseteq H_2$, i.e., there exists a square with two consecutive edges
$e_1,e_2$ such that $e_1$ is dual to $H_1$ and $e_2$ is dual to $H_2$.

A hyperplane $H$ of $Y$ {\it self-intersects} if it contains more than
one midcube from the same cube, i.e., there exist two edges $e_1,e_2$
dual to $H$ that are consecutive in some square of $Y$ (see Figure~\ref{fig-special}(a)).

A hyperplane $H$ is {\it two-sided} if $N(H)$ is homeomorphic to the
product $H\times (-1,1)$, and there is a combinatorial map
$H\times[-1,1]\rightarrow X$ mapping $H\times \{0\}$ identically to
$H$. The hyperplane is {\it one-sided} if it is not two-sided (see
Figure~\ref{fig-special}(b)). As noticed in \cite[p.1562]{HaWi1},
requiring that the hyperplanes of $Y$ are two-sided is equivalent to
defining an orientation on the dual edges of $H$ such that all sources
of such edges belong to one of the sets $H\times \{-1\}, H\times
\{1\}$ and all sinks belong to the other one. This orientation is
obtained by taking the equivalence relation generated by elementary
parallelism relation: declare two oriented edges $e_1$
and $e_2$ of $Y$ {\it elementary parallel} if there is
a square of $Y$ containing $e_1$ and
$e_2$ as opposite sides and oriented in the same
direction. Notice that if $(Y,o)$ is a directed NPC complex, then
every hyperplane $H$ of $Y$ is two-sided. Conversely, if every
hyperplane $H$ of $Y$ is two-sided, then $Y$ admits admissible
orientations (one can choose an admissible orientation for each
hyperplane independently).

Let $v$ be a vertex of $Y$ and let ${e_1},{e_2}$ be two distinct edges
incident to $v$ but such that ${e_1}$ and ${e_2}$ are not consecutive
edges in some square containing $v$. The hyperplanes $H_1$ and $H_2$
{\it osculate} at $(v,{e_1},{e_2})$ if $e_1$ is dual to $H_1$ and
$e_2$ is dual to $H_2$. The hyperplane $H$ {\it self-osculate} at
$(v,{e_1},{e_2})$ if $e_1$ and $e_2$ are dual to $H$.  Consider a
two-sided hyperplane $H$ and an admissible orientation $o$ of its dual
edges. Suppose that $H$ self-osculate at $(v,{e_1},{e_2})$. If $v$ is
the source of both $e_1$ and $e_2$ or the sink of both $e_1$ and
$e_2$, then we say that $H$ {\it directly self-osculate} at
$(v,{e_1},{e_2})$ (see Figure~\ref{fig-special}(c)). If $v$ is the
source of one of $e_1$, $e_2$, and the sink of the other, then we say
that $H$ {\it indirectly self-osculate} at $(v,{e_1},{e_2})$ (see
Figure~\ref{fig-special}(d)). Note that a self-osculation of a
hyperplane $H$ is either direct or indirect, and this is independent
of the orientation of the edges dual to $H$.

Two hyperplanes $H_1$ and $H_2$ {\it inter-osculate} if they both
intersect and osculate (see Figure~\ref{fig-special}(e)).

Haglund and Wise \cite[Definition 3.2]{HaWi1} called an NPC
complex $Y$ \emph{special} if its hyperplanes are two-sided, do not
self-intersect, do not directly self-osculate, and no two hyperplanes
inter-osculate.


\subsection{Trace labelings of special event structures}
Consider a finite NPC complex $Y$ and let $\cH = \cH(Y)$ be the set of
hyperplanes of $Y$. We define a canonical labeling $\lambda_\cH: E(Y)
\to \cH$ by setting $\lambda_\cH(e) = H$ if the edge $e$ is dual to
$H$. For any covering map $\varphi: \tY \to Y$, $\lambda_\cH$ is
naturally extended to a labeling $\tlambda_{\cH}$ of $E(\tY)$ where
$\tlambda_{\cH}(e) = \lambda_\cH(\varphi(e))$.

We show that strongly regular event structures obtained from finite
special cube complexes admit regular trace labellings.

\begin{proposition}\label{special-trace}
  A finite NPC complex $Y$ with two-sided hyperplanes is special if
  and only if there exists an independence relation $I$ on $\cH =
  \cH(Y)$ such that for any admissible orientation $o$ of $Y$, for any
  covering map $\varphi: \tY \to Y$, and for any principal filter $\cD
  = (\cF_{\tildo}(\tv,\tY^{(1)}),\prec_{\tildo})$ of
  $(\tY,\tildo)$, the canonical labeling $\tlambda_{\cH}$ is a
  regular trace labeling of $\cD$ with the trace alphabet $(\cH,I)$.
\end{proposition}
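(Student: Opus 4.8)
The plan is to prove the two directions of the equivalence separately, with the heart of the argument being the verification that the three conditions (LES1)--(LES3) for a trace labeling correspond precisely to the absence of the Haglund--Wise pathologies (restricted to those relevant for specialness: two-sidedness, no self-intersection, no direct self-osculation, no inter-osculation). Throughout I will use the dictionary from Section~\ref{sec-dom-med}: for a vertex $\tv$ of $\tY$, the events of the event structure with domain $\cD = (\cF_{\tildo}(\tv,\tY^{(1)}),\prec_{\tildo})$ are the hyperplanes of $\tY$ meeting $\cD$, two events are concurrent iff the corresponding hyperplanes of $\tY$ cross, in conflict iff they neither cross nor separate each other from $\tv$, and causally ordered iff one separates the other from $\tv$; and the canonical labeling $\tlambda_\cH$ sends an edge dual to a hyperplane $\tH$ of $\tY$ to $\varphi(\tH) \in \cH$. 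The key local observation is that since $\varphi$ restricts to an isomorphism $\St(\tv,\tY) \to \St(\varphi(\tv),Y)$ for every vertex, the local picture (two consecutive edges of a square, or two osculating edges at a vertex, together with their orientations) is faithfully transported between $\tY$ and $Y$. Hence a pathology is present in $\tY$ if and only if it is present in $Y$.

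\smallskip
\noindent\textbf{From special to a regular trace labeling.}
Assume $Y$ is special. Define $I \subseteq \cH \times \cH$ by declaring $(H,H') \in I$ (for $H \neq H'$) iff there is a square of $Y$ with consecutive edges dual to $H$ and to $H'$ respectively; set the rest of $D := (\cH\times\cH)\setminus I$ to be the dependence relation. Since $Y$ has no self-intersecting hyperplane, $I$ is irreflexive, and it is symmetric by construction. Now fix an admissible orientation $o$, a covering $\varphi$, a basepoint $\tv$, and let $\cE$ be the event structure with domain $\cD$. I would check:
(LES1): if $e \#_\mu e'$ in $\cE$, the corresponding hyperplanes $\tH, \tH'$ of $\tY$ are co-initial at some configuration, hence the dual edges share a vertex and are \emph{not} consecutive edges of a square through that vertex (they are in conflict, not concurrent), so they osculate; since $Y$ is special this osculation, when the two hyperplanes coincide, would be a direct self-osculation of $\varphi(\tH)$ — forbidden — so $\varphi(\tH) \neq \varphi(\tH')$, i.e. $\tlambda_\cH(e) \neq \tlambda_\cH(e')$.
(LES2): if $e \lessdot e'$ or $e \#_\mu e'$, the two events are never concurrent, so $\tH$ and $\tH'$ never cross in $\tY$; pushing down by $\varphi$, if $\varphi(\tH) = \varphi(\tH')$ we would need that hyperplane to self-intersect to have $(\varphi(\tH),\varphi(\tH'))\in I$ — forbidden — and if $\varphi(\tH) \neq \varphi(\tH')$ then $(\varphi(\tH),\varphi(\tH'))\in I$ would force a square in $Y$ with consecutive edges dual to these two hyperplanes, which lifts (via the star isomorphism at a suitable vertex) to a crossing in $\tY$, contradiction; so $(\tlambda_\cH(e),\tlambda_\cH(e')) \in D$.
(LES3): conversely, if $(\tlambda_\cH(e),\tlambda_\cH(e')) \in D$ but $e \| e'$ in $\cE$, then $\tH$ and $\tH'$ cross in $\tY$; their images under $\varphi$ then give either a self-intersection (if equal) or a square in $Y$ with consecutive dual edges (if distinct), either way forcing $(\tlambda_\cH(e),\tlambda_\cH(e')) \in I$ — contradiction. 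Finally, $\tlambda_\cH$ has finite index: by Proposition~\ref{regular-finite-npc} there are at most $|V(Y)|$ isomorphism types of principal filters of $(\tY,\tildo)$, and by Lemma~\ref{regular-cube} the isomorphisms $f_{\tu,\tu'}$ produced there are automorphisms of the \emph{colored} complex, where here the relevant ``color'' of an edge is exactly $\tlambda_\cH$; hence they preserve the labeled filters, so ${\cE}^{\tlambda_\cH}$ has finite index.

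\smallskip
\noindent\textbf{From a regular trace labeling to special.}
For the converse, suppose such an independence relation $I$ exists; I will show $Y$ cannot contain any of the four forbidden pathologies (two-sidedness is assumed). Suppose, for instance, that a hyperplane $H$ of $Y$ self-intersects: there is a square of $Y$ with two consecutive edges dual to $H$. Lift this square to $\tY$ through a vertex in the image of $\cF_{\tildo}(\tv,\cdot)$ (choosing $\tv$ and $\varphi$ appropriately — one can always translate the basepoint so that the square lies in the chosen filter, since filters of $(\tY,\tildo)$ cover all of $\tY$ up to the automorphisms of Lemma~\ref{regular-cube}); in $\tY$ this is a genuine square, so the two dual hyperplanes cross, hence the corresponding events are concurrent, so (LES3) forces their labels — both equal to $H$ — to satisfy $(H,H) \in I$, contradicting irreflexivity of $I$. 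If $H$ directly self-osculates at $(v,e_1,e_2)$, then in a lift the dual hyperplanes $\tH_1, \tH_2$ are distinct (the universal cover ``unwraps'' the osculation) but co-initial at a configuration and not crossing, hence $\tH_1 \#_\mu \tH_2$; since $\varphi(\tH_1) = \varphi(\tH_2) = H$, (LES1) is violated. The inter-osculation case is handled by combining the two: if $H_1, H_2$ both cross and osculate in $Y$, the crossing lifts to a crossing (concurrent events, so $(H_1,H_2)\in I$ by (LES3)), while the osculation lifts to a minimal conflict or an immediate-predecessor pair (so $(H_1,H_2)\in D$ by (LES2)), contradiction.

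\smallskip
\noindent\textbf{Main obstacle.}
I expect the delicate points to be two bookkeeping issues rather than conceptual ones. First, the lifting arguments require that a given square or osculation-configuration of $Y$ actually occurs inside the chosen principal filter $\cF_{\tildo}(\tv,\tY^{(1)})$ of the \emph{directed} universal cover; I need to argue that for every admissible orientation and every local configuration in $Y$, there is a choice of $\tv$ realizing it — this should follow from transitivity of the relevant automorphism group (Lemma~\ref{regular-cube}) together with the fact that every vertex of $\tY$ lies in some such filter. Second, in the direct-versus-indirect self-osculation distinction I must make sure the orientation $\tildo$ on $\tY$ that distinguishes $e \#_\mu e'$ from the immediate-predecessor relation matches the admissible orientation $o$ on $Y$ used to classify the self-osculation of $H$ as direct; this is exactly the content of the remark (following Figure~\ref{fig-special}) that two-sidedness yields a consistent orientation of dual edges, and that directness of a self-osculation is orientation-independent — so the match is automatic, but it needs to be spelled out carefully. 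Everything else is a routine translation through the bijections of Section~\ref{sec-median-CAT(0)}.
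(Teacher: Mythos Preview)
Your approach is the same as the paper's: define $I$ by crossing in $Y$, verify (LES1)--(LES3) by translating the local configurations in $\cD$ down to $Y$ via $\varphi$ and invoking the absence of pathologies, and for the converse show each pathology violates one of the axioms in some filter. The regularity argument via Lemma~\ref{regular-cube} is also exactly what the paper does.

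There is one concrete step that fails as written. In your (LES2) argument for the special $\Rightarrow$ trace direction, you say: if $(\varphi(\tH),\varphi(\tH'))\in I$, then there is a square in $Y$ with consecutive edges dual to these two hyperplanes, ``which lifts \dots\ to a crossing in $\tY$, contradiction.'' But the lift of that square is dual to \emph{some} pair of hyperplanes $\tH'',\tH'''$ of $\tY$ with $\varphi(\tH'')=\varphi(\tH)$ and $\varphi(\tH''')=\varphi(\tH')$; there is no reason these are $\tH$ and $\tH'$ themselves, so you get no contradiction with the fact that $\tH,\tH'$ do not cross. The correct argument (and the one the paper uses) stays entirely local: when $e\lessdot e'$ or $e\#_\mu e'$, the hyperplanes $\tH,\tH'$ \emph{osculate} at a specific vertex $\tu$ of $\tY$; pushing this down via the star isomorphism, $\varphi(\tH)$ and $\varphi(\tH')$ osculate at $\varphi(\tu)$ in $Y$; since $Y$ has no inter-osculation (and no self-intersection), $\varphi(\tH)$ and $\varphi(\tH')$ do not intersect, hence $(\varphi(\tH),\varphi(\tH'))\notin I$. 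No lifting back up is needed. Your ``main obstacle'' paragraph correctly anticipates that the delicate points are the orientation choices in the converse direction (the paper handles these by explicitly choosing $o$ so that the relevant edges are both outgoing at the osculation vertex), so once you repair (LES2) the proof goes through.
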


\begin{proof}
  Suppose first that there exists an independence relation $I
  \subseteq \cH^2$ such that for any admissible orientation $o$ of
  $Y$, for any covering map $\varphi: \tY \to Y$, and any principal
  filter $\cD$ of $(\tY,\tildo)$, $\tlambda_{\cH}$ is a regular
  trace labeling of $\cD$ with the trace alphabet $(\cH,I)$.

  If $Y$ contains a self-intersecting hyperplane $H$, then there exist
  a square $Q$ such that the four edges of $Q$ are dual to
  $H$. Consider an admissible orientation $o$ of $Y$ and note that
  there exist two edges $e_1, e_2$ in $Q$ that have the same source
  $v$. In $(\tY,\tildo)$, consider a vertex $\tv \in
  \varphi^{-1}(v)$ and note that $\tv$ has two outgoing edges $\te_1,
  \te_2$ such that $\varphi(\te_1) = e_1$ and $\varphi(\te_2) =
  e_2$. Since $\tlambda_\cH(\te_1) = \lambda_\cH(e_1) = \lambda_\cH(e_2)
  = \tlambda_\cH(\te_2)$, the labeling $\tlambda_\cH$ violates the
  determinism condition in the principal filter
  $(\cF_{\tildo}(\tv,\tY^{(1)}),\prec_{\tildo})$.

  If $Y$ contains a hyperplane $H$ that directly self-osculate at
  $(v,{e_1},{e_2})$, then there exists an orientation $o$ of $Y$ such
  that $e_1$ and $e_2$ have the same source $v$. In $(\tY,\tildo)$,
  consider a vertex $\tv \in \varphi^{-1}(v)$ and note that $\tv$ has
  two outgoing edges $\te_1, \te_2$ such that $\varphi(\te_1) = e_1$
  and $\varphi(\te_2) = e_2$. Since $\tlambda_\cH(\te_1) =
  \lambda_\cH(e_1) = \lambda_\cH(e_2) = \tlambda_\cH(\te_2)$, the
  labeling $\tlambda_\cH$ violates the determinism condition in the
  principal filter
  $(\cF_{\tildo}(\tv,\tY^{(1)}),\prec_{\tildo})$.



  Finally if $Y$ contains two hyperplanes $H_1$ and $H_2$ that
  inter-osculate, then they osculate at $(v,e_1,e_2)$ and they
  intersect on a square $Q$. We can choose an orientation $o$ of $Y$
  such that $v$ is the source of both $e_1$ and $e_2$. Then there
  exists a source $u$ in $Q$ that has two outgoing edges $e_1'$ and
  $e_2'$ that are parallel respectively to $e_1$ and $e_2$. Let $\tu
  \in \varphi^{-1}(u)$ and $\tv \in \varphi^{-1}(v)$. Let $\te_1,
  \te_2$ be the respective preimages of the edges $e_1,e_2$ such that
  $\tv$ is the source of $\te_1,\te_2$. Similarly, let $\te_1',\te_2'$
  be the respective preimages of the edges $e_1,e_2$ such that $\tu$
  is the source of $\te_1',\te_2'$. Note that $\tlambda_\cH(\te_1) =
  \lambda_\cH(e_1) = \lambda_\cH(e_1') = \tlambda_\cH(\te_1')$ and
  $\tlambda_\cH(\te_2) = \lambda_\cH(e_2) = \lambda_\cH(e_2') =
  \tlambda_\cH(\te_2')$. Consider the principal filters $\cD_v =
  (\cF_{\tildo}(\tv,\tY^{(1)}),\prec_{\tildo})$ and $\cD_u =
  (\cF_{\tildo}(\tu,\tY^{(1)}),\prec_{\tildo})$. In $\cD_v$,
  $\te_1$ and $\te_2$ correspond to two events that are in minimal
  conflict, and thus the pair $(\tlambda_\cH(\te_1),
  \tlambda_\cH(\te_2))$ does not belong to the independence relation
  $I$. On the other hand, in $\cD_u$, $\te_1'$ and $\te_2'$ belong to
  a square, and thus they correspond to two concurrent
  events. Consequently, the pair $(\tlambda_\cH(\te_1'),
  \tlambda_\cH(\te_2'))$ belongs to $I$. Since $\tlambda_\cH(\te_1) =
  \tlambda_\cH(\te_1')$ and $\tlambda_\cH(\te_2) =
  \tlambda_\cH(\te_2')$, we have a contradiction.




 Conversely,  suppose that $Y$ is a finite special NPC complex. We define the
  independence relation $I \subseteq \cH \times \cH$ as follows: $(H_1,H_2) \in
  I$ if and only if the hyperplanes $H_1$ and $H_2$ intersect. From
  its definition, the binary relation $I$ is symmetric. Since no
  hyperplane of $Y$ self-intersects, $I$ is also irreflexive, and thus
  $(\cH,I)$ is a finite trace alphabet.

  Consider an admissible orientation $o$ of $Y$, a vertex $\tv \in
  V(\tY)$, a covering map $\varphi: \tY \to Y$ and consider the
  principal filter $\cD =
  (\cF_{\tildo}(\tv,\tY^{(1)}),\prec_{\tildo})$. By
  Proposition~\ref{regular-finite-npc}, $\cD$ is the domain of a
  regular event structure $\cE$. As explained in
  Subsection~\ref{sec-dom-med}, the events of $\cE$ are the
  hyperplanes of $\cD$.  Hyperplanes $\tH$ and $\tH'$ are concurrent
  if and only if they cross, and $\tH \leq \tH'$ if and only if
  $\tH=\tH'$ or $\tH$ separates $\tH'$ from $v$.  The events $\tH$ and
  $\tH'$ are in conflict iff $\tH$ and $\tH'$ do not cross
  and neither separates the other from $v$. Note that this implies that
  $\tH \lessdot \tH'$ iff $\tH$ separate $\tH'$ from $v$ and $\tH$ and
  $\tH'$ osculate, and $\tH \#_\mu \tH'$ iff $\tH$ and $\tH'$ osculate and
  neither of $\tH$ and $\tH'$ separates the other from $v$. Notice also
  that each hyperplane $\tH'$ of $\cD$ is the intersection of a hyperplane
  $\tH$ of $\tY$ with $\cD$.


  We show that $\tlambda_\cH$ is a regular trace labeling of $\cD$
  with the trace alphabet $(\cH,I)$. First note that if $\te_1,\te_2$
  are opposite edges of a square of $\cD$, then $e_1 = \varphi(\te_1)$
  and $e_2 = \varphi(\te_2)$ are opposite edges of a square of $Y$ and
  thus $\tlambda_\cH(\te_1) = \lambda_\cH(e_1) = \lambda_\cH(e_2) =
  \tlambda_\cH(\te_2)$. Consequently, $\tlambda_\cH$ is a labeling of
  the edges of $\cD$. From Lemma~\ref{regular-cube}, $\cD$ has at most
  $|V(Y)|$ isomorphism types of colored principal filters. Therefore,
  in order to show that $\tlambda_\cH$ is a regular trace labeling of
  $\cD$, we just need to show that $\tlambda_\cH$ satisfies the
  conditions (LES1),(LES2), and (LES3).

  For any two hyperplanes $\tH_1, \tH_2$ in minimal conflict in $\cD$,
  there exist an edge $\te_1$ dual to $\tH_1$ and an edge $\te_2$
  dual to $\tH_2$ such that $\te_1$ and $\te_2$ have the same source
  $\tu$. Note that since $\tH_1$ and $\tH_2$ are in conflict, $\te_1$
  and $\te_2$ do not belong to a common square of $\cD$. Moreover, if
  $\te_1$ and $\te_2$ are in a square $\tQ$ in $\tY$, then since there
  is a directed path from $\tv$ to $\tu$, and since $\tu$ is the
  source of $\tQ$, all vertices of $\tQ$ are in
  $(\cF_{\tildo}(\tv,\tY^{(1)}),\prec_{\tildo}) =
  \cD$. Consequently, the hyperplane $\tH_1$ and $\tH_2$ osculate at
  $(\tu,\te_1,\te_2)$ in $\tY$. Let $u = \varphi(\tu)$, $e_1 =
  \varphi(\te_1)$, and $e_2 = \varphi(\te_2)$, and note that $u$ is
  the source of $e_1$ and $e_2$. Let $H_1$ and $H_2$ be the
  hyperplanes of $Y$ that are respectively dual to $e_1$ and
  $e_2$. Since $\varphi$ is a covering map, $e_1$ and $e_2$ do not
  belong to a common square. Consequently, $H_1$ and $H_2$ osculate at
  $(u,e_1, e_2)$. If $H_1 = H_2$, $H_1$ directly self-osculates at
  $(u,e_1,e_2)$, which is impossible because $Y$ is
  special. Consequently, $\tlambda_\cH(\te_1) = \lambda_\cH(e_1) =
  H_1$ is different from $\tlambda_\cH(\te_2) = \lambda_\cH(e_2) =
  H_2$, establishing (LES1).  Moreover, since no two hyperplanes of
  $Y$ inter-osculate, we know that $H_1$ and $H_2$ do not intersect,
  and thus $(H_1,H_2)\notin I$, establishing (LES2) when $\tH_1
  \#_\mu \tH_2$.

  Suppose now that $\tH_1 \lessdot \tH_2$ in $\cD$. There exist an
  edge $\te_1$ dual to $\tH_1$ and an edge $\te_2$ dual to $\tH_2$ such
  that the sink $\tu$ of $\te_1$ is the source of $\te_2$. Since
  $\tH_1$ separates $\tH_2$ from $\tv$ in $\cD$, $\tH_1$ also
  separates $\tH_2$ from $\tv$ in $\tY$. Consequently, $\te_1$ and
  $\te_2$ do not belong to a common square of $\tY$ and the
  hyperplanes $\tH_1$ and $\tH_2$ osculate at $(\tu,\te_1,\te_2)$. Let
  $u = \varphi(\tu)$, $e_1 = \varphi(\te_1)$, and $e_2 =
  \varphi(\te_2)$, and note that $u$ is the sink of $e_1$ and the
  source of $e_2$. Let $H_1$ and $H_2$ be the hyperplanes of $Y$ that
  are respectively dual to $e_1$ and $e_2$. Since $\varphi$ is a
  covering map, $e_1$ and $e_2$ do not belong to a common
  square. Consequently, $H_1$ and $H_2$ osculate at $(u,e_1,e_2)$. If
  $H_1 = H_2$, then since $I$ is irreflexive, $(H_1,H_1) \notin I$. If
  $H_1 \neq H_2$, since no two hyperplanes of $Y$ inter-osculate, we
  know that $H_1$ and $H_2$ do not intersect, and thus
  $(H_1,H_2)\notin I$, establishing (LES2) when $\tH_1 \lessdot
  \tH_2$.

  We prove (LES3) by contraposition. Consider two hyperplanes $\tH_1,
  \tH_2$ that are concurrent, i.e., they intersect in $\cD$. Since
  $\tH_1$ and $\tH_2$ intersect in $\tY$, there exists a square $\tQ$
  containing two consecutive edges $\te_1, \te_2$ that are
  respectively dual to $\tH_1, \tH_2$.  Let $H_1$ and $H_2$ be the
  hyperplanes of $Y$ that are respectively dual to $e_1
  =\varphi(\te_1)$ and $e_2 = \varphi(\te_2)$. Note that
  $\tlambda_\cH(\te_1) = H_1$ and $\tlambda_\cH(\te_2) = H_2$. Since
  $\varphi$ is a covering map, $e_1$ and $e_2$ belong to a square in
  $Y$.  Then $H_1$ and $H_2$ intersect, and therefore $(H_1,H_2) \in
  I$, establishing (LES3).
\end{proof}

A finite NPC complex $X$ is called {\it virtually special}
\cite{HaWi1,HaWi2} if $X$ admits a finite special cover, i.e., there
exists a finite special NPC complex $Y$ and a covering map $\varphi:
Y\rightarrow X$.  We will call a strongly regular event structure
${\mathcal E}=(E,\le, \#)$ and its domain ${\mathcal D}({\mathcal E})$
{\it cover-special} if ${\mathcal D}({\mathcal E})$ is isomorphic to a
principal filter of the universal cover of some virtually special
complex with an admissible orientation.

\begin{theorem} \label{virtuallyspecial}
  Any cover-special event structure $\cE$ admits a regular trace
  labelling, i.e., Thiagarajan's conjecture is true for cover-special
  event structures.
\end{theorem}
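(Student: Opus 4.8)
The plan is to deduce this from Proposition~\ref{special-trace} by pulling the directed structure of a virtually special complex back to a genuine finite special cover. So suppose $\cD(\cE)$ is isomorphic to a principal filter $\cD=(\cF_{\tildo}(\tv,\tX^{(1)}),\prec_{\tildo})$ of the directed universal cover $(\tX,\tildo)$ of a (finite) virtually special NPC complex $X$ with admissible orientation $o$, and let $\psi\colon Y\to X$ be a finite special cover. First I would transport $o$ to an orientation $o_Y$ of $Y$, declaring an edge $e$ of $Y$ to run from $s$ to $t$ exactly when $\psi(e)$ runs from $\psi(s)$ to $\psi(t)$ in $(X,o)$. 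Since a covering map restricts to an isomorphism of stars and hence maps each square of $Y$ isomorphically (as a product $[-1,1]^2$) onto a square of $X$, admissibility of $o$ passes to $o_Y$: opposite edges of a square of $Y$ map to opposite, equally oriented edges of a square of $X$, so they are equally oriented in $Y$. Thus $(Y,o_Y)$ is a finite directed special NPC complex.

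Next I would identify the two directed universal covers. As $\tX$ is simply connected and covers $X$, it factors through the intermediate cover $Y$, i.e.\ the covering $\tX\to X$ splits as $\tX\xrightarrow{\varphi}Y\xrightarrow{\psi}X$ with $\varphi$ a covering map; being simply connected, $\tX$ is the universal cover of $Y$, so we may take $\tY=\tX$. Because $o_Y$ is the pullback of $o$ along $\psi$, the orientation of $\tY=\tX$ induced by $o_Y$ along $\varphi$ coincides with the one induced by $o$ along $\psi\circ\varphi$, which is $\tildo$. Hence the given $\cD=(\cF_{\tildo}(\tv,\tX^{(1)}),\prec_{\tildo})$ is literally a principal filter of $(\tY,\tildo)$ attached to the finite special NPC complex $Y$, its admissible orientation $o_Y$, and the covering $\varphi$.

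Now Proposition~\ref{special-trace} applies verbatim: $Y$ is special, so its hyperplanes are two-sided and $\cH(Y)$ is finite, and the proposition furnishes an independence relation $I$ on $\cH(Y)$ (namely ``$H_1$ and $H_2$ intersect'') for which the canonical labeling $\tlambda_{\cH(Y)}$ is a regular trace labeling of $\cD$ over the finite trace alphabet $(\cH(Y),I)$; regularity is exactly the bound of Lemma~\ref{regular-cube} on the number of isomorphism types of colored principal filters. Pushing this labeling through the isomorphism $\cD(\cE)\cong\cD$ produces a regular trace labeling of $\cE$, so $\cE$ is a regular trace event structure, and by the cited theorem of Thiagarajan it is isomorphic to $\cE_N$ for a finite $1$-safe Petri net $N$ --- which is Thiagarajan's conjecture for $\cE$.

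The only place needing care is the bookkeeping of the second step: that the pullback of an admissible orientation along a covering is again admissible, and that the directed universal cover computed over $X$ is the same (with the same orientation) as the one computed over the finite special cover $Y$. Both are routine consequences of the local-isomorphism property of covering maps; the genuine content of the theorem is already isolated in Proposition~\ref{special-trace}, which in turn rests on the Haglund--Wise theory of special cube complexes. So I do not expect a real obstacle beyond this verification.
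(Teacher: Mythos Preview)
Your proof is correct and follows essentially the same route as the paper: pull back the admissible orientation along the finite special cover $Y\to X$, identify the directed universal covers $(\tY,\tildo_Y)=(\tX,\tildo)$, and then invoke Proposition~\ref{special-trace} on the finite special complex $Y$. You supply a bit more detail on why the pullback orientation is admissible and why the universal covers coincide, but the argument is the same.
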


\begin{proof} Let $\cD={\mathcal D}({\mathcal E})$ be the domain of
  $\mathcal E$ and suppose that $\cD$ is the principal filter $\cD =
  (\cF_{\tildo}(\tv,\tX^{(1)}),\prec_{\tildo})$ of
  $(\tX,\tildo)$ for a virtually special complex $X$ and an
  admissible orientation $o$ of its edges.  Let $Y$ be a finite
  special cover of $X$ and let $\varphi: Y\rightarrow X$ be a covering
  map. Let $o'$ be the orientation of the edges of $Y$ obtained from
  $o$ via $\varphi$.  Note that $(X,o)$ and $(Y,o')$ have the same
  universal cover $(\tX,\tildo) = (\tY,\tildo')$


  In particular, the principal filter $\cD =
  (\cF_{\tildo}(\tv,\tX^{(1)}),\prec_{\tildo})$ of
  $(\tX,\tildo)$ is the principal filter
  $(\cF_{\tildo'}(\tv,\tY^{(1)}),\prec_{\tildo'})$ of
  $(\tY,\tildo')$. Since $Y$ is finite and special, by Proposition
  \ref{special-trace} there exists an independence relation $I$ on the
  hyperplanes $\cH=\cH(Y)$ of $Y$ such that the canonical labeling
  $\tlambda_{\cH}$ of $\cD =
  (\cF_{\tildo'}(\tv,\tY^{(1)}),\prec_{\tildo'})$ is a regular
  trace labeling with the trace alphabet $(\cH,I)$.  Therefore,
  Thiagarajan's conjecture holds for the event domain $\cD$.
\end{proof}

\subsection{Strongly hyperbolic regular  event structures}
In this subsection, we show that Thiagarajan's conjecture holds for a
large and natural class of strongly regular event structures, namely those
arising from hyperbolic CAT(0) cube complexes. It turns out that
strongly hyperbolic regular event structures are cover-special. This
is a consequence of the solution by Agol \cite{Agol} of the virtual
Haken conjecture for hyperbolic 3-manifolds. This breakthrough result
of Agol is based on the theory of special cube complexes developed by
Haglund and Wise \cite{HaWi1,HaWi2}.

Similarly to nonpositive curvature, Gromov hyperbolicity is defined in
metric terms. However, as for the CAT(0) property, the
hyperbolicity of a CAT(0) cube complex can be expressed in a purely
combinatorial way.  A metric space $(X,d)$ is $\delta$-{\it
  hyperbolic} \cite{BrHa,Gr} if for any four points $v,w,x,y$ of $X$,
the two largest of the distance sums $d(v,w)+d(x,y)$, $d(v,x)+d(w,y)$,
$d(v,y)+d(w,x)$ differ by at most $2\delta \geq 0$. A graph $G=(X,E)$
endowed with its standard graph-distance $d_G$ is
$\delta$-\emph{hyperbolic} if the metric space $(X,d_G)$ is
$\delta$-{hyperbolic}.  In case of geodesic metric spaces and graphs,
$\delta$-hyperbolicity can be defined in other equivalent ways, e.g.,
via thin or slim geodesic triangles.  For example, a geodesic metric
space $(X,d)$ is $2\delta$-hyperbolic, if all geodesic triangles
$\Delta(x,y,z)$ of $(X,d)$ are $\delta$-{\it slim}, i.e., for any
point $u$ on the side $[x,y]$ the distance from $u$ to $[x,z]\cup
[z,y]$ is at most $\delta$. This definition expresses the negative
curvature of a geodesic metric space.  A metric space $(X,d)$ is {\it
  hyperbolic} if there exists $\delta<\infty$ such that $(X,d)$ is
$\delta$-hyperbolic.  In case of median graphs, i.e., of 1-skeletons
of CAT(0) cube complexes, the hyperbolicity can be characterized in
the following way:

\begin{lemma}[\!\!\cite{ChDrEsHaVa,Ha}] \label{hyp_median}  Let $X$ be a CAT(0) cube complex.
Then its $1$-skeleton $X^{(1)}$ is hyperbolic if and only if all
isometrically embedded square grids are uniformly bounded.
\end{lemma}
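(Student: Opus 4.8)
The plan is to prove both implications by contraposition, working entirely in the median graph $X^{(1)}$ and using two standard facts about CAT(0) cube complexes: for vertices $a,b$ one has $d(a,b)=|\mathcal{H}(a,b)|$, where $\mathcal{H}(a,b)$ is the set of hyperplanes separating $a$ from $b$; and every interval $I(a,b)$ is a convex subgraph of $X^{(1)}$ which, pointed at $a$, is the distributive lattice of order ideals of a poset $P_{a,b}$ carried by $\mathcal{H}(a,b)$, two hyperplanes being incomparable in $P_{a,b}$ exactly when they cross. In particular, if $\mathcal{H}(a,b)$ splits as $\mathcal{H}_1\sqcup\mathcal{H}_2$ with every member of $\mathcal{H}_1$ crossing every member of $\mathcal{H}_2$, then $I(a,b)$ is the Cartesian product of the two corresponding intervals. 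For the direction ``hyperbolic $\Rightarrow$ bounded grids'': if $X^{(1)}$ contains an isometric copy of the $m\times m$ grid, apply the four‑point inequality to its corners $v,w,x,y$, with $\{v,w\}$ and $\{x,y\}$ the two pairs of opposite corners; the three distance sums are $d(v,w)+d(x,y)=4m$ and $d(v,x)+d(w,y)=d(v,y)+d(w,x)=2m$, so the two largest differ by $2m$ and $X^{(1)}$ is not $\delta$‑hyperbolic for $\delta<m$. Hence arbitrarily large isometric grids preclude hyperbolicity.

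For the converse I would assume $X^{(1)}$ is not hyperbolic and fix $R>0$. Since $X^{(1)}$ is not $R/2$‑hyperbolic, there are vertices $v,w,x,y$ with, after relabelling, $d(v,w)+d(x,y)\ge d(v,x)+d(w,y)\ge d(v,y)+d(w,x)$ and $\bigl(d(v,w)+d(x,y)\bigr)-\bigl(d(v,x)+d(w,y)\bigr)>R$. Classify each hyperplane of $X$ by the partition it induces on $\{v,w,x,y\}$, and let $\mathcal{A},\mathcal{B},\mathcal{C}$ be the classes realising the partitions $\{v,x\}\,|\,\{w,y\}$, $\{v,y\}\,|\,\{w,x\}$ and $\{v,w\}\,|\,\{x,y\}$ respectively. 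A direct count of the contribution of each (of the at most seven nontrivial) hyperplane types to the displayed distance sums shows that every type outside $\mathcal{A},\mathcal{B},\mathcal{C}$ contributes $0$, and gives $\bigl(d(v,w)+d(x,y)\bigr)-\bigl(d(v,x)+d(w,y)\bigr)=2(|\mathcal{A}|-|\mathcal{C}|)$ together with $\bigl(d(v,x)+d(w,y)\bigr)-\bigl(d(v,y)+d(w,x)\bigr)=2(|\mathcal{B}|-|\mathcal{A}|)$. Hence $|\mathcal{A}|>R/2$ and $|\mathcal{B}|\ge|\mathcal{A}|>R/2$. Moreover any $H\in\mathcal{A}$ crosses any $H'\in\mathcal{B}$: the four sectors cut out by $H$ and $H'$ contain $v,x,y,w$ respectively, so none is empty.

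To extract a grid I would then pass to the medians $z_1:=m(v,w,x)$ and $z_2:=m(v,w,y)$. Since a median of a triple lies on the majority side of every hyperplane, a hyperplane separates $z_1$ from $z_2$ iff it separates $v$ from $w$ and separates $x$ from $y$, which is precisely the condition $H\in\mathcal{A}\cup\mathcal{B}$; thus $\mathcal{H}(z_1,z_2)=\mathcal{A}\sqcup\mathcal{B}$. As every hyperplane of $\mathcal{A}$ crosses every hyperplane of $\mathcal{B}$, the poset $P_{z_1,z_2}$ is the disjoint sum of its restrictions to $\mathcal{A}$ and to $\mathcal{B}$, so $I(z_1,z_2)$ is a Cartesian product $L_{\mathcal{A}}\times L_{\mathcal{B}}$ of two distributive lattices of graph‑diameters $|\mathcal{A}|$ and $|\mathcal{B}|$. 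A maximal chain $\hat 0\lessdot\cdots\lessdot\hat 1$ in a distributive lattice is an isometric path (two of its vertices are at graph‑distance equal to the difference of their ranks), so $L_{\mathcal{A}}\times L_{\mathcal{B}}$ contains an isometric copy of the $k\times k$ grid with $k=\lfloor R/2\rfloor+1$; and since $I(z_1,z_2)$ is convex in $X^{(1)}$, this grid is isometrically embedded in $X^{(1)}$. Letting $R\to\infty$ shows that the isometric grids of $X^{(1)}$ are not uniformly bounded, the desired contrapositive.

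The one genuinely delicate step is the bookkeeping behind the two identities in the middle paragraph: one must verify that the four $1\,|\,3$‑type classes and the class $\mathcal{C}$ contribute $0$ to one difference and the right multiples of $2$ to both, so that after cancellation only $\mathcal{A},\mathcal{B},\mathcal{C}$ survive with the stated coefficients; everything else is a routine use of the dictionary between intervals of median graphs and distributive lattices. (A softer but longer route would first reduce hyperbolicity to slimness of geodesic triangles, use the median to make a triangle ``tripod‑like'', and control the Hausdorff distance between two geodesics with the same endpoints — but that control again comes down to the same product‑of‑lattices phenomenon, so the argument above seems the most economical.)
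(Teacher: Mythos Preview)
The paper does not prove Lemma~\ref{hyp_median}; it is quoted from \cite{ChDrEsHaVa,Ha} without proof, so there is no in-paper argument to compare your proposal against. That said, your argument is correct and self-contained. The forward direction is the obvious four-point computation on the corners of an $m\times m$ grid. For the converse, your hyperplane bookkeeping is right: writing $S_1=d(v,w)+d(x,y)$, $S_2=d(v,x)+d(w,y)$, $S_3=d(v,y)+d(w,x)$, each $1\,|\,3$ class contributes $1$ to all three sums and each of $\mathcal{A},\mathcal{B},\mathcal{C}$ contributes $2$ to exactly two of them, giving $S_1-S_2=2(|\mathcal{A}|-|\mathcal{C}|)$ and $S_2-S_3=2(|\mathcal{B}|-|\mathcal{A}|)$ as you state. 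The claim $\mathcal{H}(z_1,z_2)=\mathcal{A}\sqcup\mathcal{B}$ follows by the majority rule for medians, and the crossing of every pair in $\mathcal{A}\times\mathcal{B}$ makes the underlying poset $P_{z_1,z_2}$ a disjoint sum, so Birkhoff's theorem gives $I(z_1,z_2)\cong L_{\mathcal{A}}\times L_{\mathcal{B}}$; a product of two maximal chains then yields an isometric $(\lfloor R/2\rfloor+1)\times(\lfloor R/2\rfloor+1)$ grid inside the convex set $I(z_1,z_2)$. Nothing is missing.
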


We call an event structure ${\mathcal E}=(E,\le, \#)$ and its domain
${\mathcal D}({\mathcal E})$ {\it hyperbolic} if ${\mathcal
  D}({\mathcal E})$ is isomorphic to a principal filter of a directed
CAT(0) cube complex, whose 1-skeleton is hyperbolic.  We call an event
structure ${\mathcal E}=(E,\le, \#)$ and its domain ${\mathcal
  D}({\mathcal E})$ {\it strongly hyperbolic regular} if there exists
a finite directed NPC complex $(X,o)$ such that $\tX$ is hyperbolic
and $\cD$ is a principal filter of $(\tX^{(1)},\tildo)$. Note that an
event structure can be strongly regular and hyperbolic without being
strongly regular hyperbolic (see Remark~\ref{rem-tore-bazar}).

Hyperbolic CAT(0) cube complexes with uniformly bounded degrees have
several strong and nice properties.  It was shown in \cite{Hag} that
such CAT(0) cube complexes can be isometrically embedded into the
Cartesian product of finitely many trees.  Analogously to the nice
labeling conjecture of \cite{RoTh}, a similar result does not hold for
general CAT(0) cube complexes of uniformly bounded degrees
\cite{ChHa}. Modifying the arguments of \cite{Hag} it can be shown
that hyperbolic event structures with bounded degrees admit finite
nice labelings (these labelings are not necessarily regular).  Again
this does not hold for general event structures (see
Subsection~\ref{sec-cex-RT}).

The main result of this section is based on the following very deep
and important result of Agol~\cite{Agol}, following much work of
Haglund and Wise \cite{HaWi1,HaWi2}.  Agol's original result is
formulated in group-theoretical terms. Its following reformulation
(see, for example, \cite[Theorem 6.7]{Br_survey}) in the particular
case of finite NPC complexes is particularly appropriate for our
purposes:

\begin{theorem}[\!\!\cite{Agol}] \label{Agol} Let $X$ be a finite nonpositively
curved cube complex. If the fundamental group $\pi_1(X)$ of $X$ is hyperbolic,
then $X$ is virtually special.
\end{theorem}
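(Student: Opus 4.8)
This is Agol's theorem, and since no elementary argument is available I would take the route through the Haglund--Wise theory of special cube complexes and reduce the statement to a separability property of $\pi_1(X)$. Recall from the definition recalled above that a finite NPC complex fails to be special precisely because some hyperplane is one-sided, self-intersects, directly self-osculates, or because two hyperplanes inter-osculate; passing to a finite cover can only kill such pathologies, never create the ``directly self-osculating/inter-osculating'' obstruction from nothing, so what is needed is a finite cover in which all of them are absent. The Haglund--Wise \emph{canonical completion and retraction} construction shows that each such pathology, being a local configuration of one or two hyperplanes, can be removed in a suitable finite cover as soon as the relevant hyperplane stabilizers in $\pi_1(X)$ are separable and the pertinent double cosets of such subgroups are separable. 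Thus the whole statement reduces to showing that $\pi_1(X)$ is \emph{QCERF}, i.e. every quasiconvex subgroup is an intersection of finite-index subgroups, together with the double-coset separability that Haglund--Wise derive from QCERF in the hyperbolic setting \cite{HaWi1,HaWi2}.

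Next I would cash in the hyperbolicity hypothesis. Since $X$ is finite and $\pi_1(X)$ is word-hyperbolic, $\pi_1(X)$ acts properly and cocompactly on the CAT(0) cube complex $\tX$ (Theorem~\ref{Gromov}), and the stabilizer of each hyperplane of $\tX$ acts cocompactly on its (convex) carrier, hence is a quasiconvex, and in particular hyperbolic, subgroup. Moreover in a hyperbolic space two hyperplane carriers whose stabilizers share an infinite subgroup are at bounded Hausdorff distance (here one uses that isometrically embedded flats in $\tX^{(1)}$ are uniformly bounded, Lemma~\ref{hyp_median}), so the family of hyperplane stabilizers, and of their pairwise intersections, is almost malnormal up to finite index. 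This almost-malnormal, quasiconvex family is exactly the input required by the deep separability machinery.

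The technical heart is then Agol's induction. Using Wise's Malnormal Special Quotient Theorem applied to the almost-malnormal family of hyperplane stabilizers, one produces a finite-index subgroup of $\pi_1(X)$ together with a finite quotient in which these subgroups inject with prescribed control, and a combinatorial counting/coloring argument (Agol's coloring theorem) upgrades ``each hyperplane orbit behaves well in its own finite cover'' to ``all of the finitely many hyperplane orbits behave well in a single common finite cover'' $\widehat X\to X$. One then checks directly in $\widehat X$ that every hyperplane is two-sided, does not self-intersect, does not directly self-osculate, and that no two hyperplanes inter-osculate; hence $\widehat X$ is special, and since $\widehat X\to X$ is a finite covering, $X$ is virtually special.

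The main obstacle is unambiguously the last step: the Malnormal Special Quotient Theorem and the inductive assembly of the common special cover rest on Wise's cubical small-cancellation theory and the machinery of quasiconvex hierarchies, which is genuinely deep and which I would invoke as a black box (citing \cite{HaWi1,HaWi2,Agol}) rather than reprove. A secondary point requiring care is the reduction itself: one must verify that hyperbolicity really does make the hyperplane stabilizers into an almost-malnormal collection of quasiconvex subgroups, so that the machinery applies --- this is where Lemma~\ref{hyp_median} and the convexity of hyperplane carriers enter --- and that double-coset separability, not just subgroup separability, follows in this setting, which is the precise form needed to eliminate inter-osculation as opposed to mere self-osculation.
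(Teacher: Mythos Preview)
The paper does not prove this statement at all: Theorem~\ref{Agol} is stated with attribution to \cite{Agol} and is used as a black box in the proof of Theorem~\ref{hyperbolic}. The only surrounding discussion is the remark that the hypothesis ``$\pi_1(X)$ is hyperbolic'' is equivalent to ``$\tX$ is hyperbolic'' via the \v{S}varc--Milnor lemma, which is not a proof of the theorem but a clarification of the hypothesis. So there is no ``paper's own proof'' to compare your proposal against.

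Your sketch is a reasonable high-level outline of how the actual proof in \cite{Agol} and \cite{HaWi1,HaWi2} proceeds (reduction to separability via Haglund--Wise, quasiconvexity and almost-malnormality of hyperplane stabilizers, Wise's Malnormal Special Quotient Theorem, Agol's coloring/induction), and you correctly identify that the substantive content lies in results you would have to cite rather than reprove. For the purposes of this paper, however, a one-line citation is all that is expected; your expanded discussion would be out of place here.
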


The condition that  $\pi_1(X)$ is hyperbolic is equivalent to the fact that the
universal cover $\tX$ of $X$ is hyperbolic. Indeed, it is well-known that
$\pi_1(X)$ acts properly  by deck transformations
on $\tX$; see \cite{Hat} and \cite[Remark 8.3(2)]{BrHa}. Since $X$ is finite, this  action of  $\pi_1(X)$ on $\tX$ is cocompact.
Consequently, $\pi_1(X)$ acts properly and cocompactly by isometries on $\tX$. By
\v{S}varc-Milnor lemma \cite[Proposition 8.19]{BrHa}, the Cayley graph of $\pi_1(X)$ is
quasi-isometric to $\tX$. Since hyperbolicity  is an invariant of quasi-isometry \cite[Theorem 1.9]{BrHa},
$\pi_1(X)$ is hyperbolic if and only if $\tX$ is hyperbolic. Therefore,  any finite NPC complex $X$ that has a
hyperbolic universal cover is virtually special.


\begin{theorem} \label{hyperbolic}
  Any strongly hyperbolic regular event structure admits a regular trace
  labeling, i.e., Thiagarajan's conjecture is true for strongly hyperbolic
  regular event structures.
\end{theorem}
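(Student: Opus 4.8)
The plan is to reduce Theorem~\ref{hyperbolic} to Theorem~\ref{virtuallyspecial} by showing that every strongly hyperbolic regular event structure is cover-special. Concretely, suppose $\cE$ is strongly hyperbolic regular, so by definition there is a finite directed NPC complex $(X,o)$ such that the universal cover $\tX$ is hyperbolic and $\cD = {\mathcal D}(\cE)$ is isomorphic to a principal filter $(\cF_{\tildo}(\tv,\tX^{(1)}),\prec_{\tildo})$ of $(\tX^{(1)},\tildo)$. First I would invoke the discussion following Theorem~\ref{Agol}: since $X$ is a finite NPC complex whose universal cover $\tX$ is hyperbolic, the fundamental group $\pi_1(X)$ is hyperbolic (via the \v{S}varc--Milnor lemma, as spelled out in the excerpt), and hence by Agol's Theorem~\ref{Agol} the complex $X$ is virtually special. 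That is exactly the hypothesis needed to call $\cD$ cover-special, provided we keep track of the orientation.

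The one point that needs a small argument is that virtual specialness of $X$ as an \emph{undirected} NPC complex, together with the admissible orientation $o$ on $X$, gives a virtually special complex \emph{with an admissible orientation} whose universal cover is $(\tX,\tildo)$ with the filter $\cD$. This is immediate: a finite special cover $\varphi\colon Y\to X$ pulls the orientation $o$ back to an orientation $o'$ of the edges of $Y$; since $\varphi$ maps squares to squares and $o$ has opposite edges of each square oriented the same way, so does $o'$, i.e.\ $o'$ is admissible. Moreover $(Y,o')$ and $(X,o)$ have the same directed universal cover $(\tY,\tildo')=(\tX,\tildo)$, so $\cD$ is literally a principal filter of the directed universal cover of the virtually special complex $X$ with admissible orientation $o$. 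Hence $\cE$ is cover-special in the sense of the definition preceding Theorem~\ref{virtuallyspecial}.

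Applying Theorem~\ref{virtuallyspecial} to the cover-special event structure $\cE$ then yields a regular trace labeling of $\cD$, which is precisely the assertion that Thiagarajan's conjecture holds for $\cE$. I would write the proof in essentially three lines: (1) hyperbolic $\tX$ $\Rightarrow$ $\pi_1(X)$ hyperbolic $\Rightarrow$ $X$ virtually special by Agol; (2) therefore $\cE$ (with the given admissible orientation) is cover-special; (3) conclude by Theorem~\ref{virtuallyspecial}.

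The only real ``obstacle'' is conceptual rather than technical: one must make sure the chain of definitions lines up --- that ``strongly hyperbolic regular'' feeds the hypothesis of Agol's theorem (finiteness of $X$ and hyperbolicity of $\tX$), and that ``virtually special'' plus the admissible orientation feeds the hypothesis ``cover-special'' of Theorem~\ref{virtuallyspecial}. All of the heavy lifting --- Agol's theorem, the Haglund--Wise machinery, and the construction of the regular trace labeling from a finite special cover --- has already been done in Theorem~\ref{Agol} and Theorem~\ref{virtuallyspecial} (via Proposition~\ref{special-trace}), so the proof of Theorem~\ref{hyperbolic} itself is short.
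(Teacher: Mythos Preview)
Your proposal is correct and matches the paper's proof essentially line for line: the paper also invokes Agol's Theorem~\ref{Agol} (via the discussion that hyperbolicity of $\tX$ is equivalent to hyperbolicity of $\pi_1(X)$) to conclude that $X$ is virtually special, hence $\cE$ is cover-special, and then applies Theorem~\ref{virtuallyspecial}. Your extra paragraph about pulling back the admissible orientation to the finite special cover is exactly the argument the paper gives inside the proof of Theorem~\ref{virtuallyspecial} rather than here, so it is not needed in the proof of Theorem~\ref{hyperbolic} itself.
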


\begin{proof} Let $\cD={\mathcal D}({\mathcal E})$ be the domain of a
  strongly hyperbolic regular event structure $\mathcal E$. Consider a
  finite NPC complex $(X,o)$ such that $\tX$ is hyperbolic and $\cD$
  is the principal filter
  $(\cF_{\tildo}(\tv,\tX^{(1)}),\prec_{\tildo})$ for some $\tv
  \in \tX$.  By Theorem \ref{Agol} of Agol, finite NPC complexes with
  hyperbolic universal covers are virtually special, thus $\mathcal E$
  is a cover-special event structure. By Theorem \ref{virtuallyspecial}, 
  Thiagarajan's conjecture is true for
  $\mathcal E$.
\end{proof}

\section{Wise's event domain $(\tW_{\tv},\prec_{\tildo^*})$}\label{sec-preuve-cex}

In this section, we construct the domain $(\tW_{\tv},\prec_{\tildo^*})$ of a
regular event structure (with bounded $\natural$-cliques) that does
not admit a regular nice labeling. To do so, we start with a directed
colored CSC (complete square complex) $\bX$ introduced by
Wise~\cite{Wi_csc}.  
Recall that in such complexes,  the edges are classified  vertical or horizontal,
 each edge has an orientation and a color, and any two incident edges belong to a square.

\subsection{Wise's square complex $\bX$ and its universal cover $\tbX$}

The complex $\bX$ consists of six squares as indicated in
Figure~\ref{fig-squares} (reproducing Figure~3 of
\cite{Wi_csc}). Each square has two vertical and two horizontal
edges. The horizontal edges are oriented from left to right and
vertical edges from bottom to top.  Denote this orientation of edges
by $o$. The vertical edges of squares are colored white, grey, and
black and denoted $a,b$, and $c$, respectively.  The horizontal edges
of squares are colored by single or double arrow, and denoted $x$ and
$y$, respectively. The six squares are glued together by identifying
edges of the same color and respecting the directions to obtain the
square complex $\bX$. Note that $\bX$ has a unique vertex,
five edges, and six squares. It can be directly
checked that $\bX$ is a complete square complex, and consequently
$(X,o)$ is a directed NPC complex.  Let
$H_{X}$ denote the subcomplex of $X$ consisting of the 2 horizontal
edges and let $V_{X}$ denote the subcomplex of $X$ consisting of the 3
vertical edges.

\begin{figure}
  \begin{center}
  \includegraphics[page=1,scale=0.75]{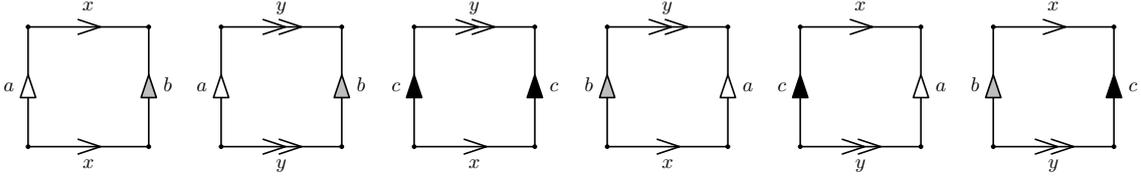}
  \end{center}
  \caption{The 6 squares defining the complex $\bX$}
  \label{fig-squares}
\end{figure}

The universal cover $\tH_X$ of $H_X$ is the 4-regular
infinite tree $F_4$. Its edges inherit the orientations from their
images in $H_X$: each vertex of $\tH_X$ has two incoming and
two outgoing arcs. Analogously, the universal cover $\tV_X$
of $V_X$ is the 6-regular infinite tree $F_6$ where each vertex has
three incoming and three outgoing arcs. Let $\tv_1$ be any vertex of
$\tH_X$. Then the principal filter of $\tv_1$ is the
infinite binary tree $T_2$ rooted at $\tv_1$: all its vertices except
$\tv_1$ have one incoming and two outgoing arcs, while $\tv_1$ has two
outgoing arcs and no incoming arc. Analogously, the principal filter
of any vertex $\tv_2$ in the ordered set $\tV_X$ is the
infinite ternary tree $T_3$ rooted at $\tv_2$.

Let $\tbX$ be the universal cover of $\bX$ and let $\varphi:
\tbX\rightarrow \bX$ be a covering map. Let $\tX$
denote the support of $\tbX$. Since $\bX$ is a CSC, by
\cite[Theorem 3.8]{Wi_csc}, $\tX$ is the Cartesian product
$F_4\times F_6$ of the trees $F_4$ and $F_6$.  The edges of
$\tbX$ are colored and oriented as their images in $\bX$,
and are also classified as horizontal or vertical edges. The squares
of $\tbX$ are oriented as their images in ${\bf X}$, thus
two opposite edges of the same square of $\tbX$ have the
same direction. This implies that all classes of parallel edges of
$\tbX$ are oriented in the same direction.  Denote this
orientation of the edges of $\tbX$ by $\tildo$. The
1-skeleton $\tX^{(1)}$ of $\tX$ together with
$\tildo$ is a directed median graph. Let
$\tv=(\tv_1,\tv_2)$ be any vertex of
$\tX$, where $\tv_1$ and $\tv_2$ are the
coordinates of $\tv$ in the trees $F_4$ and $F_6$. Then the
principal filter ${\mathcal
  F}_{\tildo}(\tv,\tX^{(1)})$ of $\tv$ is the
Cartesian product of the principal filters of $\tv_1$ in $F_4$
and of $\tv_2$ in $F_6$, i.e., is isomorphic to $T_2\times T_3$.

By Lemma \ref{directed-median}, the orientation of the edges of ${\mathcal
  F}_{\tildo}(\tv,\tX^{(1)})$ corresponds to the
canonical basepoint orientation of ${\mathcal
  F}_{\tildo}(\tv,\tX^{(1)})$ with $\tv$ as the
basepoint. Moreover, by Proposition \ref{regular-finite-npc},
${\mathcal F}_{\tildo}(\tv,\tX^{(1)})$ is the domain
of a regular event structure with one isomorphism type of principal
filters. We summarize this in the following result:

\begin{lemma} For any vertex $\tv$ of $\tX$,  ${\mathcal F}_{\tildo}(\tv,\tX^{(1)})$ is the domain of a regular event structure with one isomorphism class of futures.
\end{lemma}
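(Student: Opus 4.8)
The plan is to obtain this lemma as the specialization of Proposition~\ref{regular-finite-npc} to the finite directed NPC complex $(X,o)$, combined with the fact that $\bX$ has only one vertex. First I would record the hypotheses needed: $\bX$ is a complete square complex (checked above), hence in particular a $VH$-complex and therefore an NPC complex; it is finite (one vertex, five edges, six squares); and the prescribed orientation $o$ — horizontal edges left-to-right, vertical edges bottom-to-top — is admissible, since in each of the six squares the two horizontal edges receive the same direction and the two vertical edges receive the same direction, so opposite edges of a square are oriented the same way. Thus $(X,o)$ is a finite directed NPC complex, and by Theorem~\ref{Gromov} its universal cover $\tX$ (the support of $\tbX = F_4\times F_6$) is a CAT(0) cube complex, with $\tildo$ the orientation of $\tX$ induced by $o$, so that $(\tX^{(1)},\tildo)$ is a directed median graph.

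Next I would apply Proposition~\ref{regular-finite-npc} with $Y=X$: for any vertex $\tv$ of $\tX$, the principal filter $({\mathcal F}_{\tildo}(\tv,\tX^{(1)}),\prec_{\tildo})$ is the domain of a regular event structure with at most $|V(X)|$ distinct isomorphism types of principal filters. Since $\bX$ has a unique vertex, $|V(X)|=1$, and the domain is nonempty, so there is exactly one isomorphism type of principal filter. To phrase this as "one isomorphism class of futures", I would invoke Lemma~\ref{directed-median}(iv): the configurations of the event structure are precisely the vertices $\tu \in {\mathcal F}_{\tildo}(\tv,\tX^{(1)})$, and the future ${\mathcal F}(\tu)$ of such a configuration coincides with the principal filter ${\mathcal F}_{\tildo}(\tu,\tX^{(1)})$ taken in $(\tX^{(1)},\tildo)$ itself; by the automorphism argument of Lemma~\ref{regular-cube} (all vertices of $\tX$ lie over the single vertex of $X$ under a covering map $\varphi:\tX\to X$, so for any two of them there is an orientation-preserving automorphism of $\tX$ carrying one to the other), all these filters are pairwise isomorphic. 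Hence the event structure has a single isomorphism class of futures; finite index follows, and finite degree is immediate because ${\mathcal F}_{\tildo}(\tv,\tX^{(1)})\cong T_2\times T_3$ has out-degree $5$ at every vertex.

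There is essentially no obstacle here: all the work is already done in Proposition~\ref{regular-finite-npc} and in the automorphism construction of Lemma~\ref{regular-cube}. The only points requiring care are (i) checking that the concrete combinatorial datum "$\bX$ has one vertex" is what makes the bound $|V(X)|$ collapse to $1$, and (ii) correctly identifying "futures of configurations" with the principal filters ${\mathcal F}_{\tildo}(\tu,\tX^{(1)})$ supplied by Lemma~\ref{directed-median}, so that the single isomorphism type of principal filter is literally the single isomorphism class of futures asserted in the statement.
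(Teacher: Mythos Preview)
Your proposal is correct and follows exactly the paper's approach: the paper does not give a separate proof of this lemma but states it as a summary of the preceding paragraph, where it invokes Proposition~\ref{regular-finite-npc} together with the fact that $|V(X)|=1$. Your additional care in identifying futures of configurations with principal filters via Lemma~\ref{directed-median}(iv) is more explicit than what the paper spells out, but entirely in line with its reasoning.
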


\subsection{Aperiodicity of $\tbX$}

We recall here the main properties of $\tbX$ established in \cite[Section 5]{Wi_csc}. Let $\tv=(\tv_1,\tv_2)$
be an arbitrary vertex of $\tbX$, where $\tv_1$ and $\tv_2$ are defined as before. From the definition of the covering map,
the loop of $\bX$ colored $y$ gives rise to a bi-infinite horizontal path $P_y$ of $\tbX^{(1)}$ passing via $\tv$ and whose all edges
are colored $y$ and are directed from left to right. Analogously, there exists a  bi-infinite vertical  path $P_c$ of $\tbX^{(1)}$
passing via $\tv$ and whose all edges are colored $c$ and are directed from bottom to top.

The projection of $P_y$ on the horizontal factor $F_4$ is a
bi-infinite path $P^h$ of $F_4$ passing via $\tv_1$. Analogously, the
projection of $P_c$ on the vertical factor $F_6$ is a bi-infinite path
$P^v$ of $F_6$ passing via $\tv_2$. Consequently, the convex hull
$\conv(P_y \cup P_c)$ of $P_y\cup P_c$ in the graph
$\tbX^{(1)}$ is isomorphic to the Cartesian product of
$P^h\times P^v$ of the paths $P^h$ and $P^v$. Therefore the subcomplex
of $\tbX$ spanned by $\conv(P_y\cup P_c)$ is a directed plane
$\Pi_{yc}$ tiled into squares (recall that each square is of one of 6
types and its sides are colored by the letters $a,b,c,x,y$), see
Figure~\ref{fig-grid}.  Wise showed that the plane
$\Pi_{yc}$ is not tiled periodically by the preimages of the squares
of $\bX$.


\begin{theorem}[\!\!{\cite[Theorem 5.3]{Wi_csc}}] \label{anti-torus} The
  plane $\Pi_{yc}$ tiled into squares is not doubly periodic.
\end{theorem}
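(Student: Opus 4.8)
The statement to prove is Wise's Theorem~\ref{anti-torus}: the plane $\Pi_{yc}$ in $\tbX$, tiled by preimages of the six squares of $\bX$, is not doubly periodic. The approach is the classical ``anti-torus'' argument: a double period of the tiling would descend, under the covering map $\varphi$, to a pair of closed geodesics in $\bX$ — one horizontal (made of $y$-edges) and one vertical (made of $c$-edges) — that commute in the fundamental group $\pi_1(\bX)$. The whole point of Wise's construction is that $\bX$ admits no such commuting pair, i.e.\ $\bX$ contains an \emph{anti-torus}. So the plan is: (1) reduce periodicity of $\Pi_{yc}$ to an algebraic statement about $\pi_1(\bX)$; (2) identify the two relevant elements explicitly as words in the generators; (3) show these two elements generate a non-abelian subgroup (in fact a free group of rank 2).

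\textbf{Step 1: from a double period to commuting loops.} Suppose $\Pi_{yc}$ is doubly periodic, so its stabilizer in the deck group $\pi_1(\bX)$ contains a subgroup $\ZZ^2$ acting cocompactly on $\Pi_{yc}\cong\RR^2$. Since $\Pi_{yc}=\conv(P_y\cup P_c)$ is the product of the horizontal axis $P^h\subseteq F_4$ and the vertical axis $P^v\subseteq F_6$, and $\tX=F_4\times F_6$ is a product of trees, any isometry of $\tbX$ preserving $\Pi_{yc}$ and its colored/directed structure must preserve each factor axis and translate along it. Translations along $P^h$ correspond to powers of the element $g_y\in\pi_1(\bX)$ represented by the $y$-loop, and translations along $P^v$ to powers of $g_c$ represented by the $c$-loop (these are exactly the loops whose lifts through $\tv$ are $P_y$ and $P_c$). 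A rank-2 abelian subgroup acting cocompactly therefore forces some nonzero powers $g_y^m$ and $g_c^n$ to commute in $\pi_1(\bX)$. Equivalently: the subgroup $\langle g_y, g_c\rangle$ is not free of rank~2 — it contains a $\ZZ^2$.

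\textbf{Step 2: present $\pi_1(\bX)$ and name the elements.} Since $\bX$ has a single vertex, $\pi_1(\bX)$ is generated by the five edge-loops $a,b,c,x,y$, with one relator for each of the six squares (each square giving a relation of the form $h\, v' = v\, h'$ among a horizontal pair and a vertical pair). Read off these six relators from Figure~\ref{fig-squares}. One then wants to understand the subgroup $\langle x,y\rangle$ generated by the two horizontal edges and, dually, $\langle a,b,c\rangle$ generated by the three vertical edges; because $\bX$ is a complete square complex, $\pi_1(\bX)$ is a fibre product / acts on $F_4\times F_6$ and both $\langle x,y\rangle$ and $\langle a,b,c\rangle$ are free (of ranks~2 and~3), acting on $F_4$ and $F_6$ respectively.

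\textbf{Step 3: rule out commuting powers (the main obstacle).} This is the heart of the matter and the step I expect to be hardest — it is precisely \cite[Theorem~5.3]{Wi_csc}. The cleanest route is: in a product of trees acting complex, two hyperbolic elements $g_y$ (acting on the $F_4$-factor as a horizontal translation, trivially on $F_6$ up to the product structure) and $g_c$ commute only if their axes in $F_4\times F_6$ are parallel, which forces a very rigid pattern of the squares along $\Pi_{yc}$. Wise shows this pattern is impossible for his six squares by a combinatorial analysis: following the forced colors of vertical edges as one moves along a horizontal period one obtains a substitution-like rule on the sequence of vertical colors (white/grey/black, i.e.\ $a/b/c$), and periodicity of the plane would force this sequence to be periodic, whereas the rule generates a non-eventually-periodic (in fact aperiodic) sequence. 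I would quote this as the content of \cite[Theorem~5.3]{Wi_csc} and give only the reduction in Steps~1--2 in detail, since the excerpt already credits the theorem to Wise and the aperiodicity of the induced symbolic system is exactly what ``$\Pi_{yc}$ is not doubly periodic'' asserts. Concretely: a horizontal period of length $m$ together with a vertical period of length $n$ would make the $m\times n$ block of vertical-edge colors repeat in both directions; projecting to the sequence of columns' color-data and invoking Wise's verification that no finite block tiles the line periodically under the forced adjacency rules yields the contradiction.

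\textbf{Summary.} In short: a double period yields commuting powers $g_y^m, g_c^n\in\pi_1(\bX)$; commuting hyperbolic isometries of $F_4\times F_6$ have parallel axes, forcing the strip $\Pi_{yc}$ to carry a periodic coloring of its vertical edges; Wise's square-by-square analysis of the six tiles shows the forced coloring is aperiodic, a contradiction. The one genuinely new or delicate input is Step~3, which we take from \cite[Theorem~5.3]{Wi_csc}; Steps~1--2 are the routine translation between the geometry of $\Pi_{yc}\subseteq\tbX$ and the algebra of $\pi_1(\bX)=\pi_1(\bX)$, using that $\tX=F_4\times F_6$ and that $\Pi_{yc}$ is a product of bi-infinite geodesics, one in each factor.
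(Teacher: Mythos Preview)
Your reduction in Steps~1--2 is the standard translation between double periodicity of $\Pi_{yc}$ and the existence of commuting nonzero powers $g_y^m,g_c^n$ in $\pi_1(\bX)$, and it is essentially correct. The problem is Step~3: you acknowledge that this is ``the heart of the matter'' and then propose to \emph{quote} \cite[Theorem~5.3]{Wi_csc} for it. But \cite[Theorem~5.3]{Wi_csc} \emph{is} Theorem~\ref{anti-torus}. So after two steps of reduction you have arrived back at the statement you set out to prove, and you invoke it by citation. The vague mention of a ``substitution-like rule on the sequence of vertical colors'' is pointed in the right direction, but it is not developed into an argument.

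The paper handles this differently and much more concretely. It does not go through $\pi_1(\bX)$ at all. Instead it quotes a separate, sharper combinatorial fact of Wise, namely Proposition~\ref{period-doubling} (\cite[Proposition~5.9]{Wi_csc}): in the quarter-plane $\Pi^{++}_{yc}$, the horizontal words $M_n(0),M_n(1),\ldots,M_n(2^n-1)$ of length $n$ at heights $0,\ldots,2^n-1$ are pairwise distinct. This immediately kills any double period: a vertical period $p$ would force $M_n(0)=M_n(p)$ for every $n$, but as soon as $2^n>p$ these must differ. Equivalently, the vertical strip of width $n$ has period exactly $2^n$, so no finite vertical period can work. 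This ``period-doubling'' statement is the genuine input, and the paper explicitly points to it as what \emph{establishes} Theorem~\ref{anti-torus}. Your sketch would be repaired by replacing the circular citation in Step~3 with this proposition (or a proof of it); the detour through commuting elements of $\pi_1(\bX)$ is then unnecessary.
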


\begin{figure}
  \begin{center}
    \includegraphics[page=11,scale=0.7]{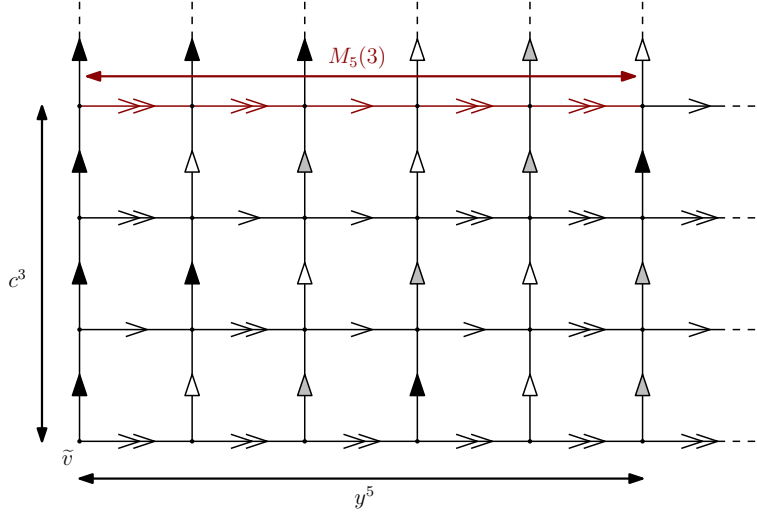}
  \end{center}
  \caption{Part of the plane $\Pi_{yc}^{++}$ appearing in
    $\tbX$}
  \label{fig-grid}
\end{figure}

In our counterexample we will use the following result of
\cite{Wi_csc} that was used to show that the plane $\Pi_{yc}$ is not
tiled periodically by the preimages of the squares of $\bX$.  Denote
by $P^+_y$ the (directed) subpath of $P_y$ having $\tv$ as a
source (this is a one-infinite horizontal path). Analogously, let
$P^+_c$ be the (directed) subpath of $P_c$ having $\tv$ as a
source. The convex hull of $P^+_y\cup P^+_c$ is a quarter of the plane
$\Pi_{yc}$, which we  denote by $\Pi^{++}_{yc}$. Any shortest path
in $\tbX^{(1)}$ from $\tv$ to a vertex $\tu\in
\Pi^{++}_{yc}$ can be viewed as a word in the alphabet $A=\{
a,b,c,x,y\}$. For an integer $n\ge 0$, denote by $y^n$ the horizontal
subpath of $P^+_y$ beginning at $\tv$ and having length
$n$. Analogously, for an integer $m\ge 0$, denote by $c^m$ the
vertical subpath of $P^+_c$ beginning at $\tv$ and having length
$m$.  Let $M_n(m)$ denote the horizontal path of $\Pi^{++}_{yc}$ of
length $n$ beginning at the endpoint of the vertical path
$c^m$. $M_n(m)$ determines a word which is the label of the side
opposite to $y^n$ in the rectangle which is the convex hull of $y^n$
and $c^m$ (see Figure~\ref{fig-grid}). Let $M_n(m)$ also denote this
corresponding word.

\begin{proposition}[\!\!{\cite[Proposition 5.9]{Wi_csc}}] \label{period-doubling} For each $n$,
the words $\{ M_n(m): 0\le m\le 2^n-1\}$ are all distinct, and thus, every positive word in $x$ and $y$
of length $n$ is $M_n(m)$ for some $m$.
\end{proposition}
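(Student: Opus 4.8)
The plan is to reconstruct Wise's argument by realising the quarter‑plane $\Pi^{++}_{yc}$ as the space–time diagram of a finite‑state transducer and then identifying that transducer. The starting point is that $\bX$ is a complete square complex with a single vertex whose six squares realise the six possible pairs (colour of the bottom edge, colour of the left edge); consequently there is a well‑defined deterministic \emph{completion rule} $\delta$ which, given the colours of the bottom and left edges of a square of $\bX$ (equivalently of $\tbX$), returns the colours of its right and top edges. Applying $\delta$ cell by cell, the whole tiling of $\Pi^{++}_{yc}$ is determined by the word read along its bottom edge together with the word read along its left edge. By construction of $\Pi^{++}_{yc}=\conv(P^+_y\cup P^+_c)$ the bottom edge is the ray $P^+_y$, all of whose edges are coloured $y$, and the left edge is the ray $P^+_c$, all of whose edges are coloured $c$. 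Writing $r_m$ for the horizontal word along the $m$‑th row of $\Pi^{++}_{yc}$ (so that $M_n(m)$ is the length‑$n$ prefix of $r_m$), the completion rule says that $r_{m+1}$ is obtained from $r_m$ by a single left‑to‑right sweep of a finite‑state transducer $T$ whose states are the vertical colours $\{a,b,c\}$ and whose initial state is $c$ (because column $0$ is the $c$‑coloured ray $P^+_c$), with $r_0=yyy\cdots$.

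Since $T$ processes its input from left to right, for each $n$ the length‑$n$ prefix of $r_{m+1}$ depends only on the length‑$n$ prefix of $r_m$; hence there is a well‑defined map $T_n\colon\{x,y\}^n\to\{x,y\}^n$ with $M_n(m+1)=T_n(M_n(m))$, and therefore $M_n(m)=T_n^m(y^n)$ for all $m\ge 0$. In this language the Proposition asserts exactly that the orbit of $y^n$ under $T_n$ has length $2^n$, equivalently — since $|\{x,y\}^n|=2^n$ — that $T_n$ is a single $2^n$‑cycle on $\{x,y\}^n$, in which case the $M_n(m)$ for $0\le m\le 2^n-1$ are precisely the $2^n$ words of length $n$, each occurring once.

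The remaining — and only genuinely computational — step is to read $T$ off the six squares of Figure~\ref{fig-squares}. Carrying out this finite check one finds that, under the identification $y\leftrightarrow 0$, $x\leftrightarrow 1$, the transducer $T$ is the binary increment (``adding machine''), least‑significant‑digit first: started in the carry state $c$ it outputs $0$ for each input $1$ while staying in the carry state, turns the first input $0$ into a $1$ and passes to a non‑carry state, and thereafter copies its input; the third vertical colour serves as the non‑carry state. Hence $T_n$ is the successor map modulo $2^n$ on $n$‑digit binary strings — a single $2^n$‑cycle — and $r_m$ is the base‑$2$ expansion of $m$ written least‑significant‑digit first and padded with $0$'s. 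It follows that $M_n(0),M_n(1),\dots,M_n(2^n-1)$ are the $2^n$ binary words of length $n$, each exactly once, so every positive word in $x$ and $y$ of length $n$ equals $M_n(m)$ for a unique $m$ with $0\le m\le 2^n-1$. The main obstacle is getting this last identification right: one must track the orientations and gluings of Wise's six squares carefully enough to be certain the induced rule is exactly this odometer and not some other $3$‑state transducer (which might fail to have a single long cycle); once that is verified, everything else is formal.
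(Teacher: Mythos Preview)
The paper does not itself prove this proposition; it is quoted as \cite[Proposition~5.9]{Wi_csc} and only the consequence (``period doubling'' forces the vertical period of the width-$n$ strip to be $2^n$) is discussed. So there is no in-paper proof to compare against; your task was effectively to reconstruct Wise's argument.

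Your reconstruction is sound and is essentially the intended one. Interpreting the six squares of $\bX$ as the transition table of a Mealy transducer with input/output alphabet $\{x,y\}$ and state set $\{a,b,c\}$, and recognising that transducer as the binary adding machine (odometer), is exactly how Wise's ``period doubling'' works: the state $c$ is the carry state, each input $y$ in carry state produces output $x$ with carry retained, the first $x$ flips to $y$ and clears the carry, and thereafter the input is copied. The orbit of $y^n$ under the induced map $T_n$ is then a full $2^n$-cycle, which is the statement. Your write-up correctly isolates the one nontrivial step --- the finite verification that the six squares really implement the odometer and not some other $3$-state machine --- and is appropriately candid that this must be checked against the actual gluing data of Figure~\ref{fig-squares}. (Your specific bit assignment $y\leftrightarrow 0$, $x\leftrightarrow 1$ may need to be swapped depending on conventions, but this does not affect the argument.)
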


This proposition is called in \cite{Wi_csc} ``period doubling''.
It immediately establishes Theorem \ref{anti-torus} because
it shows that the period of the
infinite vertical strip of $\Pi^{++}_{yc}$ of width $n$ and
bounded on the left by the path $P^+_c$ has period $2^n$.
Alternatively, every positive word in $x$ and $y$ appears in $\Pi^{++}_{yc}$,
and thus $\Pi_{yc}$ cannot be periodic.

\subsection{The  square complex $W$ and its universal cover $\tW$} \label{WiseW}

Let $\beta \bX$ denote the first barycentric subdivision of $\bX$:
each square $C$ of $\bX$ is subdivided into four squares
$C_1,C_2,C_3,C_4$ by adding a middle vertex to each edge of $C$ and
connecting it to the center of $C$ by an edge. This way each edge $e$
of $C$ is subdivided into two edges $e_1,e_2$, which inherit the
orientation and the color of $e$. The four edges connecting the middle
vertices of the edges of $C$ to the center of $C$ are oriented from
left to right and from bottom to top (see the middle figure of Figure
\ref{fig-barycentric-subdivision}).  Denote the resulting orientation
by $o'$.  This way, $(\beta \bX,o')$ is a directed and colored square
complex. Again, denote by $\beta X$ the support of $\beta \bX$. The
universal cover $\tbetaX$ of $\beta X$ is the Cartesian
product $\beta F_4\times \beta F_6$ of the trees $\beta F_4$ and
$\beta F_6$, where $\beta F_4$ is the first barycentric subdivision of
$F_4$ and $\beta F_6$ is the first barycentric subdivision of
$F_6$. Additionally, $(\tbetaX,\tildo')$ is a
directed CAT(0) square complex. We assign a \emph{type} to each vertex
of $\tbetabX$: the preimage of the unique vertex of $\bX$
is of type $0$ and is called a $0$-{\it
  vertex}, the preimages of the middles of  edges of $\bX$ are of type
$1$ and are called $1$-{\it vertices}, and the preimages of centers of
squares of $\bX$ are of type $2$ and are called $2$-{\it vertices}.

\begin{figure}
  \begin{center}
    \includegraphics[page=2,scale=0.75]{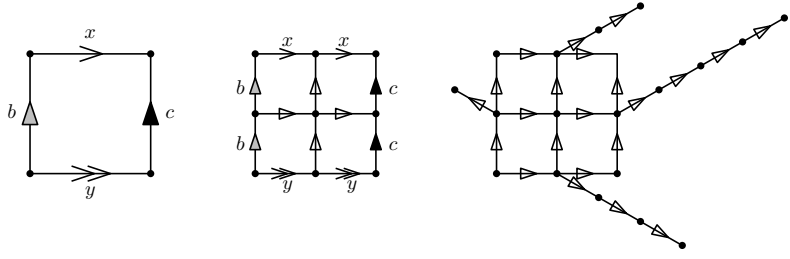}
  \end{center}
  \caption{A square of $\bX$ and the corresponding subcomplexes in
    $(\beta \bX,o')$ and  $(W,o^*)$}
  \label{fig-barycentric-subdivision}
\end{figure}

To encode the colors of the edges of $\bX$, we  introduce our
central object, the square complex $W$ (whose edges are no longer
colored). Let $A=\{ a,b,c,x,y\}$ and let $r: A\rightarrow \{
1,2,3,4,5\}$ be a bijective map. The complex $W$ is obtained from
$\beta X$ by adding to each 1-vertex $z$ of $\beta X$ a path $R_z$ of
length $r(\alpha)$ if $z$ is the middle of an edge colored $\alpha\in
A$ in $\bX$. The path $R_z$ has one end at $z$ (called the {\it root}
of $R_z$) and $z$ is the unique common vertex of $R_z$ and
$\beta X$ (we  call such added paths $R_z$ {\it tips}).

The square complex $W$ has 27 vertices: the unique vertex of $\bX$,
the 6 vertices which are the barycenters of the original squares, 5
vertices which are the barycenters of the original edges of $\bX$, and
15 vertices which are new vertices lying on tips. The complex $W$ has
$49$ edges: 10 corresponding to the 5 original edges that have been
subdivided, 24 connecting the barycenters of the original squares to
the barycenters of the original edges and 15 forming the tips. The
complex $W$ has 24 squares: 4 for each original square.

Denote by $o^*$ the orientation of the edges of $W$ defined as
follows: the edges of $\beta X$ are oriented as in $(\beta X,o')$ and
the edges of tips are oriented away from their roots (see the
rightmost figure of Figure \ref{fig-barycentric-subdivision} for the
encoding of the last square of Figure \ref{fig-squares}).  As a
result, we obtain a finite directed NPC square
complex $(W,o^*)$.

Consider the universal cover $\tW$ of $W$. It can be viewed
as the complex $\tbetaX$ with a path of length $r(\alpha)$
added to each 1-vertex which encodes an edge of $\tbX$ of
color $\alpha\in A$. We say that the vertices of $\tW$ lying
only on tips are of type $3$ and they are called $3$-\emph{vertices}.
Let $\tildo^*$ denote the orientation of the edges of
$\tW$ induced by the orientation $o^*$ of $W$. Then
$(\tW,\tildo^*)$ is a directed CAT(0) square complex.
Since $W$ is finite, by Proposition~\ref{regular-finite-npc}, the
directed median graph $(\tW^{(1)},\tildo^*)$ has a finite
number of isomorphisms types of principal filters ${\mathcal
  F}_{\tildo^*}(\tz,\tW^{(1)})$.

Let $\tv$ be any 0-vertex of $\tW$. Denote by
$\tW_{\tv}$ the principal filter ${\mathcal
  F}_{\tildo^*}(\tv,\tW^{(1)})$ of
$\tv$ in $(\tW^{(1)},\prec_{\tildo^*})$. By
Proposition \ref{regular-finite-npc}, $\tW_{\tv}$
together with the partial order $\prec_{\tildo^*}$ is the domain of
a regular event structure, which we  call {\it Wise's event
  domain}. Since vertices of different types of $\tW$ are
incident to a different number of outgoing squares, any isomorphism
between two filters of
$(\tW_{\tv},\prec_{\tildo^*})$ preserves the types of
vertices. We summarize all this in the following:

\begin{proposition} \label{wise-regular}  $(\tW_{\tv},\prec_{\tildo^*})$
is the domain of a regular event structure. Any isomorphism between any two filters of
$(\tW_{\tv},\prec_{\tildo^*})$ preserves the types of vertices.
\end{proposition}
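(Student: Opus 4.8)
The first assertion is immediate from Proposition~\ref{regular-finite-npc}: $(W,o^*)$ is a finite directed NPC complex and $\tv$ is a vertex of its universal cover $\tW$, so the principal filter $\tW_{\tv}={\mathcal F}_{\tildo^*}(\tv,\tW^{(1)})$ with the order $\prec_{\tildo^*}$ is the domain of a regular event structure (in fact with at most $|V(W)|=27$ isomorphism types of principal filters).

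For the second assertion I would introduce, for each vertex $\tw$ of $\tW$, the number $s(\tw)$ of squares of $\tW$ having $\tw$ as their \emph{source} (the vertex of the square at which both incident edges point outward), and show first that $s$ is an isomorphism invariant and then that it separates the four types. By Lemma~\ref{directed-median}(iv) every filter of $(\tW_{\tv},\prec_{\tildo^*})$ has the form $\cF={\mathcal F}_{\tildo^*}(\tz,\tW^{(1)})$ for a vertex $\tz$, and by Lemma~\ref{directed-median}(ii) the order $\prec_{\tildo^*}$ agrees on $\cF$ with the basepoint order $\le_{\tz}$; hence the Hasse diagram of $(\cF,\prec_{\tildo^*})$ is the median graph $\tW^{(1)}$ restricted to $\cF$ with edges oriented away from $\tz$, so the $4$-cycles of $\cF$ and the $\prec_{\tildo^*}$-order among their vertices depend only on the poset $(\cF,\prec_{\tildo^*})$. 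Since $\tW$ is a CAT(0) cube complex, its squares are exactly the $4$-cycles of its $1$-skeleton, and the source of a square is the $\prec_{\tildo^*}$-minimum of its four vertices. If $\tw\in\cF$, then every square of $\tW$ with source $\tw$ has all four vertices $\succeq\tw\succeq\tz$, hence lies in $\cF$; therefore $s(\tw)$ equals the number of $4$-cycles of $\cF$ whose vertex set has $\tw$ as its $\prec_{\tildo^*}$-minimum, and this number is the same for every filter $\cF$ containing $\tw$. As this is a purely order-theoretic description, any isomorphism $f$ between two filters of $(\tW_{\tv},\prec_{\tildo^*})$ satisfies $s(\tw)=s(f(\tw))$ for every vertex $\tw$ of its domain (apply $f$ to the sub-filters ${\mathcal F}_{\tildo^*}(\tw,\cdot)$ and ${\mathcal F}_{\tildo^*}(f(\tw),\cdot)$, using Lemma~\ref{directed-median}(iv) once more).

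It then remains to evaluate $s$ on the four types. Using the description of $\tW$ as $\tbetaX=\beta F_4\times\beta F_6$ with a tip (a directed path) attached at every $1$-vertex, the fact that a tip carries no square, and the fact that a product square $e\times e'$ has source $(\mathrm{source}(e),\mathrm{source}(e'))$, one sees that for $\tw=(u,v)$ the number $s(\tw)$ equals the product of the number of edges leaving $u$ in $\beta F_4$ and the number of edges leaving $v$ in $\beta F_6$. A $0$-vertex is a pair of original vertices, with $2$ and $3$ outgoing edges respectively, so $s=6$; a $1$-vertex is a pair of a subdivision midpoint (one outgoing edge) and an original vertex, so $s\in\{2,3\}$ according to which of the two factors contains the midpoint; a $2$-vertex is a pair of midpoints, so $s=1$; and a $3$-vertex lies on a tip, so $s=0$. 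The value-sets $\{6\}$, $\{2,3\}$, $\{1\}$, $\{0\}$ attached to the four types are pairwise disjoint, so $s(\tw)=s(f(\tw))$ forces $\tw$ and $f(\tw)$ to be of the same type, which proves the proposition.

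I do not expect a genuine obstacle here. The one point requiring care is the identification of the squares of $\tW$ with the $4$-cycles of $\tW^{(1)}$ and of their sources with $\prec_{\tildo^*}$-minima, which is where the CAT(0)/median structure and Lemma~\ref{directed-median} are really used; the subsequent counting of outgoing squares at each type of vertex is routine, the only mild subtlety being that the two kinds of $1$-vertex (midpoints in the first versus the second tree factor) yield two distinct values of $s$, both of which are nonetheless different from the $s$-values of the other three types.
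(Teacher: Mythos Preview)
Your proof is correct and follows the same approach as the paper. The paper's argument (given in the paragraph preceding the proposition rather than in a formal proof environment) consists of the single sentence ``Since vertices of different types of $\tW$ are incident to a different number of outgoing squares, any isomorphism between two filters of $(\tW_{\tv},\prec_{\tildo^*})$ preserves the types of vertices''; you have simply made this precise by defining the invariant $s(\tw)$, justifying carefully via Lemma~\ref{directed-median} that it is determined by the order structure of any filter containing $\tw$, and computing the values $\{6\},\{2,3\},\{1\},\{0\}$ for the four types.
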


\subsection{$({\tW_{\tv}},\prec_{\tildo^*})$ does not have a regular nice labeling}

In this subsection we prove that the event structure associated with
Wise's regular event domain is a counterexample to Thiagarajan's
conjecture.

\begin{theorem} \label{no-regular-labeling} $({\tW_{\tv}},\prec_{\tildo^*})$
does not admit a regular nice labeling.
Consequently, Conjectures~\ref{conj-thi} and~\ref{conj-thi-2} are false.
\end{theorem}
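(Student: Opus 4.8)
The plan is to argue by contradiction. Suppose the event structure $\mathcal E$ with domain $(\tW_{\tv},\prec_{\tildo^*})$ admits a regular nice labeling $\lambda\colon E\to\Sigma$, so that $\Sigma$ is finite and $\mathcal E^{\lambda}$ has finite index, say $N$. The idea is that a nice labeling of finite index would force the colouring of Wise's plane to be eventually periodic, contradicting the ``period doubling'' Proposition~\ref{period-doubling}.

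\emph{Step 1 (locating the aperiodic plane inside the filter).} Let $P_y$ and $P_c$ be the bi-infinite $y$- and $c$-coloured paths of $\tbX$ through $\tv$, let $\Pi_{yc}=\conv(P_y\cup P_c)$ and let $\Pi^{++}_{yc}=\conv(P_y^+\cup P_c^+)$ be the quarter-plane spanned by the outgoing rays at $\tv$. Since $P_y^+$ and $P_c^+$ are directed away from $\tv$, every vertex of $\Pi^{++}_{yc}$ is joined to $\tv$ by a directed path (up along $P_c^+$, then right along a translate of $P_y^+$), so $\Pi^{++}_{yc}\subseteq\mathcal F_{\tildo}(\tv,\tX^{(1)})$; hence the barycentric subdivision of $\Pi^{++}_{yc}$ together with the tips attached to its $1$-vertices is a subcomplex of $\tW_{\tv}$. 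For $m\ge 0$ write $\tu_m$ for the $0$-vertex of $\tW$ at height $m$ on $P_c^+$, so $\tu_0=\tv$.

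\emph{Step 2 (labeled futures remember the coloured quarter-plane).} I would show that an isomorphism $f\colon\mathcal E^{\lambda}\backslash\tu_i\to\mathcal E^{\lambda}\backslash\tu_j$ forces $M_n(m)=M_n(m+p)$ for all $m\ge i$ and all $n$, where $p:=j-i$ (notation of Proposition~\ref{period-doubling}). Indeed $f$ is a direction-preserving median-graph isomorphism, so $f(\tu_i)=\tu_j$ (minimum to minimum), and by Proposition~\ref{wise-regular} $f$ preserves the type of every vertex; from the types and the directed structure one recovers, for each $1$-vertex $z$, the length of the unique tip at $z$ (the maximal directed path of $3$-vertices issuing from an out-neighbour of $z$), which by injectivity of $r$ equals $r$ of the colour of the corresponding edge of $\tbX$. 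Since at every $0$-vertex of $\Pi_{yc}$ the five outgoing edges reach $1$-vertices carrying tips of the five pairwise distinct lengths $r(a),\dots,r(y)$, ``the $c$-edge'' and ``the $y$-edge'' are intrinsically determined; a straightforward induction along the rays then shows that $f$ maps the $c$-ray (resp.\ $y$-ray) at $\tu_i$ onto the $c$-ray (resp.\ $y$-ray) at $\tu_j$, hence maps the copy of $\Pi^{++}_{yc}$ rooted at $\tu_i$ onto the one rooted at $\tu_j$ while preserving all colours. Being the unique colour-preserving identification sending corner to corner and rays to rays, this map is the vertical coordinate shift by $p$, which gives the stated periodicity.

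\emph{Step 3 (contradiction with period doubling).} Among the $N+1$ configurations $\tu_0,\dots,\tu_N$ two are $R_{\mathcal E^{\lambda}}$-equivalent, say $\tu_i$ and $\tu_j$ with $0\le i<j\le N$; apply Step 2 with $1\le p=j-i\le N$. Choosing $n$ with $2^n-1\ge j$, both $i$ and $i+p=j$ lie in $\{0,\dots,2^n-1\}$ and are distinct, so $M_n(i)\ne M_n(i+p)$ by Proposition~\ref{period-doubling}, contradicting $M_n(i)=M_n(i+p)$. Hence $(\tW_{\tv},\prec_{\tildo^*})$ has no regular nice labeling; since it is the domain of a regular event structure (Proposition~\ref{wise-regular}) and every regular trace labeling is a regular nice labeling, Conjectures~\ref{conj-thi} and~\ref{conj-thi-2} fail.

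The delicate point is Step 2: upgrading the soft hypothesis ``isomorphic labeled futures'' to the rigid conclusion ``the embedded directed plane and its colouring are respected''. This rests on (i) checking that the whole barycentric-subdivided quarter-plane with all its tips really lies in the filter $\tW_{\tv}$; (ii) verifying that tip lengths, hence the colours of the edges of $\tbX$, are an isomorphism invariant of the bare directed median graph, via the type function of Proposition~\ref{wise-regular}; and (iii) exploiting the bijectivity of $r$ (five tips of pairwise distinct lengths at each $0$-vertex) to single out the two ``plane directions'' among the outgoing edges. The remaining bookkeeping with the product structure $\tX=F_4\times F_6$ and the subdivision is routine.
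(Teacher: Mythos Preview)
Your overall strategy matches the paper's, but Step~2 has a genuine gap. You identify ``the copy of $\Pi^{++}_{yc}$ rooted at $\tu_i$'' with the quarter-plane spanned by the $c$-ray and the $y$-ray issuing from $\tu_i$. These are \emph{different} objects: the horizontal row of $\Pi^{++}_{yc}$ at height $i$ is the path whose colour word is $M_n(i)$, and by Proposition~\ref{period-doubling} this is \emph{not} $y^n$ for $0<i<2^n$ (already $M_1(1)=x$). Hence the $y$-ray at $\tu_i$ immediately exits $\Pi^{++}_{yc}$, and the quarter-plane it spans with the $c$-ray is a brand-new plane, not a subcomplex of $\Pi^{++}_{yc}$. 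Your argument (types, tip lengths, injectivity of $r$) shows only that $f$ is colour-preserving and therefore sends the $(y,c)$-quarter-plane at $\tu_i$ to that at $\tu_j$; but since $\bX$ has a single vertex, Lemma~\ref{regular-cube} already gives that \emph{all} $0$-vertices have isomorphic coloured filters, so such a map exists automatically and says nothing about $M_n(i)$ versus $M_n(j)$. Crucially, your Step~2 never uses that $f$ preserves the nice labeling $\lambda$, and without that hypothesis the desired conclusion cannot follow.

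The paper supplies exactly the missing ingredient. Working \emph{inside} $\Pi^{++}_{yc}$, the horizontal half-edges $\tz_{k,i}\tu_{k,i}$ and $\tz_{m,i}\tu_{m,i}$ at the same column $i$ and different heights $k,m$ are parallel (same event), hence carry the same $\lambda$-label. If $f$ mapped $M^*_n(k)$ onto $M^*_n(m)$ vertex by vertex, tip lengths would force $M_n(k)=M_n(m)$, contradicting Proposition~\ref{period-doubling}; so at the first index $i$ where $f(\tz_{k,i+1})\neq\tz_{m,i+1}$, the vertex $\tz_{m,i}$ acquires two distinct outgoing edges $\tz_{m,i}\tu_{m,i}$ and $\tz_{m,i}f(\tu_{k,i})$ with the same $\lambda$-label---one by parallelism, one because $f$ preserves $\lambda$---violating determinism. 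The step you are missing is precisely this use of parallelism-plus-determinism to pin $f$ to the horizontal direction \emph{within} $\Pi^{++}_{yc}$, rather than to the intrinsically determined $y$-direction.
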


\begin{proof}   Since  ${\tW_{\tv}}$ is the  principal filter of a $0$-vertex $\tv$,  ${\tW_{\tv}}$ contains
all vertices of $\tbX$ located in
the quarter of plane  $\Pi^{++}_{yc}$ of $\tbX$, in particular it contains the vertices  of the
paths $P^+_c$ and $P^+_y$. Notice also that ${\tW_{\tv}}$
contains the barycenters and the tips corresponding to the edges  of $\Pi^{++}_{yc}$.

Suppose by way of contradiction that ${\tW_{\tv}}$ has a regular nice
labeling $\lambda$. Since ${\tW_{\tv}}$ has only a finite number of
isomorphism types of labeled filters, the vertical path $P^+_c$
contains two $0$-vertices, $\tz'$ and $\tz''$, which have isomorphic
labeled principal filters. Let $\tz'$ be the end of the vertical
subpath $c^k$ of $P^+_c$ and $\tz''$ be the end of the vertical
subpath $c^m$ of $P^+_c$, and suppose without loss of generality that
$k<m$. Let $n>0$ be a positive integer such that $m\le
2^n-1$. Consider the horizontal convex paths $M_n(k)$ and $M_n(m)$ of
$\Pi^{++}_{yc}$ of length $n$ beginning at the vertices $\tz'$ and
$\tz''$, respectively. For any $0\le i\le n,$ denote by $\tz_{k,i}$
the $i$th vertex of $M_n(k)$ (in particular,
$\tz_{k,0}=\tz'$). Analogously, denote by $\tz_{m,i}$ the $i$th vertex
of $M_n(m)$ (in particular, $\tz_{m,0}=\tz''$). In ${\tW_{\tv}}$, the
paths $M_n(k)$ and $M_n(m)$ give rise to two convex horizontal paths
$M^*_n(k)$ and $M^*_n(m)$ obtained from $M_n(k)$ and $M_n(m)$ by
subdividing their edges. Denote by $\tu_{k,i}$ the unique common
neighbor of $\tz_{k,i}$ and $\tz_{k,i+1}$, $0\le i<n$, in $M^*_n(k)$
(and in $\tW^{(1)}$). Analogously, denote by $\tu_{m,i}$ the unique
common neighbor of $\tz_{m,i}$ and $\tz_{m,i+1}$, $0\le i<n$ (see
Figure~\ref{fig-contre-ex}).  The paths $M^*_n(k)$ and $M^*_n(m)$
belong to the principal filters ${\mathcal
  F}_{\tildo^*}(\tz',\tW^{(1)})$ and ${\mathcal
  F}_{\tildo^*}(\tz'',\tW^{(1)})$, respectively.

 \begin{figure}
   \begin{center}
     \includegraphics[page=10, scale=0.65]{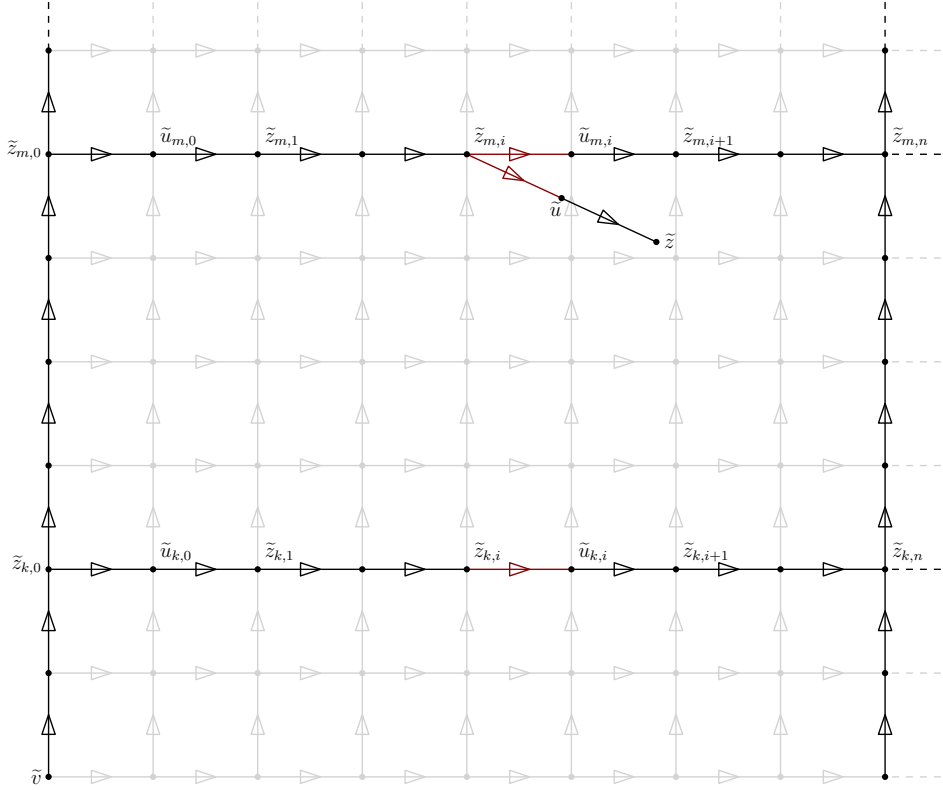}
   \end{center}
   \caption{To the proof of Theorem~\ref{no-regular-labeling}}
   \label{fig-contre-ex}
 \end{figure}

By  Proposition \ref{period-doubling}, the words $M_n(k)$ and $M_n(m)$ are different.   Let $f$ be an isomorphism between
the filters ${\mathcal F}_{\tildo^*}(\tz_{k,0},\tW^{(1)})$ and ${\mathcal F}_{\tildo^*}(\tz_{m,0},\tW^{(1)})$.
Since the words  $M_n(k)$ and $M_n(m)$  are different, from the choice of the lengths of tips  in the complexes $W$ and  $\tW$
it follows that $f$ cannot map  the path $M^*_n(k)$ to the path $M^*_n(m)$
by a vertical translation, i.e., there exists an index $0 \le j < n$
such that $f(\tz_{k,j+1})\ne \tz_{m,j+1}$; let $i$ be the smallest
such index.
Set $\tz:=f(\tz_{k,i+1})$ and $\tu:=f(\tu_{k,i})$. Since $f$
preserves the types of vertices, $\tz$ is a $0$-vertex and  $\tu$ is a 1-vertex. Since $f$ maps a convex path $M^*_n(k)$ to a
convex path, $\tu$ is the unique common neighbor of $\tz_{m,i}$ and $\tz$. Since each 1-vertex is the barycenter of a
unique edge of $\tbX$ and $\tz\ne \tz_{m,i+1}$, we deduce that
$\tu\ne \tu_{m,i}$. The edge $\tz_{k,i}\tu_{k,i}$ is directed from $\tz_{k,i}$ to $\tu_{k,i}$.
Analogously the edges  $\tz_{m,i}\tu_{m,i}$  and  $\tz_{m,i}\tu$  are directed from  $\tz_{m,i}$ to $\tu_{m,i}$
and  $\tu$, respectively.  Since $\tz_{k,i}\tu_{k,i}$  and  $\tz_{m,i}\tu_{m,i}$  are parallel edges, they
define the same event and therefore $\lambda(\tz_{k,i}\tu_{k,i})=\lambda(\tz_{m,i}\tu_{m,i})$. On the other hand,
since  $f$ maps the edge  $\tz_{k,i}\tu_{k,i}$ to the edge
$\tz_{m,i}\tu$ and since the map
$f$ preserves the labels,  we have  $\lambda(\tz_{k,i}\tu_{k,i})=\lambda(\tz_{m,i}\tu)$. As a result, $\tz_{m,i}$ has two outgoing edges,
 $\tz_{m,i}\tu_{m,i}$ and $\tz_{m,i}\tu$, having the same label, contrary to the assumption that $\lambda$
is a nice labeling. This contradiction shows that
$({\tW_{\tv}},\prec_{\tildo^*})$ does not admit a regular
nice labeling. By Proposition \ref{regular-finite-npc}, $({\tW_{\tv}},\prec_{\tildo^*})$ is the domain of a
regular event structure, establishing that Conjectures~\ref{conj-thi} and~\ref{conj-thi-2} are false.
This concludes the proof of the theorem.
\end{proof}

\subsection{$({\tW_{\tv}},\prec_{\tildo^*})$ has bounded $\natural$-cliques} \label{recognizable}

In this section, we show that our counterexample to Thiagarajan's
conjecture also provides a counterexample to
Conjecture~\ref{conj-badara-2} (and thus to
Conjecture~\ref{conj-badara}) of Badouel et
al~\cite{BaDaRa}. In~\cite{BaDaRa}, the conjecture was stated for
conflict event domains that are more general than the domain of event
structures we consider in this paper. However, we show in the next
proposition that their conjecture does not hold even for the domains
of event structures.



\begin{proposition}\label{cex-badara}
  Wise's event domain
  $(\tW_{\tv},\prec_{\tildo^*})$ has
  bounded $\natural$-cliques. Consequently,
  $(\tW_{\tv},\prec_{\tildo^*})$ is a
  counterexample to Conjectures~\ref{conj-badara}
  and~\ref{conj-badara-2}.
\end{proposition}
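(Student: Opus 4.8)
The plan is to prove the stronger statement that every $\natural$-clique of $(\tW_{\tv},\prec_{\tildo^*})$ has at most $6$ events. Recall that $\tW$ is obtained from the Cartesian product $\beta F_4\times\beta F_6$ of the two subdivided trees by attaching a tip (a directed path) at each $1$-vertex, so $\tW$ is a $2$-dimensional CAT(0) square complex whose edges fall into three classes: \emph{horizontal} edges $f\times\{w\}$ (with $f$ an edge of $\beta F_4$, i.e.\ a half of an edge of $F_4$, and $w$ a vertex of $\beta F_6$), \emph{vertical} edges $\{a\}\times g$, and \emph{tip} edges; correspondingly, every hyperplane of $\tW$ — hence every event of $\tW_{\tv}$, using that the hyperplanes of the convex subcomplex $\tW_{\tv}$ are exactly the nonempty (connected) traces of those of $\tW$ — is horizontal, vertical, or a single tip edge. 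First I would record three elementary facts. (a) Since $\beta F_4\times\beta F_6$ contains the square $f\times g$ for every horizontal half-edge $f$ and vertical half-edge $g$, every horizontal hyperplane crosses every vertical hyperplane; moreover, if two hyperplanes are co-initial at a vertex $p$ of $\tW_{\tv}$ and cross in $\tW$, then they already cross in $\tW_{\tv}$, since the witnessing square has $p$ as its source corner and $p\succeq\tv$. (b) Two hyperplanes of the same class never cross (they project to distinct hyperplanes of a tree), and two horizontal, resp.\ two vertical, edges sharing a vertex are never opposite edges of a common square. (c) A tip hyperplane is a single edge, crosses nothing, and every vertex of $\tW$ is incident to at most one outgoing tip edge, so two tip edges are never co-initial.

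Next I would run the case analysis over the three defining clauses of $\natural$. Using (a)--(c): by clause (1) one of $e_1,e_2$ is horizontal and the other vertical; a witness $e_3$ in clause (3) crosses one of $e_1,e_2$, hence is horizontal or vertical, and by (a) it cannot be in minimal conflict in $\tW_{\tv}$ with a hyperplane of the opposite class. Hence \emph{two horizontal hyperplanes of $\tW_{\tv}$ that are $\natural$-related can be related only through clause (2)}, so they are co-initial; the same holds for two vertical hyperplanes, and — combining (b), (c) with the same reasoning — two tip hyperplanes are never $\natural$-related. The counting step is then: if two horizontal hyperplanes are co-initial at a vertex $p$, then $p$ carries at least two outgoing horizontal edges, which forces the $\beta F_4$-coordinate of $p$ to be an old vertex $a$ of $F_4$ and the two half-edges to be the near-halves of the two edges of $F_4$ leaving $a$; since $F_4$ has exactly $2$ outgoing edges at every vertex and, by (b), all horizontal hyperplanes of a $\natural$-clique are pairwise co-initial and therefore sit over the same $a$, a $\natural$-clique has at most $2$ horizontal hyperplanes. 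The identical argument for $F_6$ (with $3$ outgoing edges at each vertex) gives at most $3$ vertical hyperplanes, and at most $1$ tip hyperplane by the above. Hence $|K|\le 2+3+1=6$ for every $\natural$-clique $K$, so $(\tW_{\tv},\prec_{\tildo^*})$ has bounded $\natural$-cliques. Combined with Theorem~\ref{no-regular-labeling}, which exhibits $(\tW_{\tv},\prec_{\tildo^*})$ as a regular event domain admitting no regular nice labeling and hence no regular trace labeling, this yields a regular event structure with bounded $\natural$-cliques that is not a regular trace event structure, i.e.\ a counterexample to Conjectures~\ref{conj-badara} and~\ref{conj-badara-2}.

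The step I expect to be the main obstacle is controlling the passage from $\tW$ to the convex subcomplex $\tW_{\tv}$. The genuinely delicate point is to rule out \emph{new} minimal conflicts: a horizontal and a vertical hyperplane that cross in $\tW$ might a priori only osculate after restriction, which through clause (3) could link two hyperplanes of the same class and break the bounds $2$ and $3$. This is precisely excluded by fact (a): whenever two such hyperplanes are co-initial at a vertex $p$ of $\tW_{\tv}$, the square witnessing their crossing in $\tW$ has $p$ as source corner and so lies in $\tW_{\tv}$, hence they still cross there; together with the observation that a trace of an ambient hyperplane stays connected, this lets the analysis carried out in $\tW$ be transported verbatim to $\tW_{\tv}$. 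A subsidiary bookkeeping task is to enumerate, at a $1$-vertex $z$, which outgoing half-edges at $z$ can be co-initial with the root edge of the tip $R_z$, in order to confirm that tip hyperplanes contribute only the additive constant $1$.
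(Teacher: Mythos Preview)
Your proof is correct and rests on the same key observation as the paper's: two horizontal (resp.\ vertical) events that are $\natural$-related must in fact be in minimal conflict, because any witness $e_3$ in clause~(3) would have to be of the opposite class, and a horizontal and a vertical event co-initial at a vertex of $\tW_{\tv}$ always span a square there (this is exactly your fact~(a), and it is where the CSC property of $\bX$ enters, as the paper notes in Remark~\ref{rem-csc-badara}). Your treatment of tips is likewise the same as the paper's.

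Where you differ is in the counting step. The paper, after proving the Claim, argues loosely from the fact that the domain has degree~$5$ to obtain the bound $|Q|\le 11$. You instead exploit the tree structure explicitly: a horizontal hyperplane of $\tW$ is determined by a half-edge of $\beta F_4$, so once two horizontal events are co-initial their half-edges share a common source, an old vertex $a$ of $F_4$; any third horizontal event in the clique, being in $\#_\mu$ with one of them, must have its half-edge also starting at $a$, and there are only two such half-edges. This yields the sharper bound $2+3+1=6$. Your argument is more transparent and avoids the slightly delicate passage in the paper's proof from ``degree $5$'' to ``at most $5$ pairwise co-initial events'', at the cost of unpacking the product-of-trees structure a bit more.
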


\begin{proof}
  By Proposition \ref{regular-finite-npc},
  $({\tW_{\tv}},\prec_{\tildo^*})$ is the domain of
  a regular event structure.  Recall that each event corresponds to a
  class of parallel edges of $\tW_{\tv}^{(1)}$.
  We refer to the events of
  $({\tW_{\tv}},\prec_{\tildo^*}^*)$ as {\it
    vertical, horizontal}, and {\it tip-events} depending of the type
  of edges from their parallelism class.


\begin{claimet}\label{claim-natural}
 If $e_1\natural e_2$ and  $e_1$ and $e_2$ are either both vertical or
 both horizontal, then $e_1\#_{\mu} e_2$.
\end{claimet}

\begin{proof}
  Without loss of generality, assume that both events $e_1$ and $e_2$
  are vertical, and note that $e_1$ and $e_2$ cannot be concurrent.
  Suppose by way of contradiction that $e_1\natural_{(3)} e_2$. Then
  there exists an event $e_3$ such that $e_1\| e_3$, $e_2\#_{\mu}e_3$
  and $e_3$ is co-initial with $e_1$ and $e_2$ at two different
  configurations. Since $e_1\| e_3$ and $e_1$ is vertical, the event
  $e_3$ cannot be vertical or a tip-event. Hence $e_3$ is
  horizontal. From the definition of ${\tW_{\tv}}$ it follows that the
  horizontal and vertical edges come from the Cartesian product of two
  trees. Therefore any pair of horizontal and vertical events defines
  a square of ${\tW_{\tv}}$, thus they are concurrent. This
  contradicts the fact that $e_3\#_{\mu} e_2$ and establishes the
  claim.
\end{proof}

Let $Q$ be a $\natural$-clique of ${\tW_{\tv}}$. We asserts that the
size of $Q$ is at most 11. Suppose that $|Q|\ge 12$.  From the
definition of $(\tW,\tildo^*)$ it follows that
$({\tW_{\tv}},\prec_{\tildo^*})$ has degree 5: the out-degree of any
$0$-vertex is 5, the out-degree of any 1-vertex is either 4 or 5, the
out-degree of any 2-vertex is 2, and the out-degree of any 3-vertex is
either 0 or 1. This implies that the maximum number of events of $Q$
that are pairwise concurrent or in minimal conflict is 5. From the
definition of $({\tW_{\tv}},\prec_{\tildo^*})$ it also follows that
two tip-events cannot be concurrent or in minimal conflict. Also from
condition (3) in the definition of $\natural$ it immediately follows
that $Q$ cannot contain two tip-events $e_1$ and $e_2$ such that
$e_1\natural_{(3)} e_2$. Indeed, if this happen, then there exists an
event $e_3$ such that $e_1\| e_3$, thus $e_1$ and $e_3$ cannot be
tip-events. Consequently, the $\natural$-clique $Q$ contains at most
one tip-event. Since $|Q|\ge 12$, $Q$ contains at least 6 vertical or
horizontal events, say $Q$ contains a subset $Q'$ of 6 vertical
events. Since all events of $Q'$ are vertical, they are not pairwise
concurrent. Since $Q'$ is a $\natural$-clique and at most 5 events of
$Q'$ can be pairwise in minimal conflict, this implies that $Q'$ must
contain two events $e_1,e_2$ such that $e_1\natural_{(3)} e_2$. But
this is impossible by the claim.  Therefore
$({\tW_{\tv}},\prec_{\tildo^*})$ is a regular conflict event domain
with bounded $\natural$-cliques and bounded degree. Since by Theorem
\ref{no-regular-labeling} $({\tW_{\tv}},\prec_{\tildo^*})$ does not
admit a regular nice labeling, this shows that
Conjecture~\ref{conj-badara-2} is false.
\end{proof}

\begin{remark}\label{rem-csc-badara}
  In the proof of Proposition~\ref{cex-badara}, we use the fact that
  any pair of horizontal and vertical events are concurrent. This
  property holds because $\bX$ is a CSC (complete square
  complex). Note that the fact that $\bX$ is a CSC is not an essential
  property of $\bX$ in the proof of Theorem~\ref{no-regular-labeling}.

  Consequently, if we want to adapt the proof of
  Theorem~\ref{no-regular-labeling} to other square complexes to find
  other counterexamples to Thiagarajan's Conjecture~\ref{conj-thi},
  it may be sufficient to consider $VH$-complexes (see
  Section~\ref{aperiodic}), but in order to use the arguments in the
  proof of Proposition~\ref{cex-badara} to find other counterexamples
  to Badouel et al.'s Conjecture~\ref{conj-badara}, we need to
  consider complete square complexes.
\end{remark}

\section{Aperiodic tilings and regular event structures} \label{aperiodic}


Our counterexample $(\tW_{\tv},\prec_{\tildo^*})$ of a regular
2-dimensional event domain without a regular labeling heavily uses
the fact that the universal cover $\tbX$ of Wise's complex
$\bX$ \cite{Wi_csc} contains a particular aperiodic tiled plane (that
is called \emph{antitorus} by Wise).  In this section, we show that
the relationship between the existence of aperiodic planes and
nonexistence of regular labelings is more general. Namely, we explain
how to obtain other counterexamples from 4-way deterministic aperiodic
tile sets.


\emph{Tiles} (or \emph{Wang-tiles}) are unit squares with colored
edges. The edges of a Wang tile are called \emph{top} (or
\emph{North}), \emph{right} (or \emph{East}), \emph{bottom} (or
\emph{South}) and \emph{left} (or \emph{West}) edges in a natural
way. A {\it tile set} $T$ is a finite collection of Wang-tiles, placed
with their edges horizontal and vertical.  A \emph{tiling} is a
mapping $f: \mathbb{Z}^2 \to T$ that assigns a tile to each integer
lattice point of the plane. A tiling $f$ is \emph{valid} if every two
adjacent tiles have the same color on their common edge. Note that a
tile may not be rotated or flipped, i.e., each tile has a bottom-top
and left-right orientation.  A tiling $f$ is \emph{periodic} with
period $(a,b) \in \mathbb{Z}^2\setminus\{(0,0)\}$ if for every $(x,y)
\in \mathbb{Z}^2$, $f(x,y) = f(x+a,y+b)$.  If there exists a valid
periodic tiling with tiles of $T$, then there exists a valid
\emph{doubly periodic} tiling with tiles of
$T$~\cite{Robinson_tiling}, i.e., a valid tiling $f$ and two integers
$a, b > 0$ such that $f(x,y) = f(x+a,y) = f(x,y+b)$ for every $(x,y)
\in \mathbb{Z}^2$.  A tile set $T$ is called \emph{aperiodic} if there
exists a valid tiling with tiles of $T$, and there does not exist any
periodic valid tiling with tiles of $T$.

Let $T=\{ t_1,\ldots, t_n\}$ be a tile set. We consider
each tile $t_i$ as a unit square whose edges are directed and colored.
Suppose that each square $t_i$ has two vertical and two horizontal
edges and suppose that the horizontal and the vertical edges of all
squares are colored differently, i.e., the set of colors can be
partitioned into horizontal colors and vertical colors. The horizontal
edges are directed from left to right and the vertical edges are
directed from bottom to top.

A Wang tile set is said to be \emph{NW-deterministic}~\cite{KaPa}, if
within the tile set there does not exist two different tiles that have
the same colors on their top and left edges. \emph{NE-deterministic},
\emph{SW-deterministic}, and \emph{SE-deterministic} tile sets are
defined analogously. A Wang tile set is \emph{4-way
  deterministic}~\cite{KaPa} if it is NW-, NE-, SW-, and
SE-deterministic. Kari and Papasoglu \cite{KaPa} presented a 4-way
deterministic aperiodic tile set $T_{KP}$.

Given a 4-way deterministic set of tiles $T$, let $\bX(T) =
(X(T),o,\nu)$ be the finite square complex obtained by identifying all
the vertices and gluing together the squares of $T$ along the sides
which have the same color respecting their orientation.  Then $\bX(T)$
is a $VH$-complex that has a unique vertex. Consequently, the
universal cover $\tbX(T)$ of $\bX(T)$ is a CAT(0) $VH$-complex.
Denote by $W(T)$ the finite directed NPC complex derived from
$\bX(T)$ in the same way as the complex $W$ was derived from Wise's
complex $\bX$ in Subsection \ref{WiseW} (taking the first barycentric
subdivision and adding tips of different lengths to encode the
different colors). Let $({\tW(T)}_{\tv},\prec_{\tildo^*})$ denote the
2-dimensional event domain derived from $\tbX(T)$ in the same way as
$({\tW_{\tv}},\prec_{\tildo^*})$ was derived from $\tbX$. Since
$({\tW(T)_{\tv}},\prec_{\tildo^*})$ comes from the universal cover of
the finite directed NPC complex $W(T)$,
$({\tW(T)}_{\tv},\prec_{\tildo^*})$ is a strongly regular event
structure.  The following lemma establishes a connection between the
existence of valid tilings for 4-way deterministic tile sets and the
existence of directed planes in the universal covers of the derived
$VH$-complexes.

\begin{lemma}\label{lem-Ttiles}
  For a 4-way deterministic tile set $T$, the following conditions are
  equivalent:
  \begin{enumerate}[(i)]
  \item there exists a valid tiling with the tiles of $T$;
  \item the universal cover $\tbX(T)$ of the square complex
    $\bX(T)$ contains directed planes;
  \item the strongly regular domain
    $({\tW(T)_{\tv}},\prec_{\tildo^*})$ is not hyperbolic.
  \end{enumerate}
\end{lemma}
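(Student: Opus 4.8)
The plan is to prove the cycle of implications $(i)\Rightarrow(ii)\Rightarrow(iii)\Rightarrow(i)$.

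For $(i)\Rightarrow(ii)$ I would use a developing map. Given a valid tiling $f\colon\ZZ^2\to T$, equip $\RR^2$ with its standard tiling by unit squares and define a combinatorial map $\Phi\colon\RR^2\to\bX(T)$ sending the unit square with lower-left corner $(i,j)$ to the tile $f(i,j)$, respecting the $VH$-structure (left/right sides are vertical, top/bottom sides are horizontal) and the orientations (horizontal edges left-to-right, vertical edges bottom-to-top). Validity of $f$ makes $\Phi$ well defined along shared edges. The main point is that $\Phi$ is a local isometry: at each lattice point the four incident edge-ends map to four distinct vertices of $\Link(\bX(T))$ (the two vertical ends to a source-end and a sink-end, and likewise the two horizontal ends), and the four corner squares to four distinct tile-corners; it is exactly the $4$-way determinism of $T$ that guarantees $\Link(\bX(T))$ is a simple graph, so no collision occurs and the image of each vertex link is a full $4$-cycle. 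Since $\RR^2$ is simply connected, $\Phi$ lifts to a local isometry $\widetilde\Phi\colon\RR^2\to\tbX(T)$; as a connected locally convex subgraph of a median graph is convex (Lemma~\ref{convex}), $\widetilde\Phi$ is an isometric embedding onto a convex subcomplex isometric to $\RR^2$, that is, a plane $\Pi$, and since $\widetilde\Phi$ preserves directions, $\Pi$ is a directed plane.

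For $(ii)\Rightarrow(iii)$, suppose $\tbX(T)$ contains a directed plane $\Pi$ and fix a vertex $\tw$ of $\Pi$; since $\Pi$ is directed, the two outgoing edges at $\tw$ inside $\Pi$ are outgoing in $\tbX(T)$, and by Lemma~\ref{directed-median} the filter ${\mathcal F}_{\tildo}(\tw,\tbX(T)^{(1)})$ meets $\Pi$ exactly in the quadrant of $\Pi$ with corner $\tw$, a convex subgraph that contains isometrically embedded $n\times n$ grids for every $n$; hence ${\mathcal F}_{\tildo}(\tw,\tbX(T)^{(1)})$ is not hyperbolic. Since $X(T)$ has a single vertex, Lemma~\ref{regular-cube} shows that all principal filters of $(\tbX(T)^{(1)},\tildo)$ are isomorphic, so ${\mathcal F}_{\tildo}(\tv,\tbX(T)^{(1)})$ is not hyperbolic either. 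Finally $\tW(T)_{\tv}$ is obtained from the barycentric subdivision of ${\mathcal F}_{\tildo}(\tv,\tbX(T)^{(1)})$ by attaching tips of uniformly bounded length, hence is quasi-isometric to ${\mathcal F}_{\tildo}(\tv,\tbX(T)^{(1)})$; as hyperbolicity is a quasi-isometry invariant, $\tW(T)_{\tv}$ is not hyperbolic.

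For $(iii)\Rightarrow(i)$ I argue contrapositively: assume $T$ admits no valid tiling and show $\tW(T)_{\tv}$ is hyperbolic. As above $\tW(T)_{\tv}$ is quasi-isometric to ${\mathcal F}_{\tildo}(\tv,\tbX(T)^{(1)})$, which is the $1$-skeleton of a CAT(0) cube complex (a convex subcomplex of $\tbX(T)$ by Lemma~\ref{directed-median}); were it not hyperbolic, Lemma~\ref{hyp_median} would provide isometrically embedded $n\times n$ grids $G_n$ in it for arbitrarily large $n$. The $4$-cycles of $G_n$ bound squares of $\tbX(T)$, so $G_n$ together with these squares forms a grid subcomplex isomorphic to $[0,n]^2$; applying the covering map $\varphi\colon\tbX(T)\to\bX(T)$ yields a valid Wang tiling of the $n\times n$ square, because adjacent unit squares of $G_n$ share an edge and hence a color, and the two directions of $G_n$ map to the horizontal and the vertical directions of $\bX(T)$ (opposite sides of a square of a $VH$-complex have the same type while adjacent sides have different types). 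By the classical compactness argument for Wang tilings (finiteness of $T$; cf.\ \cite{Robinson_tiling}) these finite tilings assemble into a valid tiling of $\ZZ^2$, contradicting the assumption; hence $\tW(T)_{\tv}$ is hyperbolic. The step I expect to be the main obstacle is $(i)\Rightarrow(ii)$: one must verify carefully that the developing map $\Phi$ is a genuine local isometry — this is precisely where $4$-way determinism enters — and that its lift to the universal cover is injective with convex image, so that one develops an embedded plane rather than merely an immersed one; the remaining steps are essentially bookkeeping around the quasi-isometry between $\tW(T)_{\tv}$ and the filter in $\tbX(T)^{(1)}$, the fact that $4$-cycles of CAT(0) cube complexes bound squares, and Wang-tiling compactness.
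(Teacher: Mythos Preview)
Your overall strategy --- proving the cycle $(i)\Rightarrow(ii)\Rightarrow(iii)\Rightarrow(i)$, passing to the filter in $\tbX(T)$ via a quasi-isometry for the last two implications, and finishing with Wang-tiling compactness --- coincides with the paper's. Your treatment of $(i)\Rightarrow(ii)$ via a developing map is more explicit than the paper's one-word ``trivial'', and your quasi-isometry argument for $(ii)\Rightarrow(iii)$ is a perfectly good substitute for the paper's direct appeal to Lemma~\ref{hyp_median}.

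There is, however, a genuine gap in your $(iii)\Rightarrow(i)$, precisely at the step you treat as routine. You assert that projecting an isometrically embedded $n\times n$ grid $G_n\subset\mathcal F_{\tildo}(\tv,\tbX(T)^{(1)})$ via $\varphi$ yields a valid Wang tiling, because adjacent squares share a colour and the two axes of $G_n$ are horizontal and vertical in the $VH$-sense. That is not enough: a valid Wang tiling also requires that adjacent tiles meet East-to-West (resp.\ North-to-South), not East-to-East. Whether a shared vertical edge $e$ is the East or West side of an adjacent square $Q$ is determined by the \emph{orientation} of the horizontal edges of $Q$, not merely by their $H$/$V$ type. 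An isometric grid in a pointed median graph need not be directed: the gate $g$ from $\tv$ to the (convex) grid may sit in its interior, in which case edges point \emph{away} from $g$ and the grid decomposes into four directed quadrants with incompatible orientations. Only one quadrant (of side $\ge n/2$) projects to a valid tiling. This is exactly what the paper's Claim inside the proof handles: it shows, using the $VH$-structure and the median property, that no vertex of the grid can be the sink of two parallel edges of the same type, which forces the orientations along each row and column to reverse at most once and hence yields a large directed subgrid. You identified $(i)\Rightarrow(ii)$ as the main obstacle, but in fact the orientation-consistency step in $(iii)\Rightarrow(i)$ is where the real work lies; once you add the gate/quadrant argument (or the paper's Claim), your proof goes through.
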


\begin{proof}
  The implication $(i) \Rightarrow (ii)$ is trivial and the
  implication $(ii) \Rightarrow (iii)$ follows from
  Lemma~\ref{hyp_median}.  Suppose now that
  $({\tW(T)_{\tv}},\prec_{\tildo^*}))$ is not hyperbolic. Then by
  Lemma~\ref{hyp_median}, for any integer $k$, the $VH$-complex
  ${\tW(T)_{\tv}}$ contains a square grid of size $2k \times 2k$. The
  following claim implies that in such a grid, we can find a $k \times
  k$ directed square grid in the directed $VH$-complex
  $({\tW(T)_{\tv}},\tildo^*))$.

  \begin{claimet}
    For any vertical (respectively, horizontal) edge $\te$ going from
    $\tu$ to $\tw$ and for any two squares $Q_1, Q_2$ in
    $({\tW(T)_{\tv}},\tildo^*))$ intersecting
    on $\te$, $\tu$ cannot be the sink of both horizontal (respectively,
    vertical) edges of $Q_1$ and $Q_2$ incident to $\tu$.
  \end{claimet}

  \begin{proof}
    By way of contradiction, assume that $u$ is the sink of the
    horizontal edges $\te_1 =\tu_1\tu$ of $Q_1$ and $\te_2 = \tu_2\tu$
    of $Q_2$. By Lemma~\ref{directed-median}, $\tu_1, \tu_2 \in
    I(\tv,\tu)$ and the median $\tm$ of $\tu_1$, $\tu_2$, and $\tv$ is
    adjacent to $\tu_1, \tu_2$ and at distance $2$ from
    $\tu$. Consequently, $\tu\tu_1\tm\tu_2$ is a square of
    ${\tW(T)_{\tv}}$ and thus of $\tW(T)$ but since $\tu\tu_1$ and
    $\tu\tu_2$ are horizontal edges, this contradicts the fact that
    $\tbX(T)$ is a $VH$-complex.
  \end{proof}

  Consequently, we can tile arbitrary large squares of the plane with
  the tiles of $T$. By a folklore compactness result from tiling
  theory, this implies that we can find a valid tiling of the plane
  with the tiles of $T$, concluding the proof of $(iii) \Rightarrow
  (i)$.
\end{proof}

Note that if $T$ is a 4-way deterministic aperiodic tile set, all the
directed planes of $\tbX(T)$ are tiled in an aperiodic way.  In the
case of the tile set of Wise~\cite{Wi_csc} from
Figure~\ref{fig-squares}, the CAT(0) square complex $\tbX$
contains aperiodic directed planes but it also contains some periodic
directed planes.

\begin{remark}\label{rem-tore-bazar}
  As explained in~\cite[Section~4]{KaPa}, the universal cover
  $\tbX(T)$ of the complex $\bX(T)$ derived from a tile set $T$ can
  contain periodic planes that are not directed. This may happen even
  if $T$ does not tile the plane or if $T$ is an aperiodic tile set.

  For these reasons, if $T$ does not tile the plane, the directed
  CAT(0) complexes $\tbX(T)$ and $\tW(T)$ are not necessarily
  hyperbolic, even if all principal filters
  $({\tW_{\tv}},\prec_{\tildo^*})$ are hyperbolic domains.
\end{remark}

We now explain how to derive a counterexample to Thiagarajan's
conjectures from any 4-way deterministic aperiodic tile set.

\begin{theorem}\label{th-cex-pavages}
  For any 4-way deterministic aperiodic tile set $T$, the NPC square
  complex $W(T)$ is not virtually special and the 2-dimensional event
  domain $({\tW(T)}_{\tv},\prec_{\tildo^*})$ does not admit a regular
  nice labeling.
\end{theorem}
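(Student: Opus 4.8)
The plan is to deduce both assertions from the single statement that $(\tW(T)_{\tv},\prec_{\tildo^*})$ admits no regular nice labeling, and to obtain that statement by rerunning the proof of Theorem~\ref{no-regular-labeling} with one new ingredient in place of Wise's period‑doubling Proposition~\ref{period-doubling}. First, non‑virtual‑specialness of $W(T)$ is a formal consequence: if $W(T)$ were virtually special then, $(W(T),o^*)$ being a finite directed NPC complex (so that $o^*$ is an admissible orientation), the domain $(\tW(T)_{\tv},\prec_{\tildo^*})$ would be cover‑special, and by Theorem~\ref{virtuallyspecial} it would carry a regular trace labeling, hence a regular nice labeling. So it suffices to rule out regular nice labelings.

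For the set‑up: since $T$ is aperiodic it admits a valid tiling, so by Lemma~\ref{lem-Ttiles} the CAT(0) square complex $\tbX(T)$ contains a directed plane $\Pi$, whose tiling is aperiodic (again because $T$ is). As $\bX(T)$ has a unique vertex, all $0$‑vertices of $\tW(T)$ are equivalent under an automorphism of $\tW(T)$ (Lemma~\ref{regular-cube}), so we may take the base point $\tv$ to be a grid vertex of $\Pi$; then, exactly as in Section~\ref{sec-preuve-cex}, $\tW(T)_{\tv}={\mathcal F}_{\tildo^*}(\tv,\tW(T)^{(1)})$ contains the barycentrically subdivided quarter‑plane $\Pi^{++}$ of $\Pi$ lying above and to the right of $\tv$, together with the tips encoding the edge‑colours. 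For $q\ge0$ write $\tz_{0,q}$ for the $0$‑vertex of $\Pi^{++}$ at height $q$ on its left column, and $W_q$ for the word of colours of the horizontal edges along the $q$‑th row of $\Pi^{++}$. Assuming a regular nice labeling $\lambda$ exists, there are finitely many isomorphism types of $\lambda$‑labeled principal filters, so among the $\tz_{0,q}$ two, say $\tz_{0,k}$ and $\tz_{0,m}$ with $k<m$, have isomorphic $\lambda$‑labeled filters; fix such an isomorphism $f$ with $f(\tz_{0,k})=\tz_{0,m}$.

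The crux, and the place where the argument genuinely differs from Theorem~\ref{no-regular-labeling}, is finding the right replacement for Proposition~\ref{period-doubling}, namely that $W_k\neq W_m$. I expect this to be the main thing to get right; it turns out to be soft and needs only aperiodicity (4‑way determinism is used upstream, inside Lemma~\ref{lem-Ttiles}, not here): \emph{in any valid tiling of an aperiodic tile set the row‑words $W_0,W_1,\dots$ are pairwise distinct}. Indeed, if $W_k=W_m$ with $k<m$, then the horizontal strip of tiles occupying rows $k,\dots,m-1$ has equal bottom and top edge‑colour words (both $W_k=W_m$), so stacking infinitely many vertical copies of it produces a valid tiling of the plane that is periodic of period $(0,m-k)$, contradicting aperiodicity of $T$.

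Given $W_k\neq W_m$, the remainder is a transcription of the endgame of Theorem~\ref{no-regular-labeling}: pick $n$ so that $W_k$ and $W_m$ already differ within their first $n$ letters, and compare the subdivided horizontal paths $M^*_n(k)$ from $\tz_{0,k}$ and $M^*_n(m)$ from $\tz_{0,m}$, with $0$‑vertices $\tz_{k,i}$ resp.\ $\tz_{m,i}$ and intermediate barycenters $\tu_{k,i}$ resp.\ $\tu_{m,i}$ carrying the tips that encode the $i$‑th horizontal colours; the half‑edges $\tz_{k,i}\tu_{k,i}$ and $\tz_{m,i}\tu_{m,i}$ are parallel in $\tW(T)_{\tv}$ and so get the same $\lambda$‑label. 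If $f$ matched the two paths vertex‑by‑vertex it would carry each $\tu_{k,i}$ to $\tu_{m,i}$ and each tip to a tip of the same length, forcing $W_k=W_m$ on the first $n$ letters; so at the smallest index $i$ with $f(\tz_{k,i+1})\neq\tz_{m,i+1}$, writing $\tu=f(\tu_{k,i})$, one checks $\tu\neq\tu_{m,i}$ (it is a $1$‑vertex adjacent to a $0$‑vertex other than the two ends of its edge, since $f$ preserves vertex types), and then $\lambda(\tz_{m,i}\tu)=\lambda(\tz_{k,i}\tu_{k,i})=\lambda(\tz_{m,i}\tu_{m,i})$ exhibits two distinct outgoing edges at $\tz_{m,i}$ with equal label, contradicting niceness. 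Hence no regular nice labeling exists, and $W(T)$ is not virtually special. The one routine check is that the subdivision/tip bookkeeping and the half‑edge parallelism go through for the $VH$‑complex $\bX(T)$ exactly as for Wise's $\bX$ --- they do, since the proof of Theorem~\ref{no-regular-labeling} never used that $\bX$ is a complete square complex (cf.\ Remark~\ref{rem-csc-badara}).
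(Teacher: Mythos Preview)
Your argument is essentially correct but follows a different line from the paper, and there is one small gap. Equality of the one-sided row-words $W_k=W_m$ in the quarter-plane $\Pi^{++}$ only yields, after vertical stacking, a $(0,m{-}k)$-periodic valid tiling of a \emph{half}-plane, not of the plane; a routine compactness step (translate the half-cylinder tilings and pass to a limit on the cylinder $\mathbb Z\times(\mathbb Z/(m{-}k)\mathbb Z)$) is needed before you can invoke aperiodicity of $T$. With that fix your proof goes through, and the reduction of ``not virtually special'' to ``no regular nice labeling'' via Theorem~\ref{virtuallyspecial} is the same as the paper's.

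The paper's proof of the non-labelability is organised differently. Rather than showing that two row-words differ and then locating the first mismatch, it uses the labeling \emph{positively}. A first pigeonhole on the horizontal base path produces $\ty,\ty'$ with isomorphic labeled filters; since the vertical half-edges above $\ty$ and above $\ty'$ at each height are parallel in the domain (hence carry the same $\lambda$-label), determinism forces the isomorphism to track the two vertical paths vertex-by-vertex, so the two vertical colour sequences \emph{coincide}. A second pigeonhole on that vertical column gives heights $k<m$, and the same parallelism-plus-determinism argument (now horizontally) forces the horizontal colour sequences along rows $k$ and $m$ between columns $\ty$ and $\ty'$ to coincide as well. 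One thus obtains a finite rectangle with identical colour words on opposite sides, and tiling the plane with copies of it gives a doubly periodic tiling directly --- no compactness and no infinite words. Your approach has the merit of transplanting the endgame of Theorem~\ref{no-regular-labeling} almost verbatim; the paper's approach is more self-contained and outputs an explicit finite periodic witness.
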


\begin{proof}
  Consider a 4-way deterministic aperiodic tile set $T$ and the
  associated NPC square complexes $\bX(T)$ and $W(T)$. Since $T$ tiles
  the plane, every vertex $\tv \in \tbX(T)$ is contained in a directed
  colored plane $\bPi$ of $\tbX(T)$. Note that the support $\Pi$ of
  $\bPi$ is the product of a directed path containing only horizontal
  edges and of a directed path containing only vertical edges.
  Consequently, in the directed CAT(0) complex $(\tW(T),\tildo^*)$,
  every $0$-vertex $\tv$ is contained in a directed plane $\Pi^*$
  where $\Pi^*$ is the first barycentric subdivision of $\Pi$.
  Consequently, the directed CAT(0) complex
  $({\tW(T)}_{\tv},\tildo^*)$ contains a quarter of the directed plane
  $\Pi^*$ that we denote by $\Pi^{*++}$. Note that $\Pi^{*++}$ is the
  barycentric subdivision of a quarter of plane $\Pi^{++}$ of the
  directed plane $\Pi$.  Let $P_H^*$ be the horizontal path of
  $\Pi^{*++}$ containing $\tv$.

  Suppose that $({\tW(T)}_{\tv},\prec_{\tildo^*})$ admits a regular
  nice labeling $\lambda$.  This implies that there exist two
  $0$-vertices $\ty, \ty' \in V(P_H^*)$ that have isomorphic labeled
  principal filters.  Let $P^*_V$ and $P'^*_V$ be the vertical paths
  of $\Pi^{*++}$ containing respectively $\ty$ and $\ty'$. Let $P^*_V
  = (\ty = \ty_{0}, \tu_{0}, \ty_{1}, \tu_{1}, \ldots, \ty_{j},
  \tu_{j}, \ldots)$ and $P'^{*}_V = (\ty' = \ty_0', \tu_0', \ty_1',
  \tu_1', \ldots, \ty_j', \tu_j' \ldots)$. Note that for every $j$,
  $\ty_j$ and $\ty_j'$ are $0$-vertices while $\tu_j$ and $\tu_j'$ are
  $1$-vertices. Note that $P_V = (\ty = \ty_0, \ty_1, \ldots, \ty_j,
  \ldots)$ and $P_V' = (\ty' = \ty_0', \ty_1', \ldots, \ty_j',
  \ldots)$ are paths of $\Pi^{++}$.

  Note that for any $j$, $\ty_j\tu_j$ and $\ty_j'\tu_j'$ are parallel
  edges as well as $\tu_j\ty_{j+1}$ and $\tu_j'\ty_{j+1}'$.
  Consequently, $\lambda(\ty_j\tu_{i}) = \lambda(\ty_j'\tu_{i}')$ and
  $\lambda(\tu_j\ty_{j+1}) = \lambda(\tu_j'\ty_{j+1}')$. Since
  $\lambda$ is a nice labeling (and thus is deterministic), and since
  $\ty$ and $\ty'$ have isomorphic labeled principal filters, one can
  easily show by induction on $j$ that for any $j$, $\ty_j$ and
  $\ty_j'$ (respectively, $\tu_j$ and $\tu_j'$) have isomorphic
  labeled principal filters. Consequently, for any $j$, the tips
  attached to $\tu_j$ and $\tu_j'$ have the same length, i.e., the
  edges $\ty_j\ty_{j+1}$ and $\ty_j'\ty_{j+1}'$ have the same color
  $\nu(\ty_j\ty_{j+1}) = \nu(\ty_j'\ty_{j+1}')$ in $\tbX(T)$.

  Since $\lambda$ is a regular nice labeling of
  $({\tW(T)}_{\tv},\prec_{\tildo^*})$, there exists
  $0 \leq k < m $ such that $\ty_k$ and $\ty_m$ have isomorphic
  labeled principal filters. Let $P_k^*$ and $P_m^*$ be the
  horizontal paths of $\Pi^{*++}$ going respectively from $\ty_k$ to
  $\ty_k'$ and from $\ty_m$ to $\ty_m'$. Let $\ell$ be the distance
  from $\ty_k$ to $\ty_k'$ in $\tbX(T)$ and let $P_k^* = (\ty_k =
  \ty_{k,0}, \tu_{k,0}, \ty_{k,1}, \tu_{k,1}, \ldots, \tu_{k,\ell-1},
  \ty_{k,\ell} = \ty_k')$ and $P_m^* = (\ty_m = \ty_{m,0}, \tu_{m,0},
  \ty_{m,1}, \tu_{m,1}, \ldots, \tu_{m,\ell-1}, \ty_{m,\ell} =
  \ty_m')$. Note that $P_k = (\ty_k = \ty_{k,0}, \ty_{k,1}, \ldots,
  \ty_{k,\ell} = \ty_k')$ and $P_m = (\ty_m = \ty_{m,0}, \ty_{m,1},
  \ldots, \ty_{m,\ell} = \ty_m')$ are paths of the plane $\Pi$.  Using
  the same arguments as for $P_V^*$ and $P'^{*}_V$, one can show that
  for any $0\leq i \leq \ell-1$, the edges $\ty_{k,j}\ty_{k,j+1}$ and
  $\ty_{m,j}\ty_{m,j+1}$ have the same color
  $\nu(\ty_{k,j}\ty_{k,j+1}) = \nu(\ty_{k,j}'\ty_{k,j+1}')$ in
  $\tbX(T)$.

  Consider the rectangle $R$ of $\Pi$ with corners $\ty_k$, $\ty_m$,
  $\ty_m'$, and $\ty_k'$.  For any $k \leq j < m$,
  $\nu(\ty_j\ty_{j+1}) = \nu(\ty_j'\ty_{j+1}')$ in
  $\tbX(T)$, i.e., the same sequence of colors appears on both
  vertical sides of $R$. Similarly, the same sequence of colors
  appears on both horizontal sides of the rectangle $R$. Since we can
  tile the plane by using copies of $R$, it is possible to find a
  periodic tiling of the plane using tiles of $T$. But this is
  impossible, since $T$ is an aperiodic tile set.  Consequently, the
  2-dimensional event domain
  $({\tW(T)}_{\tv},\prec_{\tildo^*})$ does not
  admit a regular nice labeling, and by
  Theorem~\ref{virtuallyspecial}, $W(T)$ is not virtually special.
\end{proof}

Using the tile set $T_{KP}$ of \cite{KaPa}, Lukkarila \cite{Lu} proved that for 4-way deterministic
tile sets the tiling problem is undecidable. An immediate consequence
of this result and of Theorem~\ref{th-cex-pavages} is that there
exists an infinite number of counterexamples to
Conjecture~\ref{conj-thi}.

\begin{remark}
Note that the $VH$-complex $W(T)$ derived from a $4$-way deterministic
tile set $T$ is not necessarily a CSC complex. Consequently, we cannot
directly generalize the proof of Proposition~\ref{cex-badara} to show
that if $T$ is aperiodic, then
$({\tW(T)}_{\tv},\prec_{\tildo^*})$ is a
counterexample to Conjecture~\ref{conj-badara} (see
Remark~\ref{rem-csc-badara}).
\end{remark}

\section{Conclusions and open questions}

\subsection{Conclusions}
In this paper, we presented an example of a regular event domain
$({\tW_{\tv},\prec_{\tildo^*}})$ with bounded degree and bounded
$\natural$-cliques which does not admit a regular nice labeling,
providing a counterexample to Conjecture~\ref{conj-thi} of
Thiagarajan~\cite{Thi_regular,Thi_conjecture} and
Conjecture~\ref{conj-badara} of Badouel, Darondeau, and
Raoult~\cite{BaDaRa}.  Furthermore, we show that this counterexample
is not singular and that, in fact, there exists an infinite number of
counterexamples to Conjecture~\ref{conj-thi} arising from the 4-way
deterministic aperiodic tile sets constructed by
Kari-Papasoglu~\cite{KaPa} and Lukkarilla~\cite{Lu}.

The event domain $(\tW_{\tv},\prec_{\tildo^*})$ is a principal filter
of a directed 2-dimensional CAT(0) cube complex which is the universal
cover of a finite directed colored CSC.  At first, one can think that
after trees, such cube complexes are the next simplest event domains
on which Conjectures \ref{conj-thi} or \ref{conj-badara} must be
true. Moreover, it was shown in \cite{ChHa} that any 2-dimensional
CAT(0) cube complex of bounded degree admits a finite nice labeling. A
finite nice labeling of ${\tW_{\tv}}$ can be also directly derived
from the fact that $\tW$ is a product of two trees with attached tips
of various lengths at 1-vertices. However, it turned out that finding
a {\it regular} nice labeling is not always possible even in the case
of 2-dimensional event domains (even those arising from CSC).

On the positive side, we proved that Thiagarajan's conjecture is true
for cover-special event structures (recall that Nielsen and
Thiagarajan established this conjecture for conflict-free event
structures and Badouel et al. proved it for context-free event
structures). As a consequence of deep results from geometric group
theory by Agol and Haglund-Wise, we deduce that strongly hyperbolic
regular event structures are cover-special, showing that Thiagarajan's
conjecture holds for a large and natural class of event domains.

We conclude the paper with a list of open problems, and we hope that
some of these problems will be solved positively.

\subsection{Regular versus strongly regular event structures}
In view of Proposition \ref{regular-finite-npc},
any strongly regular event structure is regular. One can ask if the converse holds
(this was also mentioned by a referee of a preliminary version of this paper~\cite{CC-ICALP17}):

\begin{question} \label{universal-cover-regular} Is any regular event
structure strongly regular? 
\end{question}

A natural way to derive a finite directed NPC complex from the domain
$\mathcal D$ of a regular event structure $\mathcal E$ is to factorize
$\mathcal D$ over all equivalence classes of futures (i.e., to
identify in a single vertex all configurations having the same
principal filter up to isomorphism). Unfortunately, this construction
does not preserve the non-positive curvature of $\cD$. For example,
consider a domain $\cD$ as described on the left of
Figure~\ref{fig-quotient-marche-pas}. In the figure, only a part of
the domain is described: one has to imagine that the dashed arrows
lead to the remaining part of the domain with the assumption that two
nodes that have the same label have isomorphic principal filters. When
we factorize the domain $\cD$ over the equivalence classes of futures,
we obtain the square complex on the left of
Figure~\ref{fig-quotient-marche-pas}. Note that this square complex is
not an NPC square complex as it contains three squares that intersect
in a vertex and that pairwise intersect on edges and these three
squares do not belong to a $3$-cube.

\begin{figure}
  \includegraphics[page=13]{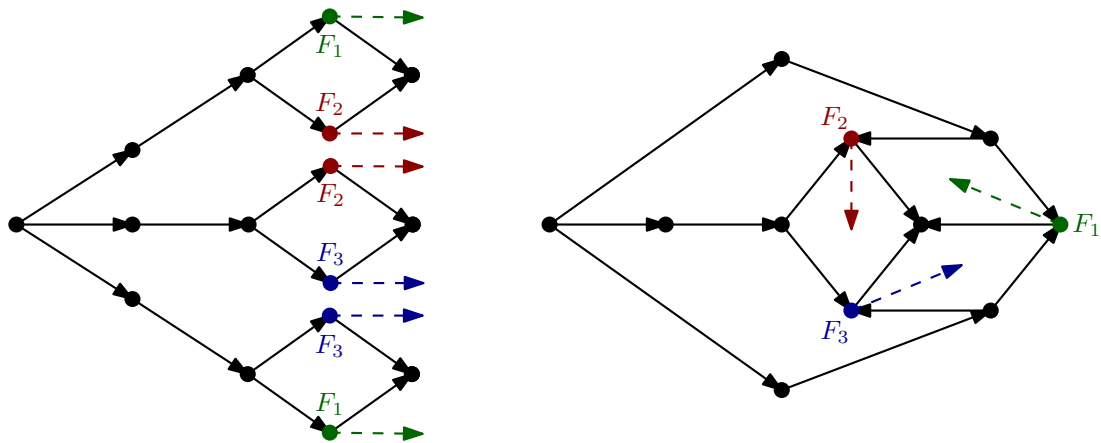}
  \caption{If we factorize the domain on the left over the equivalence
    classes of futures, we obtain the
    square complex on the right that is not an NPC square complex}
  \label{fig-quotient-marche-pas}
\end{figure}


This phenomenon does not arise if we consider $VH$-complexes and
isomorphisms that preserve vertical and horizontal edges. More
formally, the domain $\cD = \cD(\cE)$ of an event structure $\cE$ is a
\emph{$VH$-domain} if $\cD$ is a $VH$-complex. In this case,
$\cE$ is called a \emph{$VH$-event structure} and the events of $\cE$
are partitioned into vertical and horizontal events.  A $VH$-event
structure $\cE$ is \emph{$VH$-regular} if $\cE$ has finite degree and
has a finite number of principal filters up to isomorphism preserving
vertical and horizontal events. In this case, the domain $\cD(\cE)$ is
called a \emph{regular $VH$-domain}.

Even in this case, we do not know how to define formally a directed
NPC square complex according to the factorization mentioned above such
that the original domain is a principal filter of the universal cover
of this complex.

\begin{question} \label{universal-cover-regular-vh}
Does any regular $VH$-domain occur as a principal filter of the
universal cover of some finite directed $VH$-complex?
\end{question}

\subsection{Hyperbolic event domains} \label{ss:hyperbolic}
There are several natural reasons to investigate hyperbolic event
domains.  Similarly to CAT(0) and NPC spaces, Gromov hyperbolicity is
defined by a metric condition. However, similarly to the CAT(0)
property, the hyperbolicity of a CAT(0) cube complex can be expressed
in purely combinatorial way, by requiring that all isometric square
grids have bounded size.  Theorem \ref{hyperbolic} establishes that
Thiagarajan's conjecture is true for strongly hyperbolic regular event
structures. We conjecture that this result can be generalized in the
following way:

\begin{conjecture}\label{conj-hypst}
  Any strongly regular event structure with a hyperbolic domain admits
  a regular nice labeling.
\end{conjecture}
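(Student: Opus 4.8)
The plan is to bootstrap from Theorem~\ref{hyperbolic}. A strongly regular event structure $\cE$ with hyperbolic domain is given together with a finite directed NPC complex $(Y,o)$, a covering map $\varphi\colon\tY\to Y$, and a vertex $\tv$ with $\cD=\cD(\cE)=(\cF_{\tildo}(\tv,\tY^{(1)}),\prec_{\tildo})$; by Lemma~\ref{hyp_median}, hyperbolicity of $\cD$ says exactly that the isometrically embedded square grids contained in $\cD$ have uniformly bounded size. The gap to Theorem~\ref{hyperbolic} is that the ambient complex $\tY$ need not be hyperbolic: as noted in Remark~\ref{rem-tore-bazar}, $\tY$ may contain unbounded flats that never enter the principal filter $\cD$. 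So the heart of a proof would be a reduction step: replace $(Y,o)$ by a \emph{finite} directed NPC complex $(Y',o')$ whose universal cover $\tY'$ \emph{is} hyperbolic and which still has $\cD$ as a principal filter. Granting this, $\cE$ is strongly hyperbolic regular, and Theorem~\ref{hyperbolic} (through Agol's Theorem~\ref{Agol} and virtual specialness) supplies a regular trace labeling of $\cD$, hence in particular a regular nice labeling.

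For the reduction I would first try the most economical candidate for $Y'$: the finite subcomplex $Y_{\cD}\subseteq Y$ built from the cubes of $Y$ having a lift that lies inside $\cD$ (equivalently, the quotient of $\cD$ itself by the ``isomorphic future'' relation, which is finite by strong regularity). Three things are then needed. First, that $Y_{\cD}$ is again NPC — this is genuinely delicate, since a subcomplex of a flag complex need not be flag and the identifications in the quotient can create three squares around a vertex with no completing $3$-cube, exactly the pathology of Figure~\ref{fig-quotient-marche-pas}; here one would use convexity of $\cD$ in $\tY$ (Lemma~\ref{directed-median}) and closure of convex subcomplexes under medians to show the completing cube is already present in $\cD$, hence in $Y_{\cD}$. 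Second, that $\cD$ is still a principal filter of $\widetilde{Y_{\cD}}$, which should follow from the local-isomorphism property of covering maps together with convexity of principal filters. Third, and crucially, that $\widetilde{Y_{\cD}}$ is hyperbolic.

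The main obstacle is precisely this last point. Passing to $Y_{\cD}$ need not destroy all flats: a flat torus immersed in $Y_{\cD}$ may lift to flats that meet $\cD$ only in an infinite \emph{strip}, i.e.\ a region bounded in one of the two grid directions; such a strip is quasi-isometric to a line, hence entirely compatible with $\cD$ being hyperbolic, yet it keeps $\widetilde{Y_{\cD}}$ from being hyperbolic, so Theorem~\ref{hyperbolic} still cannot be applied. Overcoming this would seem to require a finer surgery on $Y$, peeling away the strip-like pieces — guided by the decomposition of the hyperbolic cube complex $\cD$ into its maximal product regions and the attendant fine (factor-system) structure of CAT(0) cube complexes — until the surviving flats disappear, while keeping the complex finite, directed and NPC. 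A parallel, possibly more robust route would bypass Theorem~\ref{hyperbolic} and Agol altogether and build a regular nice labeling of $\cD$ directly: since a hyperbolic CAT(0) cube complex of bounded degree coarsely embeds into a product of finitely many trees~\cite{Hag}, one could colour each edge of $\cD$ by the data of its images in the factor trees and use the finite-futures hypothesis to make the resulting colouring automaton finite; the delicate point there is the Concurrency condition, that opposite edges of every square receive the same label, which is exactly what broke for $(\tW_{\tv},\prec_{\tildo^*})$ once an aperiodic plane appeared — and the reason for optimism is precisely that an aperiodically tiled plane is a flat, hence excluded by Lemma~\ref{hyp_median} in the hyperbolic setting.
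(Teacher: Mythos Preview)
This statement is a \emph{conjecture} in the paper, not a theorem: the authors present it in Section~\ref{ss:hyperbolic} as an open problem extending Theorem~\ref{hyperbolic}, and supply no proof. So there is no argument in the paper to compare your proposal against; your write-up should be read as a research plan rather than a proof, and indeed you yourself flag the argument as incomplete (``the heart of a proof would be\ldots'', ``the main obstacle is precisely this last point'').

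That said, as a plan your diagnosis of the difficulty is exactly right and matches the paper's own discussion: the distinction between ``strongly regular with hyperbolic domain'' and ``strongly hyperbolic regular'' is precisely Remark~\ref{rem-tore-bazar}, and bridging that gap is the whole content of the conjecture. A few concrete issues with the sketch, however. First, your two descriptions of $Y_{\cD}$ --- the subcomplex of $Y$ hit by $\cD$, versus the quotient of $\cD$ by the isomorphic-future relation --- are not in general the same object, and the latter is exactly the construction the paper exhibits as failing to be NPC in Figure~\ref{fig-quotient-marche-pas}; the convexity-of-$\cD$ argument you propose does not address that failure, because the problematic $3$-square configuration is created by the identifications in the quotient, not by anything present in $\cD$ itself. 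Second, even for the first description, restricting to the cubes of $Y$ reached by $\cD$ does not discard any cube in the examples coming from tile sets (every square of $\bX(T)$ has a lift in any principal filter), so $Y_{\cD}=Y$ and nothing is gained. Third, the ``strip'' obstruction you raise is real and you offer no mechanism to remove it; the allusion to factor systems is suggestive but not an argument. The alternative route via the embedding into a product of trees is likewise only a heuristic: the known embedding of \cite{Hag} gives a finite nice labeling, not a regular one, and turning it into a regular labeling is exactly the step that has no known proof. In short, your proposal correctly isolates the gap but does not close it --- which is consistent with the statement being, in the paper, an open conjecture.
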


 Conjecture \ref{conj-badara} was positively solved by Badouel et
 al. \cite{BaDaRa} for context-free domains, which are particular
 hyperbolic domains:

\begin{lemma}
  Any context-free graph $G=(V,E)$ is hyperbolic.
\end{lemma}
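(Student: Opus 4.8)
The plan is to show that a context-free graph is quasi-isometric to a tree, and then to conclude that it is hyperbolic from the facts that trees are $0$-hyperbolic and that hyperbolicity is invariant under quasi-isometry of geodesic metric spaces \cite[Theorem~1.9]{BrHa} (the same invariance already used in Section~\ref{thiagu-special}).

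First I would recall the definition of a context-free graph in the sense of M\"uller and Schupp \cite{MuSch}: a connected graph $G$ of bounded degree which, with respect to a basepoint $v_0$, has only finitely many isomorphism types of \emph{cones}, where the cone of a vertex $v$ with $d_G(v_0,v)=n$ is the connected component of $v$ in the subgraph induced by $\{u\in V: d_G(v_0,u)\ge n\}$, kept together with its set of level-$n$ vertices. The basic structural consequence I would use is standard and essentially proved in \cite{MuSch}: finitely many cone types forces the ``frontiers'' of the cones to have uniformly bounded size, the cones are organised in a natural rooted tree $T$ (each cone being subdivided into the sub-cones rooted at its level-$(n+1)$ vertices, of which there are boundedly many since $G$ has bounded degree and the frontiers are bounded), adjacent cones are glued along their bounded common frontier, and consequently $G$ admits a tree decomposition over $T$ with bags of uniformly bounded size and adhesion. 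Reading off this tree decomposition, $G$ is quasi-isometric to $T$: a vertex is sent to a node carrying one of the bags containing it; coarse surjectivity and the upper distortion bound hold because adjacent bags overlap, while the lower distortion bound holds because to pass from one cone to another a path in $G$ must cross the bounded separating frontiers, so the number of cones a path meets is linearly controlled by its length.

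Having produced a quasi-isometry $G\to T$ with $T$ a tree, I would finish by recalling that $T$ with its graph metric is a geodesic, $0$-hyperbolic space, that realising $G$ in the usual way as a geodesic metric space does not change its quasi-isometry type, and hence, by \cite[Theorem~1.9]{BrHa}, that $G$ is hyperbolic. In particular, since every context-free domain is in particular a context-free graph, it is a hyperbolic event domain, which is what is needed to view Theorem~\ref{hyperbolic} as a partial generalisation of the theorem of Badouel et al.\ \cite{BaDaRa}.

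The point to be careful about is the quasi-isometry verification, i.e.\ the precise bookkeeping relating distances in $G$ to distances in $T$ through the bounded-adhesion tree decomposition; but this is routine and is in substance already contained in the M\"uller--Schupp analysis, so I do not expect a genuine obstacle here — the lemma is really a restatement of the structure theory of context-free graphs. (If one wished to avoid tree decompositions altogether, an alternative is to bound directly the thinness of geodesic triangles, observing that two geodesics with common endpoints cannot diverge and reconverge without generating infinitely many cone types; but the quasi-isometry argument is cleaner.)
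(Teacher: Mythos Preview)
Your proposal is correct and follows essentially the same route as the paper: both arguments observe that a context-free graph has uniformly bounded frontiers/clusters and is therefore quasi-isometric to a tree, hence hyperbolic. The only difference is packaging: the paper simply notes that finitely many end-isomorphism classes forces the cluster diameters to be uniformly bounded by some $\delta$ and then invokes \cite[Proposition~12]{ChDrEsHaVa} (which directly gives $\delta$-hyperbolicity and the quasi-isometry to a tree), whereas you unpack that black box by building the bounded-adhesion tree decomposition explicitly and appealing to \cite[Theorem~1.9]{BrHa}.
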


\begin{proof}
  Let $G=(V,E)$ be a graph of uniformly bounded degree and $v$ be an
  arbitrary root (basepoint) of $G$. Let $S_i=\{ x\in V: d_G(v,x)=i\}$
  denote the sphere of radius $i$ centered at $v$. A connected
  component $F$ of the subgraph of $G$ induced by $V\setminus S_i(v)$
  is called an {\it end} of $G$.  The vertices of $F\cap S_{i+1}(v)$
  are called {\it frontier points} and this set is denoted by
  $\Delta(F)$ \cite{MuSch} and called a {\it cluster}.  Let
  ${\Phi}(G)$ denote the set of all ends of $G$, i.e., the connected
  subgraphs of $G(V\setminus S_i(v))$, when $i$ ranges over the
  natural numbers.  An {\it end-isomorphism} between two ends $F$ and
  $F'$ of $G$ is a mapping $f$ between $F$ and $F'$ such that $f$ is a
  graph isomorphism and $f$ maps $\Delta(F)$ to $\Delta(F')$.  Then
  $G$ is called a {\it context-free graph} \cite{MuSch} if ${\Phi}(G)$
  has only finitely many isomorphism classes under end-isomorphisms.
  Since $G$ has uniformly bounded degree, each cluster $\Delta(F)$ is
  finite. Moreover, from the definition of context-free graphs follows
  that a context-free graph $G$ has only finitely many isomorphism
  classes of clusters, thus there exists a constant $\delta<\infty$
  such that the diameter of any cluster of $G$ is bounded by
  $\delta$. By \cite[Proposition 12]{ChDrEsHaVa} any graph $G$ whose
  diameters of clusters is uniformly bounded by $\delta$ is
  $\delta$-hyperbolic (in fact, $G$ is quasi-isometric to a tree).
\end{proof}

The following conjecture generalizes Theorem~\ref{hyperbolic}, the
results of \cite{BaDaRa} in the case of event structures considered in
this paper, and Conjecture~\ref{conj-hypst}.

\begin{conjecture}\label{conj-hyperbolic}
  Conjectures \ref{conj-thi} and \ref{conj-badara} are true for
  hyperbolic event domains.
\end{conjecture}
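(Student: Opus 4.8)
The plan is to reduce Conjecture~\ref{conj-hyperbolic} to Theorem~\ref{hyperbolic}. Since that theorem already settles Thiagarajan's conjecture for \emph{strongly hyperbolic regular} event structures, it suffices to show that every regular event structure $\cE$ whose domain $\cD$ is hyperbolic is strongly hyperbolic regular, i.e.\ that $\cD$ is isomorphic to a principal filter of $(\tX^{(1)},\tildo)$ for some \emph{finite} directed NPC complex $(X,o)$ with $\tX$ hyperbolic---precisely the hyperbolic instance of Question~\ref{universal-cover-regular}. Granting this, Agol's Theorem~\ref{Agol} makes $X$ virtually special, Proposition~\ref{special-trace} (through Theorem~\ref{virtuallyspecial}) produces a regular trace labeling of $\cD$ over the hyperplanes of a finite special cover of $X$, and Conjecture~\ref{conj-thi} follows for $\cD$; the extra hypothesis of Conjecture~\ref{conj-badara} is handled separately below.

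To build $X$, start from the CAT(0) cube complex $\cD$ together with the equivalence relation $R_{\cE}$ identifying configurations with isomorphic directed futures; regularity makes the quotient finite, and it inherits an orientation from $\prec_{\tildo}$. The obstruction, visible already in Figure~\ref{fig-quotient-marche-pas}, is that this naive quotient need not satisfy the link condition: three squares sharing a vertex and pairwise sharing edges can fail to bound a $3$-cube after identification. The key new step is to repair non-positive curvature while keeping finiteness, and here hyperbolicity is the decisive input: by Lemma~\ref{hyp_median} the isometrically embedded square grids in $\cD$ are uniformly bounded, and (as in the context-free case, where $\cD^{(1)}$ is quasi-isometric to a tree) one expects the global combinatorics of $\cD$ to be determined by its balls of some fixed radius $\rho$. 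I would therefore replace the star-level factorization by a gluing of copies of radius-$\rho$ balls of $\cD$, indexed by isomorphism type, with $\rho$ chosen large enough that every failure of the link condition in the naive quotient is already witnessed inside a single ball; the amalgam assembled from this bounded local data should be a finite directed NPC complex whose directed universal cover is isomorphic to $\cD$, hence still hyperbolic, so that the \v{S}varc--Milnor argument following Theorem~\ref{Agol} applies verbatim.

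Once $\cE$ is known to be strongly hyperbolic regular, Theorem~\ref{hyperbolic} delivers the regular trace labeling and Conjecture~\ref{conj-thi} holds for $\cD$. For the forward direction of Conjecture~\ref{conj-badara} one must also bound the $\natural$-cliques of $\cD$, which I would do in the spirit of Proposition~\ref{cex-badara}: finite degree of $\cE$ bounds both the number of pairwise concurrent events (namely $\dimension(\tX)\le\degree(\cE)$) and the number of pairwise minimally conflicting events at a configuration, while hyperbolicity, again through Lemma~\ref{hyp_median}, forbids long chains of relations of type $\natural_{(3)}$; together these yield a bound on $\natural$-clique size depending only on $\degree(\cE)$ and the hyperbolicity constant.

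The main obstacle is the repair step: extracting a genuinely finite NPC presentation from an arbitrary regular hyperbolic event domain. The failure of the naive factorization is real, and it is not a priori clear that enlarging the gluing radius suffices---one must rule out curvature failures that cascade to ever larger scales, which is exactly where the bounded isometric grids and thin triangles have to be used in an essential, and currently unwritten, way. If this route stalls, a fallback is to bypass special cube complexes altogether: a regular hyperbolic event domain should carry an automatic (indeed word-hyperbolic) structure on its $1$-skeleton, from which a finite trace alphabet can be read off the finite-state description of its halfspace poset, generalizing the context-free argument of Badouel, Darondeau and Raoult~\cite{BaDaRa} built on M\"uller--Schupp~\cite{MuSch}. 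I would attempt the NPC-presentation route first, since it feeds directly into the constructions of Sections~\ref{sec-directed-npc} and~\ref{thiagu-special}, and turn to the automatic-structure approach only if the curvature obstruction proves intractable.
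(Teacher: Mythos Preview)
The statement you are attempting to prove is Conjecture~\ref{conj-hyperbolic}, which the paper explicitly poses as an \emph{open problem} in its concluding section; there is no proof in the paper to compare against. The paper presents it as a simultaneous generalization of Theorem~\ref{hyperbolic}, of the context-free result of Badouel--Darondeau--Raoult, and of the (also open) Conjecture~\ref{conj-hypst}, and offers no argument toward it.

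Your proposal is therefore not a proof but a research plan, and to your credit you say so yourself. The central reduction you propose---show that every regular event structure with hyperbolic domain is strongly hyperbolic regular, then invoke Theorem~\ref{hyperbolic}---is exactly the hyperbolic instance of Question~\ref{universal-cover-regular}, which the paper also leaves open and for which it exhibits the obstruction of Figure~\ref{fig-quotient-marche-pas}. Your ``repair step'' of gluing radius-$\rho$ balls is plausible heuristics, but you have not shown that any finite $\rho$ suffices, nor that the resulting amalgam is NPC, nor that its directed universal cover recovers $\cD$; as you concede, this is ``currently unwritten.'' The $\natural$-clique bound is likewise only sketched: hyperbolicity does not obviously prevent long $\natural_{(3)}$-chains, since the relation $\natural_{(3)}$ is mediated by a third event and need not correspond to a large isometric grid. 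In short, your outline correctly identifies where the difficulty lies, but the gap it names is a genuine open problem, not a routine lemma, and the paper does not close it either.
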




By Lemma \ref{hyp_median}, the $1$-skeleton $X^{(1)}$ of a CAT(0) cube
complex is hyperbolic if and only if all isometrically embedded square
grids are uniformly bounded. In the language of event structures, an
isometrically embedded $n\times n$ grid $H$ corresponds to a
conflict-free event structure defined by $2n$ distinct events
$e_1,\ldots,e_n, f_1,\ldots,f_n$ such that any two events $e_i,f_j$
are concurrent and any two events $e_i,e_j$ or $f_i,f_j$ are either
causally dependent or concurrent. The isometricity follows from the
fact that the events $e_1,\ldots,e_n, f_1,\ldots,f_n$ are pairwise
distinct. If this grid is embedded in a hypercube, then any two events
$e_i,e_j$ or $f_i,f_j$ are concurrent. On the other hand, if
$e_1\lessdot e_2\lessdot\cdots\lessdot e_n$ and $f_1\lessdot
f_2\lessdot\cdots\lessdot f_n$, then this grid is isometrically
embedded as a directed flat square grid. A {\it (directed) flat square grid} of
side $n$ (respectively, a {\it (directed) flat plane}) of a median
graph $G$ is a (directed) $n\times n$-grid $H$ (respectively,
${\mathbb Z}\times {\mathbb Z}$-grid) isometrically embedded in $G$
such that any two squares of $H$ sharing a common edge do not belong
to a common 3-cube of $G$.  Note that if $H$ is a flat square grid or
a flat plane of a median graph $G$, then $H$ is a locally-convex
subgraph of $G$, and by Lemma \ref{convex}, $H$ is a convex subgraph
of $G$. This shows that if $G$ contains a flat square grid of size
$n$, then the graph $\Gamma_{\|}$ of the concurrent relation $\|$
contains an induced complete bipartite subgraph $K_{n,n}$. In a median
graph not containing 3-cubes (i.e., 1-skeletons of 2-dimensional
CAT(0) cube complexes), each embedded grid or plane is flat.  We
continue with a stronger version of Conjecture \ref{conj-hyperbolic}.

\begin{conjecture}\label{conj-hyperbolic-bis}
Conjectures \ref{conj-thi} and \ref{conj-badara} are true for event
domains with uniformly bounded sizes of directed flat square grids.
\end{conjecture}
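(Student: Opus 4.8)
Since the final statement is a conjecture, what follows is a line of attack rather than a complete proof. Both conjectures would be reduced, exactly as in Section~\ref{thiagu-special}, to producing a \emph{regular trace labeling}: for a regular event domain $\cD$ with uniformly bounded directed flat square grids it suffices, by Theorem~\ref{virtuallyspecial} and the discussion following Conjecture~\ref{conj-badara-2}, to exhibit a finite trace alphabet $(\Sigma,I)$ and a regular trace labeling of $\cD$, and, for Conjecture~\ref{conj-badara}, to check separately that $\cD$ has bounded $\natural$-cliques. The latter should follow along the lines of Proposition~\ref{cex-badara}: a $\natural$-clique splits into a set of pairwise concurrent events, whose size is at most the dimension (hence at most the degree, which is finite by regularity), and a set of events pairwise in minimal conflict or related by condition~(3), bounded by the degree, the grid bound being what prevents unbounded ``flat'' interaction between the two parts. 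So the real content is the trace labeling.

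The first step is to place $\cD$ inside a cube complex, i.e.\ to realize $\cD$ as a principal filter of the universal cover $\tX$ of a \emph{finite} directed NPC complex $(X,o)$ --- the assertion that a regular event structure is strongly regular, which is Question~\ref{universal-cover-regular} (Question~\ref{universal-cover-regular-vh} in the $VH$-case). I expect this to be the first real obstacle: one would factor $\cD$ over its finitely many isomorphism types of futures and then repair the non-positive-curvature failures exhibited in Figure~\ref{fig-quotient-marche-pas}, using the bounded-directed-flat-grid hypothesis to rule out the obstructing configurations (three squares meeting at a vertex, pairwise sharing edges, not spanning a $3$-cube) responsible for them. Granting such a finite $X$, one is in the setting of Section~\ref{thiagu-special} except that $\tX$ need not be hyperbolic: by Lemma~\ref{hyp_median}, hyperbolicity of $\tX$ would require all isometric square grids to be bounded, while the hypothesis only bounds the \emph{directed flat} ones seen inside $\cD$. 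The aim is then to show $X$ is nevertheless virtually special via a relative analogue of Agol's Theorem~\ref{Agol}: the bound on directed flat square grids should force $\tX$ to be hyperbolic relative to a finite family of flats with virtually-$\mathbb{Z}^2$, quasiconvex stabilizers, and the special-cube-complex theory of Haglund--Wise together with the relatively hyperbolic refinement of Agol's theorem then yields virtual specialness. Once $X$ is virtually special, $\cD$ is cover-special and Theorem~\ref{virtuallyspecial} produces the regular trace labeling, settling Conjecture~\ref{conj-thi}; Conjecture~\ref{conj-badara} follows together with the $\natural$-clique bound above.

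Two points I expect to be hard. First, the passage to strong regularity (Question~\ref{universal-cover-regular}): without a finite complex there is nothing to feed into the special-cube-complex machinery, and the naive factorization destroys non-positive curvature. Second, closing the gap between ``the principal filter $\cD$ has bounded directed flat square grids'' and ``$\tX$ (equivalently $\pi_1(X)$) admits a usable relatively hyperbolic structure'': a flat plane of $\tX$ that does not lie in the future of the basepoint of $\cD$ is invisible to $\cD$, so relative hyperbolicity of $X$ cannot be read off the grid bound directly; one would need either to choose the finite complex $X$ more cleverly or to show that for a \emph{regular} domain such hidden flats cannot occur without producing unbounded directed flat square grids in $\cD$ itself. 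A safer intermediate target is to first establish Conjectures~\ref{conj-hypst} and~\ref{conj-hyperbolic}, the hyperbolic special cases, and only then reduce the general directed-flat-grid case to ``hyperbolic relative to flats''.
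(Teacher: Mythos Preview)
This statement is a \emph{conjecture}; the paper gives no proof and offers only one hint immediately after it, namely that a first step might be to restrict to event structures whose concurrency graph $\Gamma_{\|}$ contains no large induced $K_{n,n}$. So there is nothing in the paper to compare your argument against, and your proposal is best read as a research programme. As such, the overall architecture --- reduce to strong regularity (Question~\ref{universal-cover-regular}), then to virtual specialness via a relative version of Agol's theorem, then invoke Theorem~\ref{virtuallyspecial} --- is a natural extension of Section~\ref{thiagu-special}, and you correctly isolate the two hard steps.

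Two corrections. First, your handling of Conjecture~\ref{conj-badara} is off: bounded $\natural$-cliques is a \emph{hypothesis} of that conjecture, not something to be checked, so once you have a regular trace labeling there is nothing further to do (indeed, establishing Conjecture~\ref{conj-thi} for this class already yields the nontrivial direction of Conjecture~\ref{conj-badara}). The sketch you give for bounding $\natural$-cliques is in any case not right: a $\natural$-clique does not split into a pairwise-concurrent part and a remainder, since the three clauses defining $\natural$ mix freely between pairs; the argument of Proposition~\ref{cex-badara} works only because of the CSC structure, as Remark~\ref{rem-csc-badara} makes explicit.

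Second, the gap you identify between ``bounded directed flat square grids in the principal filter $\cD$'' and ``a usable (relatively) hyperbolic structure on $\tX$'' is not merely technical but is precisely the phenomenon recorded in Remark~\ref{rem-tore-bazar}: the universal cover of a $VH$-complex can contain non-directed planes that no principal filter sees. So even the intermediate Conjecture~\ref{conj-hypst} already runs into this, and your reliance on a relative Agol theorem presupposes a structure on $\tX$ that the hypothesis on $\cD$ does not obviously provide. Your plan to first settle Conjectures~\ref{conj-hypst} and~\ref{conj-hyperbolic} is sensible; note, though, that the paper's own suggested first step (the $K_{n,n}$-free case) is a rather different and more combinatorial direction than the relative-hyperbolicity route you outline.
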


A first step to solve this question could be to consider event
structures such that the graph $\Gamma_{\|}$ does not admit induced
complete bipartite subgraphs $K_{n,n}$ with arbitrarily large $n$.

\subsection{Confusion-free domains}
As we noticed already, Conjecture \ref{conj-thi}  was positively solved by Nielsen
and Thiagarajan \cite{NiThi} for conflict-free event structures.  A
possible way to generalize this result is to consider confusion-free
domains.

Conflict-free event structures  can be viewed as the event structures for which the
minimal-conflict graph $\Gamma_{\#_\mu}$ is edgeless, i.e.,
each event of $\mathcal E$ is a connected component of $\Gamma_{\#_\mu}$. (Notice that
conflict-free domains are not hyperbolic because they may contain ${\mathbb Z}^n$ for any $n$.)
Therefore, one way to extend the result of \cite{NiThi} is to consider more complex minimal-conflict
graphs $\Gamma_{\#_\mu}$. One possible such extension is to consider the event structures whose
minimal-conflict graphs $\Gamma_{\#_\mu}$ are disjoint unions of cliques. Such event structures can
be viewed as an extension of confusion-free event structures. An event structure $\mathcal E$ is
{\it confusion-free} \cite{NiPlWi} if the reflexive closure of minimal conflict is transitive and
$e\#_{\mu}e'$ implies $\downarrow\! e\setminus \{ e\}=\downarrow\! e'\setminus \{ e'\}$ (we use
the definition from \cite[Proposition 2.4]{VaVoWi}). From the first condition it follows that
for a confusion-free event structure the graph $\Gamma_{\#_\mu}$ is a disjoint union of cliques.
Confusion-free event structures correspond to deterministic concrete data structures \cite{KaPl}
and to confusion-free occurrence nets \cite{NiPlWi}.

\begin{question}
 Do Conjectures \ref{conj-thi} and \ref{conj-badara} hold
  for confusion-free event structures? More generally, do they hold
  for event structures whose minimal-conflict graph $\Gamma_{\#_\mu}$
  is a disjoint unions of cliques?
\end{question}



\subsection{Undecidability questions}
We think that the relationship between the existence of aperiodic tile
sets and the nonexistence of regular nice labelings of the associated
event structures may help to prove some undecidability results. We
conjecture that one cannot decide if a regular event structure
satisfies Thiagarajan's conjecture:

\begin{conjecture}\label{conj-undecidability-reg}
  There does not exist an algorithm that, given a strongly regular
  event domain $\cD$, can determine whether or not $\cD$ admits a
  regular nice labeling.
\end{conjecture}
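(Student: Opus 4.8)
The plan is to prove Conjecture~\ref{conj-undecidability-reg} by a computable many-one reduction from the domino problem for $4$-way deterministic tile sets, which Lukkarila~\cite{Lu} showed to be undecidable. Given such a tile set $T$, the construction $T\mapsto W(T)$ of Section~\ref{aperiodic} is effective, so the strongly regular domain $\cD(T):=({\tW(T)}_{\tv},\prec_{\tildo^*})$ can be produced algorithmically from $T$. The target equivalence is
$$\cD(T)\text{ admits a regular nice labeling}\iff T\text{ is not aperiodic},$$
where ``not aperiodic'' means that $T$ either fails to tile the plane or admits a doubly periodic tiling. Granting this equivalence, a decision procedure for the labeling property would decide aperiodicity of $T$. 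But aperiodicity is undecidable for $4$-way deterministic tile sets: if it were decidable, one could decide the domino problem as follows. First test whether $T$ is aperiodic; if so, output that $T$ tiles (an aperiodic set tiles by definition). Otherwise $T$ is not aperiodic, so exactly one of the mutually exclusive alternatives ``$T$ does not tile'' and ``$T$ admits a periodic tiling'' holds; running in parallel the two semi-decision procedures ``$T$ does not tile'' (detectable by a compactness/K\"onig argument on finite patches) and ``$T$ admits a periodic tiling'' (detectable by enumerating doubly periodic patterns), exactly one halts and decides tiling, contradicting~\cite{Lu}.

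The forward implication of the equivalence is essentially already in this paper. Inspecting the proof of Theorem~\ref{th-cex-pavages}, one sees that the hypothesis that $T$ tiles is used only to produce a directed plane through $\tv$, whereas aperiodicity enters solely at the concluding contradiction; the same argument therefore yields the sharper statement that, for any \emph{tiling} $T$, a regular nice labeling of $\cD(T)$ forces $T$ to admit a periodic tiling. Combined with Lemma~\ref{lem-Ttiles} (a non-tiling $T$ has no directed planes in $\tbX(T)$, hence no obstruction of this kind arises), this gives the forward direction: if $\cD(T)$ admits a regular nice labeling, then $T$ is not aperiodic.

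The backward implication is the crux, and is where I expect the main difficulty. The natural strategy is to show that a non-aperiodic $T$ makes $\cD(T)$ cover-special and then invoke Theorem~\ref{virtuallyspecial}. When $T$ does not tile, Lemma~\ref{lem-Ttiles} makes $\cD(T)$ a hyperbolic domain, suggesting a route through Theorem~\ref{hyperbolic} and Agol's Theorem~\ref{Agol}; the obstruction, flagged in Remark~\ref{rem-tore-bazar}, is that $W(T)$ may carry \emph{undirected} periodic planes, so $\tW(T)$ need not be hyperbolic even though the principal filter $\cD(T)$ is, and Agol's theorem does not apply to $W(T)$. Here the plan is either to modify the encoding $W(\cdot)$ so as to suppress undirected planes—yielding a finite directed NPC complex with the same principal filter $\cD(T)$ but a genuinely hyperbolic universal cover when $T$ does not tile—or to settle the relevant instance of Conjectures~\ref{conj-hyperbolic} and~\ref{conj-hypst} (hyperbolic strongly regular domains admit regular nice labelings). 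When $T$ admits a periodic tiling, $\tW(T)$ contains \emph{directed} flat planes and is never hyperbolic, so Agol's theorem is unavailable even in principle; one must instead build a regular nice labeling directly from the period lattice, the difficulty being that $\cD(T)$ may simultaneously contain aperiodic directed planes, so a single periodic tiling does not obviously propagate to a consistent finite-index labeling of the whole domain. Thus the decisive obstacle is the backward direction in the \emph{non-hyperbolic} regime: the only source of virtual specialness (hence of regular trace labelings) available here is Theorem~\ref{Agol}, which requires a hyperbolic universal cover, while precisely the non-aperiodic tiling sets produce flat planes; overcoming this would require a genuinely new input valid in the presence of flat planes, tying Conjecture~\ref{conj-undecidability-reg} to the open Conjectures~\ref{conj-hyperbolic} and~\ref{conj-hyperbolic-bis}.
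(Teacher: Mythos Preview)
This statement is a \emph{conjecture} in the paper, not a theorem; the paper does not prove it. Your proposal is likewise not a proof but a proof plan that correctly isolates the missing ingredient. So the honest comparison is between two incomplete arguments, both stuck at the same place.

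Your reduction is slightly more roundabout than necessary. You aim for the equivalence ``$\cD(T)$ admits a regular nice labeling $\iff$ $T$ is not aperiodic'' for \emph{arbitrary} $4$-way deterministic $T$, and then argue separately that aperiodicity is undecidable. The paper's intended route (sketched after the conjecture) uses Lukkarila's reduction directly: for each Turing machine $\cM$ one gets a $4$-way deterministic $T_{\cM}$ that is \emph{either} aperiodic (if $\cM$ loops) \emph{or} fails to tile (if $\cM$ halts); the periodic-but-tiling case simply never occurs for these $T_{\cM}$. This cleanly eliminates your hardest subcase (``$T$ admits a periodic tiling but may also admit aperiodic planes''), which you yourself flag as the decisive obstacle in the non-hyperbolic regime.

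With that simplification, both your plan and the paper's reduce to the same open question: when $T$ does not tile, show that $\cD(T)$ admits a regular nice labeling. You correctly pinpoint the obstruction from Remark~\ref{rem-tore-bazar} (undirected planes may prevent $\tW(T)$ from being hyperbolic even when every principal filter is), so Theorem~\ref{Agol} does not apply to $W(T)$ and Theorem~\ref{hyperbolic} is unavailable. The paper records exactly this gap, noting that Conjecture~\ref{conj-hypst} (or a positive answer to Question~\ref{q-notiling-special}) would close it. Your two suggested fixes---modifying the encoding to kill undirected planes, or proving the relevant hyperbolic case of Thiagarajan's conjecture---are reasonable directions, but neither is carried out, so the proposal remains a plan rather than a proof.
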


The intuition behind is that one can use Lukkarilla's
construction~\cite{Lu} to prove this conjecture.  As in the proof of
undecidability of the classical tiling
problem~\cite{Berger,Robinson_tiling}, the undecidability proof of
Lukkarila is based on a reduction from the {\it Turing machine halting
  problem}.  More precisely, for any Turing machine $\cM$, Lukkarila
constructs a 4-way deterministic tile set $T_{\cM}$ such that either
$T_{\cM}$ is an aperiodic tile set (this corresponds to the case when
the Turing machine $\cM$ does not halt), or $T_{\cM}$ does not tile
the plane (this corresponds to the case when the Turing machine $\cM$
halts). In the first case, by Theorem~\ref{th-cex-pavages}, the domain
$(\tW(T_{\cM})_{\tv},\prec_{\tildo^*})$ does not admit a regular nice
labeling. In the second case, by Lemma~\ref{lem-Ttiles},
$(\tW(T_{\cM})_{\tv},\prec_{\tildo^*})$ is a strongly regular domain that is
hyperbolic. Consequently, if Conjecture~\ref{conj-hypst} was true,
$(\tW(T_{\cM})_{\tv},\prec_{\tildo^*})$ would admit a
regular nice labeling. This would prove
Conjecture~\ref{conj-undecidability-reg}.

Another possible way to prove Conjecture~\ref{conj-undecidability-reg}
 would be to anwser the following question in a positive way and use
Theorem~\ref{virtuallyspecial}.

\begin{question}\label{q-notiling-special}
  Given a $4$-way deterministic tile set $T$ such that there is no
  valid tiling with the tiles of $T$, is it true that the $VH$-complex
  $W(T)$ is virtually special?
\end{question}

Note that if there was a positive answer to this question, this would
answer a question of Agol~\cite[Question 3]{Agol_ICM} and confirm the
following conjecture of Bridson and Wilton~\cite{BrWi}:

\begin{conjecture}[\!\!{\cite[Conjecture  1.2]{BrWi}}]
  There does not exist an algorithm that, given a finite NPC square
  complex $Y$, can determine whether or not $Y$ is virtually special.
\end{conjecture}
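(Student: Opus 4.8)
The plan is to turn the discussion preceding Question~\ref{q-notiling-special} into a genuine undecidability reduction. Starting from Lukkarila's construction~\cite{Lu}, for every Turing machine $\cM$ one produces, computably in a description of $\cM$, a $4$-way deterministic tile set $T_{\cM}$ such that $T_{\cM}$ is aperiodic when $\cM$ does not halt and $T_{\cM}$ admits no valid tiling when $\cM$ halts; applying the construction of Section~\ref{aperiodic} we obtain, again computably, the finite NPC square complex $W(T_{\cM})$. By Theorem~\ref{th-cex-pavages}, if $\cM$ does not halt then $W(T_{\cM})$ is \emph{not} virtually special. Hence it remains to establish the converse in the halting case: if a $4$-way deterministic tile set $T$ has no valid tiling, then $W(T)$ \emph{is} virtually special — this is exactly Question~\ref{q-notiling-special}. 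Granting it, the computable map $\cM\mapsto W(T_{\cM})$ satisfies ``$W(T_{\cM})$ is virtually special $\iff$ $\cM$ halts'', so an algorithm deciding virtual specialness of finite NPC square complexes would decide the halting problem; this is the desired contradiction, and it simultaneously answers Agol's Question~3 of~\cite{Agol_ICM}.

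For Question~\ref{q-notiling-special} I would go through Agol's Theorem~\ref{Agol}: a finite NPC complex whose universal cover is hyperbolic is virtually special, so it suffices to show that \emph{absence of a valid tiling implies $\tW(T)$ is hyperbolic}. Since $\tW(T)$ is obtained from the first barycentric subdivision of $\tbX(T)$ by attaching trees at the $1$-vertices, $\tW(T)$ is hyperbolic iff $\tbX(T)$ is, iff — by Lemma~\ref{hyp_median} — $\tbX(T)$ contains no isometrically embedded square grids of unbounded size. Half of this is already available: by Lemma~\ref{lem-Ttiles}, if $T$ does not tile then every principal filter $(\tW(T)_{\tv},\prec_{\tildo^*})$ is hyperbolic, because a large grid inside such a filter contains a large \emph{directed} grid, hence (by the compactness argument of that lemma) a valid tiling of arbitrarily large squares, a contradiction. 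The task is to remove the restriction to principal filters.

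The main obstacle is precisely the phenomenon recorded in Remark~\ref{rem-tore-bazar}: $\tbX(T)$ may contain flat planes that are \emph{not} directed with respect to $\tildo$, and these need not correspond to valid tilings in the orientation-respecting sense, yet they would make $\tW(T)$ (equivalently $\pi_1(W(T))$) non-hyperbolic and put the halting instance outside the reach of Theorem~\ref{Agol}. A large square grid placed ``diagonally'' relative to the edge orientation lies in no principal filter, so the argument of Lemma~\ref{lem-Ttiles} does not reach it. I see two routes around this. The first is to massage Lukkarila's tile set: replace $T_{\cM}$ by a $4$-way deterministic $T'_{\cM}$ that tiles the plane iff $T_{\cM}$ does, but whose associated $VH$-complex has the stronger property that every flat plane of its universal cover is directed — roughly, by augmenting the tiles with a one-directional ``clock'' along each of the two axes that makes backward readings locally incompatible — so that Lemma~\ref{lem-Ttiles} yields hyperbolicity of $\tbX(T'_{\cM})$ outright whenever $T'_{\cM}$ does not tile. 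The second is to prove virtual specialness of $W(T)$ in the no-tiling case by a direct structural argument on the product-of-trees-with-tips geometry of $\tW(T)$, bypassing hyperbolicity; this would require genuinely new input, since — as Agol's Question~3 highlights — there is at present no usable criterion for virtual specialness of non-hyperbolic NPC square complexes, and it is this gap, rather than the bookkeeping of the reduction, that I expect to be the crux of the whole argument.
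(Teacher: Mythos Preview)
The paper does not prove this statement: it is recorded as an \emph{open conjecture} of Bridson and Wilton, and the paper's surrounding text only explains that a positive answer to Question~\ref{q-notiling-special} would imply it via exactly the reduction you describe. So there is no ``paper's own proof'' to compare against; your outline reproduces, essentially verbatim, the conditional argument the paper already gives in the paragraph after the conjecture.

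Your proposal is honest about where the gap lies, and it is the right gap: the obstruction in Remark~\ref{rem-tore-bazar} --- non-directed flat planes in $\tbX(T)$ even when $T$ does not tile --- blocks the appeal to Agol's theorem, and this is precisely why the paper phrases Question~\ref{q-notiling-special} as a question rather than a lemma. Your route~(a), augmenting Lukkarila's tiles with ``one-directional clocks'' so that every flat plane in the cover is directed, is a plausible idea but you give no construction, and it is not at all clear that such a gadget can be made $4$-way deterministic while preserving the aperiodic/no-tiling dichotomy; this would require real work at the level of \cite{Lu}. Your route~(b) you explicitly concede needs ``genuinely new input''. So as it stands the proposal does not advance beyond the point the paper itself reaches: the reduction is complete modulo Question~\ref{q-notiling-special}, and that question remains open.
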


Indeed, in Lukkarila's construction, if the Turing machine $\cM$ does
not halt, then by Theorem~\ref{th-cex-pavages} $W(T_{\cM})$ is not
virtually special. On the other hand, if the Turing machine $\cM$
halts, then if the answer to Question~\ref{q-notiling-special} was
positive, $W(T_{\cM})$ would be virtually special.

\subsection*{Acknowledgements.}
 We are grateful to P.S. Thiagarajan for some email exchanges on
 Conjecture 1 and paper \cite{NiThi} and to our colleague R. Morin for
 several useful discussions.  The work on the results not presented in
 the preliminary version~\cite{CC-ICALP17} (Sections
 \ref{thiagu-special} and \ref{aperiodic}) was supported by ANR
 project DISTANCIA (ANR-17-CE40-0015).

\bibliographystyle{plainurl}
\bibliography{bib-thiagarajan}

\end{document}